\RequirePackage{xcolor}
\documentclass[journal,twoside,web]{ieeecolor}

\usepackage[english]{babel}
\usepackage{booktabs}
\usepackage{siunitx}
\usepackage{xcolor}
\usepackage{nicefrac}
\usepackage{comment}
\usepackage{tikz}
\usepackage{soul}
\usepackage{generic}
\usepackage{cite}
\usepackage{amsmath,amssymb,amsfonts}
\usepackage{algorithmic}
\usepackage{graphicx}
\usepackage{algorithm,algorithmic}
\usepackage{hyperref}
\hypersetup{hidelinks=true}
\usepackage{textcomp}
\usepackage{float}
\usetikzlibrary{arrows.meta,positioning,calc}

\usepackage{balance}

\newtheorem{theorem}{Theorem}
\newtheorem{corollary}{Corollary}
\newtheorem{lemma}{Lemma}
\newtheorem{proposition}{Proposition}

\newcommand{\circledlabel}[2]{%
  \tikz[baseline=(n.base)]\node[circnum] (n) {#1};\ #2%
}

\title{Lotka-Sharpe Neural Operators \\ for Control of Population  PDEs}
\author{Miroslav Krsti\'c, Iasson Karafyllis, Luke Bhan, Carina Veil
\thanks{Miroslav Krsti\'c and Luke Bhan are with the University of California, San Diego, USA, \{mkrstic, lbhan\}@ucsd.edu. Iasson Karafyllis is with the National Technical University of Athens, Greece, iasonkar@central.ntua.gr. Carina Veil is with KTH Royal Institute Technology, Stockholm, Sweden, veil@kth.se. 
}
}

\begin{document}
\maketitle
\begin{abstract}
Age-structured predator-prey integro-partial differential equations provide models of  interacting populations in ecology, epidemiology, and biotechnology. A key challenge in feedback design for these systems is the scalar $\zeta$, defined implicitly by the Lotka-Sharpe nonlinear integral condition, as a  mapping from fertility and mortality rates to $\zeta$. To solve this challenge with operator learning, we first prove that the Lotka-Sharpe operator is Lipschitz continuous, guaranteeing the existence of arbitrarily accurate neural operator approximations over a compact set of fertility and mortality functions. We then show that the resulting approximate feedback law preserves semi-global practical asymptotic stability under propagation of the operator approximation error through various other nonlinear operators, all the way through to the control input. In the numerical results, not only do we learn ``once-and-for-all'' the canonical Lotka-Sharpe (LS) operator, and thus make it available for future uses in control of other age-structured population interconnections, but we demonstrate the online usage of the neural LS operator under estimation of the fertility and mortality functions. 

\end{abstract}

\section{Introduction}

Understanding the dynamics of interacting populations is fundamental to predicting the behavior of ecosystems, epidemics, and bioreactors. Age-structured population models, formulated as partial differential equations (PDEs), have emerged as the cornerstone framework for describing such dynamics \cite{mckendrick1925Applications}. While single-species dynamics have received considerable attention from a control-theoretic perspective \cite{inaba2017age, iannelli2017basic, dochain2013automatic, schmidt2018yield, kurth2021tracking, kurth2023control, karafyllis2017stability, haacker2024stabilization, karafyllis2025agestructuredchemostatsubstratedynamics, BARGO2026104629}, the multi-species setting — where predator and prey populations are coupled through nonlinear feedback — has only recently been tackled with a rigorous feedback design \cite{veil2025stabilization, veil2025stabilizationagestructuredcompetingpopulations, veil2025stabilizationExtendedPreprint}. 
These results establish global stabilization laws capable of driving both populations to prescribed set-points. However, they share a common vulnerability: each controller gain depends critically on a scalar $\zeta$, defined implicitly through the Lotka--Sharpe (LS) condition \cite{sharpe1911problem}. In general, this scalar cannot be computed in closed form and must therefore be approximated numerically.

Biologically, $\zeta$ is critical as it encodes the long-term fate of the population: whether it grows, declines, or reaches equilibrium. Furthermore, from a mathematical perspective, it is equally rich, representing an infinite dimensional mapping from the birth rate $k(a)$ and mortality rate $\mu(a)$ into a real-valued scalar $\zeta$ satisfying:
\begin{equation}
\label{eq:Lotkasharpe}
    \int_0^A k(a) e^{-\int_0^a(\mu(s) + \zeta)ds} da = 1\,, 
\end{equation}
 Generally, this infinite dimensional map has no analytical solution and hence every change in $k$ or $\mu$ demands a fresh computation. In age dependent control designs, the Lotka-Sharpe parameters $\zeta_1$ and $\zeta_2$ of both the predator and prey species appear in the controller gains. Consequently, any practical implementation will introduce approximation errors with no a priori stability certificate. This paper is the first step toward understanding approximations of this operator and certifying stability of feedback laws when faced with such approximations.

To begin studying approximations of the Lotka-Sharpe operator, we first establish the mapping $(k, \mu) \mapsto \zeta$ is Lipschitz continuous. This is the key technical challenge as $\zeta$ is only given implicitly and hence its continuity requires a monotonicity argument tailored to the biological constraints of the domain. Establishing this continuity paves the way for both neural operator \cite{chen1995Universal, lu2021Learning, li2021Fourier} and numerical approximations. In this work, we focus on the operator learning paradigm as it has proven transformative for replacing expensive implicit computations in feedback laws, with deployments spanning PDE backstepping \cite{10374221, krstic2024Neural, wang2024Backstepping}, adaptive control \cite{lamarque2025Adaptive, lamarque2025Gain, BHAN2025105968}, delayed systems \cite{wang2024Backstepping}, and applications in biological Chemostats \cite{bhan2025stabilizationnonlinearsystemsunknown} as well as traffic flows \cite{pmlr-v242-zhang24c, LYU202513}. Building on the Lipschitz continuity of the Lotka--Sharpe operator, we establish a universal approximation theorem over compact classes of birth and mortality profiles.

Moreover, we do not stop at just the approximation. For the predator--prey model, we study the robustness of the feedback law when the exact Lotka--Sharpe parameters are replaced by approximations. In particular, we prove semi-global practical asymptotic stability of the resulting closed-loop system. We emphasize that this stability result is \textbf{not} limited to neural operators, but is a robustness result that captures any uniform approximation. In this sense, the paper resolves a foundational vulnerability shared by every existing age-structured predator-prey controller: for the first time, one can implement the feedback law without exact knowledge of $\zeta$ and still have a rigorous guarantee that the populations will behave.

The paper makes the following specific contributions, through new ideas, techniques, and results:
\begin{enumerate}
\item \emph{Formulation of the Lotka–Sharpe mapping as an operator in feedback control.} The dependence of stabilizing predator–prey controllers on the implicitly defined scalar $\zeta$ is recast as an operator mapping from functional data (birth and mortality profiles) to a control-relevant parameter, thereby exposing a heretofore hidden concept in age-structured feedback design.
\item \emph{Establishment of Lipschitz continuity for an implicitly defined nonlinear operator on a biologically constrained domain.} A nonstandard analysis is developed to prove Lipschitz continuity of the Lotka–Sharpe operator despite its implicit definition, leveraging monotonicity properties induced by ordered bounds on fertility and mortality functions.
\item \emph{Derivation of a universal approximation framework for the Lotka–Sharpe operator.} By combining the Lipschitz property with compactness of admissible function classes, it is shown that the Lotka–Sharpe operator admits uniformly accurate neural operator approximations, placing it within the operator-learning paradigm despite its implicit structure.
\item \emph{Explicit characterization of how approximation errors propagate through the control architecture.} It is clearly exhibited how approximation errors in $\zeta$ do not remain localized but enter all downstream operator evaluations ($\mathcal{G}_\kappa$, $\mathcal{G}_\gamma$, $\mathcal{G}_\pi$), yielding a structured perturbation of the control law that is reduced to two scalar error channels.
\item \emph{Development of a robustness analysis for controllers with implicitly parameterized gains under approximation.} A new analytical framework is constructed to handle perturbations that simultaneously affect multiple gain terms through a shared implicit parameter. The technical approach is devised to yield guarantees in the presence of a positivity constraint on the control input and under Lyapunov derivatives that,  while negative definite, are non-proper, because, in the context of population dynamics,  extinction is a barrier near the equilibrium.
\end{enumerate}

The paper is organized as follows.  Section \ref{sec-pop} presents the age-structured predator–prey model and the Lotka–Sharpe condition. Section \ref{sec-oper} introduces the four operators and highlights the implicit nature of the Lotka–Sharpe operator. Section \ref{sec-controllers} develops the nominal and approximate control laws and identifies the structure of the induced perturbation. Section \ref{sec-approx} establishes neural approximability of the Lotka–Sharpe operator. Section \ref{sec-stabilization} provides the stability analysis under approximation errors, including the main robustness result. Section \ref{sec-proofs} contains the proofs of the main theorems. Section \ref{sec-sims} presents numerical results and Section \ref{sec-adaptive} presents a illustration of a adaptive design when the fertility and mortality functions are unknown. 

\color{blue}
A preliminary version of this paper has been submitted to the Conference on Decision and Control 2026 \cite{cdc2026version}. This journal version, additionally, contains all the proofs (Section \ref{sec-proofs} and Appendices \ref{app-functions+constants}, \ref{app-Lip-other}, \ref{app-bound}) as well as an additional illustrative adaptive design in Section \ref{sec-adaptive}.
\color{black}

\textbf{Notation:}
Denote the sets $\mathbb{R}_{>0}$ and $\mathbb{R}_{\geq 0}$ as the positive real numbers excluding and including zero respectively. Let $C^k(S_1;S_2)$ represent the class of $k \geq 1$ continuously differentiable functions mapping $S_1$ to $S_2$ and $C^0(S_1; S_2)$ be the class of continuous functions mapping $S_1$ to $S_2$. Let $A> 0$ be a real-valued positive scalar. For a function $f: [0, A] \to \mathbb{R}_{\geq 0}$, we define $\|f(\cdot)\|_\infty$ to be the supremum norm $\sup_{x \in [0, A]}|f(x)|$. For a distributed function $f(a, t)$ with $(a, t) \in [0, A] \times \mathbb{R}_{\geq 0}$, we use $\dot{f} = \frac{\partial f}{\partial t}$ for the time derivative and analogously $f' = \frac{\partial f}{\partial x}$ for the space derivative.  

\section{Age-structured population model}
\label{sec-pop}

The dynamics of one age-structured species in a chemostat with population density $x(a,t)$, where the organisms compete for a common food source, is governed by
\begin{align}
    x'(a,t) + \dot x(a,t) =\;& - x(a,t) \Big[ \mu(a)\nonumber \\ & + \int_0^A p(\alpha)x(\alpha,t) d\alpha 
    +u(t) \Big] \label{eq:one-pop-pde}
\end{align}
with mortality function $\mu(a)$, competition kernel $p(a)$, dilution $u(t)$, and derivatives with respect to time and $x'$ with respect to age \cite{mckendrick1925Applications}.
Essentially, the population density is reduced by mortality, competition, and dilution.

The \textbf{Lotka-Sharpe condition} \eqref{eq:Lotkasharpe}
 is an integral equation that defines the intrinsic growth rate $\zeta$. 
It ensures that the mortality-discounted age-specific fertility contributions equal one, which characterizes a stable population growth rate and age distribution.
It was proven in \cite{sharpe1911problem}  that  \eqref{eq:Lotkasharpe}
has a unique positive real-valued solution $\zeta(k,\mu)$ for any nonnegative measurable birth rate function $k$ that is not identically zero and for any nonnegative measurable mortality rate function $\mu$, such that $\int_0^A k(a)e^{-\int_0^a \mu(s)ds}da > 1$.

Extending \eqref{eq:one-pop-pde} to a predator-prey setup results in the following age-structured model considered in \cite{veil2025stabilization}, with initial conditions (IC) and boundary conditions (BC),
\begin{subequations}
	\begin{alignat}{3}
	&\frac{\partial x_1}{\partial t}(a,t) + \frac{\partial x_1}{\partial a}(a,t) &&=  
	- x_1(a,t)  \Bigg[ \mu_1(a) + u(t)  
  \nonumber \\ & && \qquad  + \int_0^A g_1(\alpha) x_2(\alpha,t) d \alpha 
    \Bigg]  \label{eq:ipde1}\\ 
 	&\frac{\partial x_2}{\partial t}(a,t) + \frac{\partial x_2}{\partial a}(a,t) &&= 
	- x_2(a,t)  \Bigg[ \mu_2(a) + u(t) 
      \nonumber \\ & && \qquad + \frac{1}{\int_0^A g_2(\alpha) x_1(\alpha,t) d \alpha} 
    \Bigg] 
     \label{eq:ipde2}\\ 
	&\text{IC}: \qquad\quad \ \ x_i(a, 0) &&=\  x_{i,0}(a), \label{eq:ic} \\
	&\text{BC}: \qquad\quad \, \ x_i(0,t) &&=\ \int_0^A k_i(a) x_i(a,t) d a  \label{eq:bc},
 \end{alignat} \label{eq:pred-prey-model}%
\end{subequations}
where, for $i,j \in \{1,2\}$, $i\neq j$, $x_i(a,t)>0$ is the population density, i.~e. the amount of organisms of a certain age $a \in [0,A]$ of the two interacting populations $x_1(a,t)$ and $x_2(a,t)$ with $(a,t) \in [0,A] \times \mathbb{R}_{> 0}$, their derivatives $\dot{x}_i$ with respect to time and $x'_i$ with respect to age, and the constant maximum age $A>0$. 
The interaction kernels $g_i(a):[0,A]\rightarrow\mathbb{R}_{\geq 0}$, the mortality rates $\mu_i(a):[0,A]\rightarrow\mathbb{R}_{\geq 0}$, and the birth rates $k_i(a):[0,A]\rightarrow\mathbb{R}_{\geq 0}$ are continuous functions with $\int_0^A \mu_i(a)d a > 0$, $\int_0^A g_i(a)d a > 0$, $\int_0^A k_i(a)d a >0$. The continuous dilution rate $u(t):\mathbb{R}_{\geq 0}\rightarrow \mathbb{R}_{\geq 0}$, is an input affecting both species.

\begin{proposition}[Equilibrium \cite{veil2025stabilization}]
\label{prop:steady}
The equilibrium  state $(x_1^*(a),x_2^*(a))$ of the population system (\ref{eq:pred-prey-model}), along with the equilibrium dilution input $u^*$, is given by
\begin{subequations}
\begin{align}
x_i^* (a) &= x_i^*(0)\, n_i(a)\,, \qquad n_i(a):= {e^{-{\int_0^a (\zeta_i + \mu_i(s)) d s}}},\label{eq:ss_profiles}\\
u^* &= \zeta_1 -  \lambda_2 = \zeta_2 - \frac{1}{\lambda_1}  \in \left(0, \min\left\{\zeta_1 , \zeta_2\right\}\right)\,,
\label{eq:u_star_constraint}
\end{align}
\end{subequations}
with unique parameters $\zeta_i(k_i,\mu_i)$ resulting from the Lotka-Sharpe condition \cite{sharpe1911problem}, 
\begin{subequations}
    \label{eq:def_lambda}
\begin{eqnarray}
\lambda_1&:=& \int_0^A g_2(a) x_1^*(a) d a = x_1^*(0)\gamma_2\,,   \\   
\gamma_2 &:=& \int_0^A g_2(a)n_1(a) da >0 \\
\lambda_2&:=& \int_0^A g_1(a) x_2^*(a) d a = x_2^*(0) \gamma_1 \\ 
\gamma_1 &:=& \int_0^A g_1(a)n_2(a) da>0
\end{eqnarray}
\end{subequations}
and the positive  concentrations of the newborns 
\begin{subequations}
    \begin{align}
    x_1^*(0) =&\;  \frac{1}{\left(\zeta_2-u^*\right) 
    \gamma_2} >0, \\
    x_2^*(0) =&\; \frac{\zeta_1-u^*}{
    \gamma_1}  = \frac{1}{\gamma_1} \left[\zeta_1 - \zeta_2
        + \frac{1}{x_1^*(0)\gamma_2} \right]> 0\,. \label{eq-x2star-zero}
    \end{align}
    \label{eq:ic_constraint}
\end{subequations}
Moreover, for $u^*$ to be positive, the prey birth concentration must be commanded to be large enough: 
\begin{align}
    x_1^*(0) > \frac{1}{\zeta_2
    \gamma_2 }.
\end{align}
\end{proposition}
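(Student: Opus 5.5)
I would prove the proposition by direct substitution: set all time derivatives in \eqref{eq:pred-prey-model} to zero and the input to a constant $u^*$, and solve the resulting system of ordinary differential equations in age together with the nonlocal boundary conditions. At equilibrium, define the constant scalars $\lambda_2 := \int_0^A g_1 x_2^* d\alpha$ and $\lambda_1 := \int_0^A g_2 x_1^* d\alpha$. Equations \eqref{eq:ipde1}--\eqref{eq:ipde2} then reduce to the linear age-ODEs
\begin{align*}
x_1^{*\prime}(a) &= -x_1^*(a)\left[\mu_1(a)+u^*+\lambda_2\right],\\
x_2^{*\prime}(a) &= -x_2^*(a)\left[\mu_2(a)+u^*+1/\lambda_1\right].
\end{align*}
Integrating gives $x_i^*(a)=x_i^*(0)e^{-\int_0^a(\mu_i(s)+\sigma_i)ds}$, with $\sigma_1=u^*+\lambda_2$ and $\sigma_2=u^*+1/\lambda_1$.

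Next I would impose the boundary condition \eqref{eq:bc}. For a nontrivial equilibrium ($x_i^*(0)>0$), dividing by $x_i^*(0)$ gives $\int_0^A k_i(a)e^{-\int_0^a(\mu_i+\sigma_i)ds}da=1$. This is exactly the Lotka--Sharpe condition \eqref{eq:Lotkasharpe}. By the uniqueness result of \cite{sharpe1911problem} (the left side is strictly decreasing in $\sigma$), it forces $\sigma_i=\zeta_i$. This yields \eqref{eq:ss_profiles} and the identities $u^*=\zeta_1-\lambda_2=\zeta_2-1/\lambda_1$.

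Substituting the profiles into the definitions of $\lambda_i$ gives $\lambda_1=x_1^*(0)\gamma_2$ and $\lambda_2=x_2^*(0)\gamma_1$. The integrals $\gamma_i$ are positive because $g_i\ge0$ with positive integral and $n_j>0$. Solving $\zeta_2-u^*=1/(x_1^*(0)\gamma_2)$ and $\zeta_1-u^*=x_2^*(0)\gamma_1$ for the newborn concentrations gives \eqref{eq:ic_constraint}. Eliminating $u^*$ yields the second expression in \eqref{eq-x2star-zero}.

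For the sign constraints, positivity of $x_i^*(0)$ gives $\lambda_1,\lambda_2>0$, hence $u^*<\zeta_1$ and $u^*<\zeta_2$. Positivity of $u^*=\zeta_2-1/(x_1^*(0)\gamma_2)$ is equivalent to $x_1^*(0)>1/(\zeta_2\gamma_2)$. The main subtlety is not computational but conceptual. The equilibrium is parametrized by the commanded value $x_1^*(0)$, which must be chosen so that both $u^*>0$ and $x_2^*(0)>0$ hold. The latter requires $\zeta_1-\zeta_2+1/(x_1^*(0)\gamma_2)>0$. So I would state the admissible range as the intersection of these two conditions and note that the interval $u^*\in(0,\min\{\zeta_1,\zeta_2\})$ encodes exactly that.
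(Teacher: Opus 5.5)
Your proposal is correct and is the standard derivation; this paper gives no proof of its own and only cites \cite{veil2025stabilization}, where the equilibrium is obtained the same way: freeze the nonlocal terms into constants $\lambda_1,\lambda_2$, integrate the stationary age-ODEs, reduce the boundary condition to the Lotka--Sharpe equation to get $u^*+\lambda_2=\zeta_1$ and $u^*+1/\lambda_1=\zeta_2$, then solve for $x_i^*(0)$. Your reading of the sign conditions is also right ($u^*<\zeta_2$ is automatic, $u^*<\zeta_1$ is equivalent to $x_2^*(0)>0$, and $u^*>0$ is a design requirement on the commanded $x_1^*(0)$), with one small caveat: the equivalence $u^*>0\iff x_1^*(0)>1/(\zeta_2\gamma_2)$ tacitly uses $\zeta_2>0$.
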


Proposition \ref{prop:steady} indicates that the equilibrium is explicitly characterized by the Lotka--Sharpe quantities $\zeta_i$, the dilution setpoint $u^\ast$, and the newborn concentrations $x_i^\ast(0)$. To achieve a target equilibrium, we will invoke the feedback law designed in \cite{veil2025stabilization}. However, before doing so, we first define four key operators necessary for implementing the feedback law.

\begin{figure}[t]
\centering
\resizebox{0.5\textwidth}{!}{%
\begin{tikzpicture}[
    x=1pt,y=1pt,
    >=Latex,
    line/.style={draw=black!55, line width=1.4pt},
    thinline/.style={draw=black!45, line width=0.9pt, dash pattern=on 3pt off 2pt},
    box/.style={
        rounded corners=8pt,
        draw=black!45,
        line width=0.7pt,
        fill=black!5
    },
    greenbox/.style={
        rounded corners=8pt,
        draw=blue!65!black,
        line width=0.8pt,
        fill=blue!10
    },
    purplebox/.style={
        rounded corners=8pt,
        draw=orange!85!black,
        line width=0.8pt,
        fill=orange!12
    },
    label/.style={font=\Large, text=black!75},
    title/.style={font=\LARGE, text=black!100},
    bigtitle/.style={font=\LARGE\bfseries, text=black!88},
    subtitle/.style={font=\Large, text=black!70},
    note/.style={font=\Large, text=black!82},
    gnote/.style={font=\Large, text=blue!65!black},
    pnote/.style={font=\Large, text=orange!85!black}
]

\def\boxh{46}
\def\ysep{30}
\def\yTopA{-126}

\def\xAL{170} \def\xAR{330}
\def\xBL{70}  \def\xBR{275}
\def\xCL{70}  \def\xCR{275}
\def\xDL{70}  \def\xDR{275}

\pgfmathsetmacro{\xCtrA}{(\xAL+\xAR)/2}
\pgfmathsetmacro{\xCtrB}{(\xBL+\xBR)/2}
\pgfmathsetmacro{\xCtrC}{(\xCL+\xCR)/2}
\pgfmathsetmacro{\xCtrD}{(\xDL+\xDR)/2}

\pgfmathsetmacro{\yBotA}{\yTopA-\boxh}
\pgfmathsetmacro{\yTopB}{\yBotA-\ysep}
\pgfmathsetmacro{\yBotB}{\yTopB-\boxh}
\pgfmathsetmacro{\yTopC}{\yBotB-\ysep}
\pgfmathsetmacro{\yBotC}{\yTopC-\boxh}
\pgfmathsetmacro{\yTopD}{\yBotC-\ysep}
\pgfmathsetmacro{\yBotD}{\yTopD-\boxh}

\pgfmathsetmacro{\yCtrA}{(\yTopA+\yBotA)/2}
\pgfmathsetmacro{\yCtrB}{(\yTopB+\yBotB)/2}
\pgfmathsetmacro{\yCtrC}{(\yTopC+\yBotC)/2}
\pgfmathsetmacro{\yCtrD}{(\yTopD+\yBotD)/2}

\pgfmathsetmacro{\yAnnA}{\yTopA + \ysep + 6}
\pgfmathsetmacro{\yAnnB}{\yCtrA}
\pgfmathsetmacro{\yAnnC}{\yCtrB}
\pgfmathsetmacro{\yAnnD}{\yCtrC}

\pgfmathsetmacro{\yOut}{\yBotD-22}
\pgfmathsetmacro{\yTitle}{\yOut-28}
\pgfmathsetmacro{\yFrameBot}{\yTitle+24}

\draw[draw=black!20, line width=0.8pt, dash pattern=on 6pt off 3pt, rounded corners=14pt]
  (50,-20) rectangle (520,\yFrameBot-18);

\draw[line,->] (100,-50) -- (100,-118);
\node[label] at (100,-44) {$k(a)$};

\draw[line,->] (250,-50) -- (250,\yTopA+6);
\node[label] at (250,-44) {$\mu(a)$};

\node[note, anchor=west, align=left, text width=100pt] at (130,-70)
  {extend by $0$ \\ for $a > A$};

\draw[box, fill=black!7] (\xAL,\yTopA) rectangle (\xAR,\yBotA);
\node[title] at (\xCtrA,\yCtrA+10) {Survival $\Pi(a)$};
\node[subtitle] at (\xCtrA,\yCtrA-10) {$e^{-\int_0^a \mu(s)\,ds}$};

\draw[line] (100,-118) -- (100,\yTopB+6);
\draw[line,->] (250,\yBotA) -- (250,\yTopB+2);
\node[note, anchor=west] at (254,{(\yBotA+\yTopB)/2}) {$\Pi(a)$};

\draw[box, fill=black!7] (\xBL,\yTopB) rectangle (\xBR,\yBotB);
\node[title] at (\xCtrB,\yCtrB+10) {Net maternity $k\!\cdot\!\Pi$};
\node[subtitle] at (\xCtrB,\yCtrB-10) {$k(a)\cdot \Pi(a)$};

\draw[line,->] (100,\yTopB+6) -- (100,\yTopB);
\draw[line,->] (250,\yTopB+2) -- (250,\yTopB);

\draw[line,->] (\xCtrB,\yBotB) -- (\xCtrC,\yTopC+6);
\node[note, anchor=west] at (183,{(\yBotB+\yTopC+6)/2}) {$k(a)\cdot \Pi(a)$};

\draw[greenbox] (\xCL,\yTopC) rectangle (\xCR,\yBotC);
\node[title, text=blue!65!black] at (\xCtrC,\yCtrC+10) {Laplace transform};
\node[subtitle, text=blue!60!black] at (\xCtrC,\yCtrC-10)
  {$\mathcal{L}\{k\Pi\}(\zeta)=\int_0^\infty k(a)\Pi(a)e^{-\zeta a}\,da$};

\draw[line,->] (\xCtrC,\yBotC) -- (\xCtrD,\yTopD+6);
\node[gnote, anchor=west] at (183,{(\yBotC+\yTopD+6)/2-3})
  {$F(\zeta)=\mathcal{L}\{k\Pi\}(\zeta)$};

\draw[purplebox] (\xDL,\yTopD) rectangle (\xDR,\yBotD);
\node[title, text=orange!85!black] at (\xCtrD,\yCtrD+10) {Function inversion $\star$};
\node[subtitle, text=orange!80!black] at (\xCtrD,\yCtrD-10) {find $\zeta$ : $F(\zeta)=1$};

\draw[draw=orange!85!black, line width=1.2pt, ->] (\xCtrD,\yBotD) -- (\xCtrD,\yOut);
\node[pnote, anchor=west] at (183,\yOut+6) {$\zeta$};

\node[note, anchor=west, align=left, text width=150pt] at (365,\yAnnA)
  {step 1 --- trivial:\\ \quad (zero-extend $k$)};
\draw[thinline] (150,\yAnnA) -- (365,\yAnnA);

\node[note, anchor=west, align=left, text width=150pt] at (365,\yAnnB)
  {step 2 --- direct:\\ \quad (explicit integral)};
\draw[thinline] (365,\yAnnB) -- (335,\yAnnB);

\node[note, anchor=west, align=left, text width=150pt] at (365,\yAnnC)
  {step 3 --- direct:\\ \quad (pointwise product)};
\draw[thinline] (365,\yAnnC) -- (\xBR+2,\yAnnC);

\node[gnote, anchor=west, align=left, text width=150pt] at (365,\yAnnD)
  {step 4 --- direct:\\ \quad (linear integral)};
\draw[draw=blue!55!black, line width=0.5pt, dash pattern=on 3pt off 2pt]
  (365,\yAnnD) -- (\xCR+2,\yAnnD);

\node[pnote, anchor=west, align=left, text width=150pt] at (365,\yCtrD)
  {step 5 --- nontrivial $\star$\\ (implicit, no closed form)};
\draw[draw=orange!80!black, line width=0.5pt, dash pattern=on 3pt off 2pt]
  (365,\yCtrD) -- (\xDR+2,\yCtrD);

\node[bigtitle, align=center] at (300,\yTitle+30) {$G_{LS} : (k,\mu)\mapsto \zeta$};
\end{tikzpicture}%
}
\caption{Computational breakdown of the Lotka-Sharpe operator}
\label{figure:Lotka-sharpe}
\end{figure}
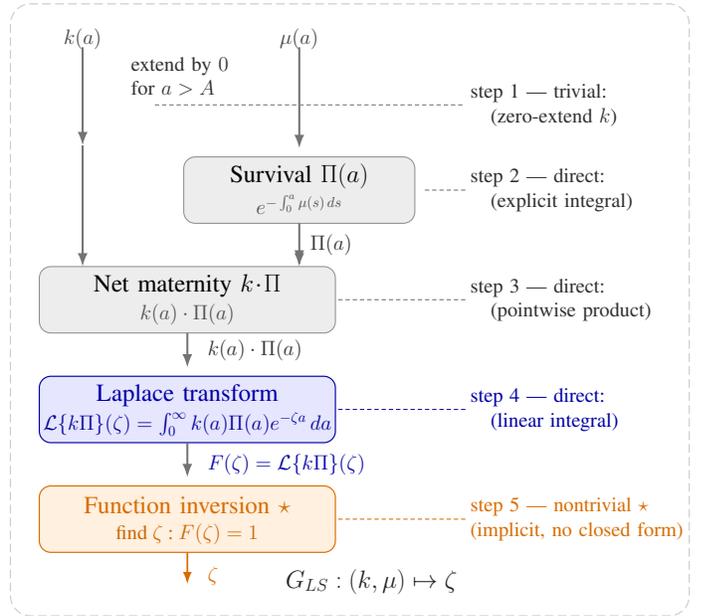

\section{Four Operators}
\label{sec-oper}

In the implementation of a stabilizing controller for an age-structured predator-prey system, four operators are involved, the principal among which is the Lotka-Sharpe operator $\mathcal{G}_{\rm LS}$ (See Figure \ref{fig:implicit-explicit-operators}). 

\subsection{Lotka-Sharpe operator (output sits {\em within} an integral --- it has to be solved for).}
Define the Lotka-Sharpe operator as $\mathcal{G}_{\rm LS}$ : $(k, \mu) \mapsto \zeta$, mapping two functions into a scalar, and defined implicitly by
\begin{equation}\int_0^A k(a)\, e^{-\int_0^a (\zeta+\mu(s))\,ds}\,da = 1\end{equation}
with $\zeta_i = \mathcal{G}_{\rm LS}(k_i, \mu_i)$, $i=1,2$.

For $A=\infty$, the Lotka-Sharpe condition admits a useful reformulation that makes the mathematical meaning of $\mathcal{G}_{\rm LS}$ transparent. Define the survival function
\begin{align}
\Pi(a)=\exp\!\Big(-\int_0^a \mu(s)\,ds\Big),
\end{align}
so that $k(a)\Pi(a)$ is the {\em net maternity function}, i.e.\ fertility at age $a$ weighted by survival up to age $a$. Then the defining equation becomes
\begin{align}
\int_0^\infty k(a)\Pi(a)e^{-\zeta a}\,da = 1,
\end{align}
or equivalently
\begin{align}
\mathcal{L}\{k\Pi\}(\zeta)=1.
\end{align}
where $\mathcal{L}$ is the Laplace transform. 
Hence, the Lotka-Sharpe operator may be viewed as
\begin{align}
\mathcal{G}_{\rm LS}(k,\mu)=(\mathcal{L}\{k\Pi\})^{-1}(1),
\end{align}
namely: first form the net maternity profile $k\Pi$, then take its Laplace transform, and finally locate its $1$-level crossing. Biologically, $\zeta$ is the harvesting rate for which discounted lifetime maternity is exactly one, so that each individual replaces itself and the population remains constant.


This decomposition also clarifies what is mathematically easy and what is genuinely difficult in learning $\mathcal{G}_{\rm LS}$ (See Figure \ref{figure:Lotka-sharpe}). The passage $(k,\mu)\mapsto k\Pi$ is explicit, and the Laplace transform is likewise explicit and linear, even though it acts on an infinite-dimensional input. The central nonlinearity lies in the last step: solving for the unique $\zeta$ such that $\mathcal{L}\{k\Pi\}(\zeta)=1$. In other words, learning $\mathcal{G}_{\rm LS}$ amounts primarily to learning the inverse of the scalar function $\zeta\mapsto \mathcal{L}\{k\Pi\}(\zeta)$ at level $1$. This is a root-finding problem whose solution depends globally on the whole net maternity profile, which explains why $\mathcal{G}_{\rm LS}$ is nontrivial despite the apparent simplicity of its definition.

\subsection{Three easier operators (outputs = direct evaluations of integrals).}
We will see the control law also requires three additional operators - although their mappings are explicit and hence do not require the same computational treatment as the Lotka-Sharpe operator.

\begin{itemize}
    \item $\mathcal{G}_{\kappa}$ : $(k, \mu, \zeta) \mapsto \kappa$, mapping two functions and one scalar into a scalar, and defined explicitly as
\begin{equation}\kappa = \int_0^A a\, k(a)\, e^{-\int_0^a (\zeta+\mu(s))\,ds}\,da\end{equation}
with $\kappa_i = \mathcal{G}_\kappa(k_i, \mu_i, \zeta_i)$, $i=1,2$.

\item $\mathcal{G}_{\gamma}$ : $(g, \zeta, \mu) \mapsto \gamma$, mapping two functions and one scalar into a scalar, and defined explicitly as
\begin{equation}\gamma = \int_0^A g(a)\, e^{-\int_0^a (\zeta+\mu(s))\,ds}\,da\end{equation}
with $\gamma_1 = \mathcal{G}_\gamma(g_1, \zeta_2, \mu_2)$ and $\gamma_2 = \mathcal{G}_\gamma(g_2, \zeta_1, \mu_1)$.

\item $\mathcal{G}_{\pi}$ : $(k, \mu, \zeta) \mapsto \pi_0$, mapping two functions and one scalar into a function, and defined explicitly as
\begin{equation}\pi_0(a) = \int_a^A k(s)\, e^{\int_s^a (\zeta+\mu(l))\,dl}\,ds\end{equation}
with $\pi_{0,i}(a) = \mathcal{G}_\pi(k_i, \mu_i, \zeta_i)(a)$, $i=1,2$.
\end{itemize}

\section{Control Laws: Nominal and Neuro-approximated}

\label{sec-controllers}

We are now ready to introduce the feedback law for stabilizing \eqref{eq:pred-prey-model}. We begin by discussing the exact feedback design and then discuss the design under approximations $\hat{\zeta}_1, \hat{\zeta}_2$ explicitly characterizing the propagation of the error. 

\subsection{Nominal controller ensures stabilization}

\begin{figure*}[t]
\centering
\resizebox{0.72\textwidth}{!}{%
\begin{tikzpicture}[
  >=Latex,
  font=\footnotesize,
  box/.style={
    draw,
    rounded corners=3pt,
    align=center,
    inner sep=3pt,
    minimum height=1.25cm,
    text width=3.2cm
  },
  implicit/.style={
    box,
    draw=blue!70!black,
    fill=blue!12,
    text=blue!60!black
  },
    explicit/.style={
      box,
      draw=teal!70!black,
      fill=teal!12,
      text=teal!60!black,
      text width=2.5cm
    },
  control/.style={
    box,
    draw=gray!70!black,
    fill=gray!15,
    text=black!75,
    text width=4.0cm
  },
  lab/.style={
    font=\scriptsize\itshape,
    text=black!95
  },
  edge/.style={
    draw=black!70,
    line width=0.45pt
  },
  circnum/.style={
  circle,
  draw=black!65,
  line width=0.4pt,
  minimum size=1.55em,
  inner sep=0pt,
  font=\scriptsize
}
]

\node[lab, anchor=east] at (-4.2,  1.9) {\circledlabel{1}{implicit}};
\node[lab, anchor=east] at (-4.2,  0.0) {\circledlabel{2}{explicit}};
\node[lab, anchor=east] at (-4.2, -1.9) {\circledlabel{3}{control}};
\node[implicit] (gls) at (0,1.9) {$\mathcal{G}_{\mathrm{LS}}$\\[1mm] $(k,\mu)\mapsto {\color{red}\zeta}$ implicit};

\node[explicit] (gk)  at (-2.8,0) {$\mathcal{G}_\kappa$\\[1mm] $(k,\mu,{\color{red}\zeta})\mapsto \kappa$};
\node[explicit] (gy)  at (0,0)    {$\mathcal{G}_\gamma$\\[1mm] $(g,{\color{red}\zeta},\mu)\mapsto \gamma$};
\node[explicit] (gpi) at (2.8,0)  {$\mathcal{G}_{\pi}$\\[1mm] $(k,\mu,{\color{red}\zeta})\mapsto \pi_0(\cdot)$};

\node[control] (u) at (0,-1.9) {Control law $u(t)$\\[1mm] Eq. \eqref{eq:nominal-control-law}
};

\draw[edge, ->] (gls.south west) -- ($(gk.north)+(0,0.08)$)
  node[pos=0.55, above, sloped, inner sep=1pt] {${\color{red}\zeta}$};

\draw[edge, ->] (gls.south) -- (gy.north)
  node[pos=0.48, right, inner sep=1pt] {${\color{red}\zeta}$};

\draw[edge, ->] (gls.south east) -- ($(gpi.north)+(0,0.08)$)
  node[pos=0.55, above, sloped, inner sep=1pt] {${\color{red}\zeta}$};

\draw[->, blue!70!black, dashed, line width=0.5pt]
  ($(gls.east)+(0.05,0.15)$)
  .. controls +(0.9,0.35) and +(0.9,-0.35) ..
  ($(gls.east)+(-0.02,-0.15)$);

\node[anchor=west, align=left, text=black!75] at ($(gls.east)+(1.1,0.02)$)
  {${\color{red}\zeta}$ inside\\kernel};

\draw[edge,->] (gk.south) -- ($(u.north west)+(0.2,0)$);
\draw[edge,->] (gy.south) -- (u.north);
\draw[edge,->] (gpi.south) -- ($(u.north east)+(-0.2,0)$);
\node[anchor=west, font=\footnotesize, text=black!95] at (-5.0,-3.0) {%
  \tikz{\fill[blue!20] (0,0) rectangle (0.22,0.14);
        \draw[blue!70!black] (0,0) rectangle (0.22,0.14);}
  \ implicit --- requires iterative root-finding
};

\node[anchor=west, font=\footnotesize, text=black!95] at (2,-3.0) {%
  \tikz{\fill[teal!20] (0,0) rectangle (0.22,0.14);
        \draw[teal!70!black] (0,0) rectangle (0.22,0.14);}
  \ explicit given ${\color{red}\zeta}$
};

\end{tikzpicture}%
}
\caption{Graph of the operator dependencies for the control construction in \eqref{eq-eta-cont}.}
\label{fig:implicit-explicit-operators}
\end{figure*}
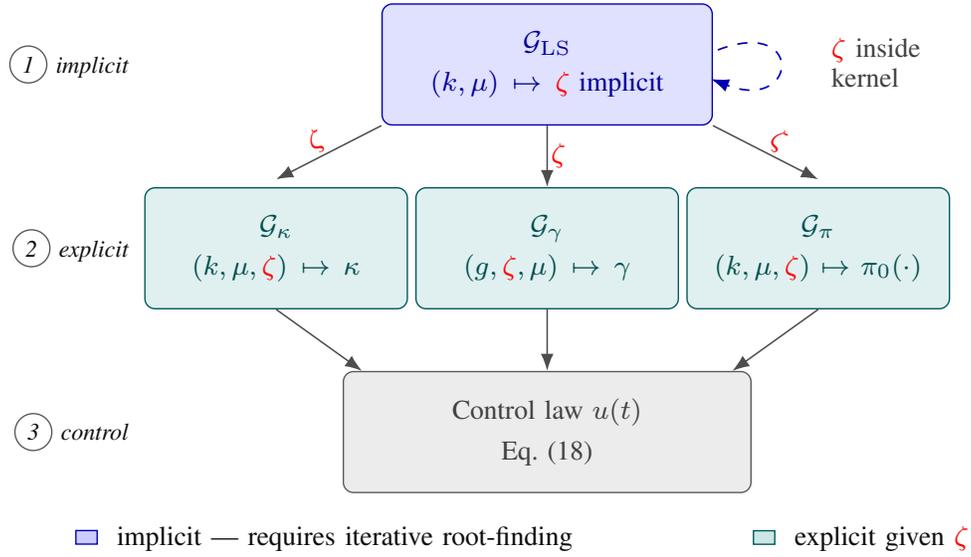

\paragraph*{Nominal controller.} 
The control law
\begin{align}
  \nonumber   u =&\;  u^* + \beta \Bigg[
    \frac{1}{\lambda_2}
    \left(1 - \frac{\int_0^A a k_1(a) x_1^*(a) da}{\int_0^A \pi_{0,1}(a) x_1(a,t)  d a} \right)  
    \\ &- (1+\varepsilon)\lambda_1\left(1 - \frac{\int_0^A \pi_{0,2}(a) x_2(a,t) d a}
    {\int_0^A a k_2(a) x_2^*(a) da} \right)
    \Bigg] 
\end{align}
was designed in \cite{veil2025stabilization} to stabilize system~\eqref{eq:pred-prey-model}.
Using \eqref{eq:def_lambda} to eliminate the constants $\lambda_i$, \eqref{eq:ss_profiles} to eliminate the profiles $x_i^*(a)$, and the relation $u^* = \zeta_2 - \frac{1}{x_1^*(0)\gamma_2}$, the control can be rewritten as
\begin{align}
u =&\; \zeta_2 - \frac{1}{x_1^*(0)\gamma_2}
+ \beta\bigg[\frac{1}{x_1^*(0)\gamma_2}\left(1 - \frac{x_1^*(0)\kappa_1}{\langle\pi_{0,1}, x_1\rangle}\right) \nonumber \\ & \qquad \qquad - (1+\varepsilon)x_2^*(0)\gamma_1\left(1 - \frac{\langle\pi_{0,2}, x_2\rangle}{x_2^*(0)\kappa_2}\right)\bigg]\,,
\end{align}
which, then, eliminating $x_2^*(0)$ using \eqref{eq-x2star-zero} and keeping only the prey birth setpoint $x_1^*(0)$, becomes $u = u_{\rm nom}(\eta)$ with
\begin{align}
u_{\rm nom}(\eta)
:=&\; \zeta_2 - \frac{1}{x_1^*(0)\gamma_2}
+ \beta\bigg[
(1+\varepsilon)(\zeta_2-\zeta_1)
\nonumber \\ &  \qquad -\frac{\varepsilon}{x_1^*(0)\gamma_2}
-\frac{\kappa_1}{\gamma_2\langle\pi_{0,1},x_1\rangle}
\nonumber \\ &  \qquad +(1+\varepsilon)\frac{\gamma_1}{\kappa_2}\langle\pi_{0,2},x_2\rangle
\bigg]. \label{eq:nominal-control-law}
\end{align}
The inputs into this feedback law are the states $(x_1,x_2)$, the setpoint scalar $x_1^*(0)$, as well as the scalars $\zeta_i,\kappa_i,\gamma_i$ and functions $\pi_{0,i}$, which all depend only on $(k_i,\mu_i, g_i)$. 

\paragraph*{Closed-loop stability under nominal/exact $\zeta_1,\zeta_2$.}
It was shown in \cite[Proposition 2]{veil2025stabilization} that, on the set $\{a \in [0, A] | \kappa_i x_i(a) = n_i(a) \langle \pi_{0,i}, x_i\rangle\}$, the PDE system \eqref{eq:pred-prey-model} is governed by the ODE
\begin{subequations}
\label{eq-eta-sys}
\begin{eqnarray}
\dot\eta_1 &=& \zeta_2 - \frac{1}{x_1^*(0)\gamma_2}{\rm e}^{\eta_2} - u\\
\dot\eta_2 &=& \zeta_1 - \frac{1}{x_1^*(0)\gamma_2}{\rm e}^{-\eta_1} - u
\end{eqnarray}
\end{subequations}
We focus in this paper on stabilization of this ODE system, in the presence of approximation errors $\zeta_i-\hat\zeta_i$. 

For exact parameters, $\hat\zeta_i = \zeta_i$, the controller $u_{\rm nom}(\eta)$ gives
\begin{subequations}
\begin{align}
\dot\eta_1 =&\; -\frac{\beta}{x_1^*(0)\gamma_2}(1-e^{-\eta_1}) \nonumber\\ &  - \left(1+\beta(1+\varepsilon)\right)\left(\zeta_1-\zeta_2+\frac{1}{x_1^*(0)\gamma_2}\right)(e^{\eta_2}-1)\\
\dot\eta_2 =&\; \frac{1-\beta}{x_1^*(0)\gamma_2}(1-e^{-\eta_1}) \nonumber \\  &- \beta(1+\varepsilon)\left(\zeta_1-\zeta_2+\frac{1}{x_1^*(0)\gamma_2}\right)(e^{\eta_2}-1)
\end{align}
\end{subequations}

The stability analysis under the nominal controller is conducted with the functions
\begin{subequations}
\begin{align}
\phi_1(\eta_1)=\;&\frac{1}{x_1^*(0)\gamma_2}(1-e^{-\eta_1}), \\
\phi_2(\eta_2)=\;&\left(\zeta_1-\zeta_2+\frac{1}{x_1^*(0)\gamma_2}\right)(e^{\eta_2}-1),
\end{align}
\end{subequations}
where the Lyapunov function is given as
\begin{align}
V_1(\eta)
=\;&
\frac{1}{x_1^*(0)\gamma_2}(e^{-\eta_1}+\eta_1-1)
\nonumber \\ &+
(1+\varepsilon)\left(\zeta_1-\zeta_2+\frac{1}{x_1^*(0)\gamma_2}\right)(e^{\eta_2}-\eta_2-1)
\end{align}
with $\frac{\partial V_1}{\partial \eta_1}=\phi_1(\eta_1), \frac{\partial V_1}{\partial \eta_2}=(1+\varepsilon)\phi_2(\eta_2)$, 
and for the closed-loop system
\begin{subequations}
\begin{eqnarray}
\dot\eta_1 &=& -\beta \phi_1(\eta_1)-\bigl(1+\beta(1+\varepsilon)\bigr)\phi_2(\eta_2), \\
\dot\eta_2 &=& -\beta(1+\varepsilon)\phi_2(\eta_2)+(1-\beta)\phi_1(\eta_1).
\end{eqnarray}
\end{subequations}
The Lyapunov derivative is
\begin{equation}
\dot V_1(\eta)
=
-\begin{bmatrix}\phi_1 & \phi_2\end{bmatrix}
Q
\begin{bmatrix}\phi_1\\ \phi_2\end{bmatrix},
\end{equation}
where
\begin{equation}
Q=
\begin{bmatrix}
\beta & \dfrac{\varepsilon-2\beta(1+\varepsilon)}{2}\\[1.2ex]
\dfrac{\varepsilon-2\beta(1+\varepsilon)}{2} & \beta(1+\varepsilon)^2
\end{bmatrix}\,.
\end{equation}
The determinant 
$\det Q =\frac{\varepsilon\bigl(4\beta(1+\varepsilon)-\varepsilon\bigr)}{4}$ is positive if and only if 
\begin{equation}
\beta>\frac{\varepsilon}{4(1+\varepsilon)}\,,
\end{equation}
which makes $\dot V_1(\eta)$ negative definite and $\eta_1=\eta_2=0$ a globally asymptotically stable equilibrium, at least if $u$ is not restricted to only positive values. 

\subsection{Approximate controller introduces a perturbation}

When the Lotka-Sharpe parameters $\zeta_i$ are known only approximately as $\hat\zeta_i$, $\zeta_i$ is replaced in the controller by $\hat\zeta_i$. The resulting approximation error does not remain localized, but enters every operator that depends on $\zeta_i$. 

The approximation of controller \eqref{eq:nominal-control-law} is given by $u = \hat{u}(\eta, e_1, e_2)$ where 
\begin{align}
\label{eq-eta-cont-PDE}
\hat u(\eta,e_1,e_2)
:=\;& \hat\zeta_2-\frac{1}{x_1^*(0)\hat\gamma_2}
+\beta\bigg[
(1+\varepsilon)(\hat\zeta_2-\hat\zeta_1)
\nonumber \\ & \qquad  -\frac{\varepsilon}{x_1^*(0)\hat\gamma_2}
-\frac{\hat\kappa_1}{\hat\gamma_2\langle \hat\pi_{0,1},x_1\rangle}
\nonumber \\ &\qquad  +(1+\varepsilon)\frac{\hat\gamma_1}{\hat\kappa_2}\langle \hat\pi_{0,2},x_2\rangle
\bigg]
\\
=\;& \hat\zeta_2-\frac{1}{x_1^*(0)\hat\gamma_2}
+\beta\bigg[
(1+\varepsilon)(\hat\zeta_2-\hat\zeta_1)
\nonumber \\ & \qquad -\frac{\varepsilon}{x_1^*(0)\hat\gamma_2}
-\frac{\hat\kappa_1}
{\hat\gamma_2\,\langle \hat\pi_{0,1},n_1\rangle\,x_1^*(0)}e^{-\eta_1}
\nonumber \\ & \qquad +(1+\varepsilon)\frac{\hat\gamma_1\,\langle \hat\pi_{0,2},n_2\rangle\,x_2^*(0)}
{\hat\kappa_2}e^{\eta_2}
\bigg]
\,,
\label{eq-eta-cont}
\end{align}
where
\begin{equation}
\eta_i = \ln\left(\frac{\langle \pi_{0,i}, x_i\rangle}{x_i^*(0) \kappa_i}\right)\,, 
\end{equation}
and
\begin{eqnarray}
\zeta_i &=&  \mathcal{G}_{\rm LS}( k_i,\mu_i) \\
\hat\zeta_i &=& \widehat{\mathcal{G}}_{\rm LS}( k_i,\mu_i) = \zeta_i-e_i\\
e_i &=& \zeta_i-\hat\zeta_i\\
\label{eq-kappahat}
\hat\kappa_i &=& 
\mathcal{G}_\kappa( k_i,  \mu_i,\zeta_i-e_i) \\
\label{eq-gamma1hat}
\hat\gamma_1 &=& \mathcal{G}_\gamma( g_1, \zeta_2-e_2, \mu_2)\\
\label{eq-gamma2hat}
\hat\gamma_2 &=& \mathcal{G}_\gamma( g_2, \zeta_1-e_1, \mu_1)\\
\label{eq-pihat}
\hat \pi_{0,i} &=& \mathcal{G}_\pi(k_i, \mu_i, \zeta_i-e_i)\,,
\end{eqnarray}
where $\widehat{\mathcal{G}}_{\rm LS}$ stands for an approximation of the operator ${\mathcal{G}}_{\rm LS}$, which can be of a neural, numerical, or another kind, producing errors $e_i$. 

In \eqref{eq-kappahat}-\eqref{eq-pihat} one faces a nearly terrifying feature of the approximate controller: the approximation errors $e_i$ of the Lotka-Sharpe operator propagate throughout the gain architecture of the control law. The robustness analysis will have to quantify all of them, through the respective nonlinear infinite-dimensional operators $\mathcal{G}_\kappa, \mathcal{G}_\gamma, \mathcal{G}_\pi$. 

Further, note explicitly that $u_{\rm nom}(\eta) = \hat u(\eta,0,0)$.
Dealing with the perturbation 
\begin{equation}
\Delta_u(\eta,e_1,e_2) = \hat u(\eta,e_1,e_2) - \hat u(\eta,0,0)
\end{equation}
 is the main technical challenge to be overcome in the stability analysis portion of this paper. Additionally, 
though negative definite, $\dot V_1(\eta)$ is not proper. The lack of properness is the model's fundamental challenge for achieving semiglobal stability in the face of the controller perturbation $\Delta_u(\eta,e_1,e_2)$. Theorem \ref{thm-spas} is where these challenges are overcome.

\def\nodescale{0.82}   
\begin{figure*}[t]
    \centering 
\resizebox{500pt}{!}{
\begin{tikzpicture}[
    x=1cm,y=1cm,
    >=Latex,
    font=\sffamily,
    every node/.style={transform shape, scale=\nodescale},
    line cap=round,
    line join=round,
    inarrow/.style={->, draw=black!60, line width=1.0pt},
    meas/.style={->, draw=black!55, dashed, line width=1.0pt},
    zeta/.style={->, draw=blue!55, line width=1.5pt},
    gyline/.style={->, draw=red!65!black, line width=1.0pt},
    gkline/.style={->, draw=green!45!black, line width=1.0pt},
    gpiline/.style={->, draw=orange!75!black, line width=1.0pt},
    block/.style={
      draw,
      rounded corners=7pt,
      minimum width=2.45cm,
      minimum height=1.55cm,
      align=center,
      line width=1.0pt
    },
    ls/.style={
      block,
      draw=blue!70!black,
      fill=blue!10,
      text=blue!75!black,
      line width=1.6pt,
      rounded corners=9pt
    },
    gk/.style={block, draw=green!45!black, fill=green!12, text=green!35!black},
    gy/.style={block, draw=red!55!black, fill=red!8, text=red!45!black},
    gpi/.style={block, draw=orange!65!black, fill=orange!14, text=orange!45!black},
    control/.style={
      draw=black!65,
      fill=gray!10,
      rounded corners=8pt,
      minimum width=7.6cm,
      minimum height=1.35cm,
      align=center,
      line width=1.1pt,
      text=black
    },
    zetaLine/.style={draw=blue!55, line width=1.5pt},   
    muLine/.style={draw=black!60, line width=1.0pt},
    kLine/.style={draw=black!60, line width=1.0pt}
]


\def\muxL{-1.6}
\def\muxR{1.6}
\def\kshift{3.0}
\def\gshift{7.0}

\def\xgL{\muxL-\gshift}
\def\xkL{\muxL-\kshift}
\def\xmuL{\muxL}

\def\xmuR{\muxR}
\def\xkR{\muxR+\kshift}
\def\xgR{\muxR+\gshift}

\def\ytoplabel{9.1}
\def\ytopstart{8.8}
\def\yshort{6.95}
\def\ylong{4.45}

\node[font=\Large] at (\xmuL,\ytoplabel) {$k_1(a)$};
\node[font=\Large] at (\xmuR,\ytoplabel) {$k_2(a)$};

\node[font=\Large] at (\xkL,\ytoplabel) {$\mu_1(a)$};
\node[font=\Large] at (\xgL,\ytoplabel) {$g_2(a)$};

\node[font=\Large] at (\xkR,\ytoplabel) {$\mu_2(a)$};
\node[font=\Large] at (\xgR,\ytoplabel) {$g_1(a)$};

\node[ls, minimum width=4cm, minimum height=1.65cm] (LS1) at (-3.2,7)
{$\widehat{\mathcal{G}}_{\mathrm{LS}}$\\[2pt]
$\left(k_1,\mu_1\right)\mapsto {\color{red}\hat\zeta_1}$};

\node[ls, minimum width=4cm, minimum height=1.65cm] (LS2) at (3.2, 7)
{$\widehat{\mathcal{G}}_{\mathrm{LS}}$\\[2pt]
$\left(k_2,\mu_2\right)\mapsto {\color{red}\hat\zeta_2}$};

\def\yLSbend{8.5}
\def\LSinsep{0.55}

\coordinate (LS1K) at ($(LS1.north)+(-\LSinsep,0)$);
\coordinate (LS2K) at ($(LS2.north)+(\LSinsep,0)$);

\path let \p1 = (LS1.north), \p2 = (LS2.north) in
  coordinate (LS1Mu) at (\xmuL,\y1)
  coordinate (LS2Mu) at (\xmuR,\y2);


\draw[inarrow] (\xkL,\ytopstart) -- (\xkL,\yLSbend) -| (LS1K);
\draw[inarrow] (\xkR,\ytopstart) -- (\xkR,\yLSbend) -| (LS2K);

\def\dx{3.2}   

\node[gy] (GY2) at (-2.5*\dx,3.6)
{$\mathcal{G}_{\gamma}$\\[2pt]
$\left(g_2,{\color{red}\hat\zeta_1},\mu_1\right)\mapsto \gamma_2$};

\node[gk] (GK1) at (-1.5*\dx,3.6)
{$\mathcal{G}_{\kappa}$\\[2pt]
$\left(k_1,\mu_1,{\color{red}\hat\zeta_1}\right)\mapsto \kappa_1$};

\node[gpi] (GPI1) at (-0.5*\dx,3.6)
{$\mathcal{G}_{\pi}$\\[2pt]
$\left(k_1,\mu_1,{\color{red}\hat\zeta_1}\right)\mapsto \pi_{0,1}$};

\node[gpi] (GPI2) at (0.5*\dx,3.6)
{$\mathcal{G}_{\pi}$\\[2pt]
$\left(k_2,\mu_2,{\color{red}\hat\zeta_2}\right)\mapsto \pi_{0,2}$};

\node[gk] (GK2) at (1.5*\dx,3.6)
{$\mathcal{G}_{\kappa}$\\[2pt]
$\left(k_2,\mu_2,{\color{red}\hat\zeta_2}\right)\mapsto \kappa_2$};

\node[gy] (GY1) at (2.5*\dx,3.6)
{$\mathcal{G}_{\gamma}$\\[2pt]
$\left(g_1,{\color{red}\hat\zeta_2},\mu_2\right)\mapsto \gamma_1$};

\node[control] (CTL) at (0,-0.05)
{$\mathbf{\text{Control law }}$
Eq. \eqref{eq-eta-cont}};

\def\mybus{5.7}   


\path let \p1 = (LS1.north), \p2 = (LS2.north) in
  coordinate (LS1Mu) at (\xkL,\y1)
  coordinate (LS2Mu) at (\xkR,\y2);

\coordinate (MuLTee) at (\xkL,\yLSbend);
\coordinate (MuRTee) at (\xkR,\yLSbend);

\def\GYtwoMuX{0.6}
\def\GKoneMuX{-0.6}
\def\GPIoneMuX{-0.6}

\def\GPItwoMuX{0.6}
\def\GKtwoMuX{0.6}
\def\GYoneMuX{-0.6}

\coordinate (GY2mu)  at ($(GY2.north)+(\GYtwoMuX,0)$);
\coordinate (GK1mu)  at ($(GK1.north)+(\GKoneMuX,0)$);
\coordinate (GPI1mu) at ($(GPI1.north)+(\GPIoneMuX,0)$);

\coordinate (GPI2mu) at ($(GPI2.north)+(\GPItwoMuX,0)$);
\coordinate (GK2mu)  at ($(GK2.north)+(\GKtwoMuX,0)$);
\coordinate (GY1mu)  at ($(GY1.north)+(\GYoneMuX,0)$);

\path let \p1 = (GY2.north) in coordinate (GY2topx) at (\xgL,\y1);
\path let \p1 = (GY1.north) in coordinate (GY1topx) at (\xgR,\y1);

\draw[inarrow] (\xgL,\ytopstart) -- (GY2topx);
\draw[inarrow] (\xgR,\ytopstart) -- (GY1topx);

\path let
  \p1 = (GY2mu),
  \p2 = (GK1mu),
  \p3 = (GPI1mu),
  \p4 = (GPI2mu),
  \p5 = (GK2mu),
  \p6 = (GY1mu)
in
  coordinate (ML1) at (\x1,\mybus)
  coordinate (ML2) at (\x2,\mybus)
  coordinate (ML3) at (\x3,\mybus)
  coordinate (MR1) at (\x4,\mybus)
  coordinate (MR2) at (\x5,\mybus)
  coordinate (MR3) at (\x6,\mybus);

\def\muPushL{1.5}
\def\muPushR{1.5}

\coordinate (MuLElbow) at ($(MuLTee)+(-\muPushL,0)$);
\coordinate (MuRElbow) at ($(MuRTee)+(\muPushR,0)$);

\draw[muLine] (MuLTee) -- (MuLElbow) -- (MuLElbow |- ML2) -- (ML2);
\draw[muLine] (MuRTee) -- (MuRElbow) -- (MuRElbow |- MR2) -- (MR2);

\draw[muLine] (ML1) -- (ML3);
\draw[muLine] (MR1) -- (MR3);

\draw[inarrow] (ML1) -- (GY2mu);
\draw[inarrow] (ML2) -- (GK1mu);
\draw[inarrow] (ML3) -- (GPI1mu);

\draw[inarrow] (MR1) -- (GPI2mu);
\draw[inarrow] (MR2) -- (GK2mu);
\draw[inarrow] (MR3) -- (GY1mu);
\def\kybus{5.35}      
\def\kPushL{1.15}     
\def\kPushR{1.15}     

\coordinate (KLTee) at (\xmuL,\yLSbend);
\coordinate (KRTee) at (\xmuR,\yLSbend);

\coordinate (KLElbow) at ($(KLTee)+(\kPushL,0)$);
\coordinate (KRElbow) at ($(KRTee)+(-\kPushR,0)$);
\def\kPushLOut{1}     
\def\kPushROut{1}
\coordinate (KLElbowOut) at ($(KLTee)-(\kPushLOut,0)$);
\coordinate (KRElbowOut) at ($(KRTee)-(-\kPushROut,0)$);

\def\GKoneKX{0.6}
\def\GPIoneKX{0.6}

\def\GPItwoKX{-0.6}
\def\GKtwoKX{-0.6}

\coordinate (GK1k)  at ($(GK1.north)+(\GKoneKX,0)$);
\coordinate (GPI1k) at ($(GPI1.north)+(\GPIoneKX,0)$);

\coordinate (GPI2k) at ($(GPI2.north)+(\GPItwoKX,0)$);
\coordinate (GK2k)  at ($(GK2.north)+(\GKtwoKX,0)$);

\path let
  \p1 = (GK1k),
  \p2 = (GPI1k),
  \p3 = (GPI2k),
  \p4 = (GK2k)
in
  coordinate (KL1) at (\x1,\kybus)
  coordinate (KL2) at (\x2,\kybus)
  coordinate (KR1) at (\x3,\kybus)
  coordinate (KR2) at (\x4,\kybus);

\draw[kLine] (KLTee) -- (KLElbow);
\draw[kLine] (KRTee) -- (KRElbow);

\draw[kLine] (\xmuL,\ytopstart) -- (KLTee);
\draw[kLine] (\xmuR,\ytopstart) -- (KRTee);

\draw[kLine] (KLTee) -- (KLElbowOut);
\draw[kLine] (KRTee) -- (KRElbowOut);

\draw[inarrow] (KLElbowOut) -- ($(LS1.north)+(0.6,0)$); 
\draw[inarrow] (KRElbowOut) -- ($(LS2.north)+(-0.6,0)$); 

\draw[kLine] (KLElbow) -- (KLElbow |- KL1) -- (KL1);
\draw[kLine] (KRElbow) -- (KRElbow |- KR2) -- (KR2);

\draw[kLine] (KL1) -- (KL2);
\draw[kLine] (KR1) -- (KR2);

\draw[inarrow] (KL1) -- (GK1k);
\draw[inarrow] (KL2) -- (GPI1k);

\draw[inarrow] (KR1) -- (GPI2k);
\draw[inarrow] (KR2) -- (GK2k);

\def\zybus{5.}   

\def\GYtwoX{0}
\def\GKoneX{0.00}
\def\GPIoneX{0}

\def\GPItwoX{0}
\def\GKtwoX{0.00}
\def\GYoneX{0}

\coordinate (GY2in)  at ($(GY2.north)+(\GYtwoX,0)$);
\coordinate (GK1in)  at ($(GK1.north)+(\GKoneX,0)$);
\coordinate (GPI1in) at ($(GPI1.north)+(\GPIoneX,0)$);

\coordinate (GPI2in) at ($(GPI2.north)+(\GPItwoX,0)$);
\coordinate (GK2in)  at ($(GK2.north)+(\GKtwoX,0)$);
\coordinate (GY1in)  at ($(GY1.north)+(\GYoneX,0)$);

\path let
  \p1 = (GY2in),
  \p2 = (GK1in),
  \p3 = (GPI1in),
  \p4 = (GPI2in),
  \p5 = (GK2in),
  \p6 = (GY1in)
in
  coordinate (ZL1) at (\x1,\zybus)
  coordinate (ZL2) at (\x2,\zybus)
  coordinate (ZL3) at (\x3,\zybus)
  coordinate (ZR1) at (\x4,\zybus)
  coordinate (ZR2) at (\x5,\zybus)
  coordinate (ZR3) at (\x6,\zybus);

\coordinate (LS1bus) at (-2.55,\zybus);
\coordinate (LS2bus) at ( 2.55,\zybus);

\draw[zetaLine] (LS1.south) |- (LS1bus);
\draw[zetaLine] (LS2.south) |- (LS2bus);

\draw[zetaLine] (ZL1) -- (ZL3);
\draw[zetaLine] (ZR1) -- (ZR3);

\draw[zeta] (ZL1) -- (GY2in);
\draw[zeta] (ZL2) -- (GK1in);
\draw[zeta] (ZL3) -- (GPI1in);

\draw[zeta] (ZR1) -- (GPI2in);
\draw[zeta] (ZR2) -- (GK2in);
\draw[zeta] (ZR3) -- (GY1in);

\node[font=\Large, text=blue!60] 
    at ($(LS1bus)+(-0.7,-0.36)$) {$\boldsymbol{{\color{red}\hat\zeta_1}}$};

\node[font=\Large, text=blue!60] 
    at ($(LS2bus)+(0.7,-0.36)$) {$\boldsymbol{{\color{red}\hat\zeta_2}}$};

\node[font=\Large, text=red!65!black]    at (-7.5,1.9) {$\gamma_2$};
\node[font=\Large, text=green!40!black]  at (-3.95,1.83) {$\kappa_1$};
\node[font=\Large, text=orange!70!black] at (-1.35,1.9) {$\pi_{0,1}$};

\node[font=\Large, text=orange!70!black] at (1.35,1.9) {$\pi_{0,2}$};
\node[font=\Large, text=green!40!black]  at (3.95,1.83) {$\kappa_2$};
\node[font=\Large, text=red!65!black]    at (7.5,1.9) {$\gamma_1$};

\def\yctrl{0.9}     
\def\dropA{1.4}
\def\dropB{0.7}
\def\dropC{0.2}

\coordinate (C1) at (-2.8,\yctrl);
\coordinate (C2) at (-1.9,\yctrl);
\coordinate (C3) at (-0.9,\yctrl);
\coordinate (C4) at ( 0.9,\yctrl);
\coordinate (C5) at ( 1.9,\yctrl);
\coordinate (C6) at ( 2.8,\yctrl);

\draw[gyline]  (GY2.south)  -- ++(0,-\dropA) -| (C1) -- (C1 |- CTL.north);
\draw[gkline]  (GK1.south)  -- ++(0,-\dropB) -| (C2) -- (C2 |- CTL.north);
\draw[gpiline] (GPI1.south) -- ++(0,-\dropC) -| (C3) -- (C3 |- CTL.north);

\draw[gpiline] (GPI2.south) -- ++(0,-\dropC) -| (C4) -- (C4 |- CTL.north);
\draw[gkline]  (GK2.south)  -- ++(0,-\dropB) -| (C5) -- (C5 |- CTL.north);
\draw[gyline]  (GY1.south)  -- ++(0,-\dropA) -| (C6) -- (C6 |- CTL.north);
\draw[meas] (-7,-0.05) -- (CTL.west);
\node[anchor=east, font=\Large] at (-7.05,-0.05) {$x_1(a,t)$};

\draw[meas] (7,-0.05) -- (CTL.east);
\node[anchor=west, font=\Large] at (7.05,-0.05) {$x_2(a,t)$};

\draw[->, draw=black!70, line width=1.2pt] (CTL.south) -- ++(0,-0.55);
\node[anchor=west, font=\Large] at (0.12,-1.05) {$u(t)$};

\draw[inarrow] (-6.7,-1.85) -- (-5.8,-1.85);
\node[anchor=west, font=\Large, text=black!80] at (-5.7,-1.85) {input function};

\draw[zeta] (-2.25,-1.85) -- (-1.35,-1.85);
\node[anchor=west, font=\Large, text=black!80] at (-1.25,-1.85) {$\zeta$ output};

\draw[meas] (1.2,-1.85) -- (2.1,-1.85);
\node[anchor=west, font=\Large, text=black!80] at (2.2,-1.85) {given / measured};

\draw[dashed, ->, draw=black!45, line width=0.9pt] (0,9.3) -- (CTL.north);
\node[anchor=south, font=\Large] at (0.05,9.4) {$x_1^{*}(0)$};

\end{tikzpicture}
}
\caption{Explicit operator mappings in the predator-prey control law.}
\end{figure*}
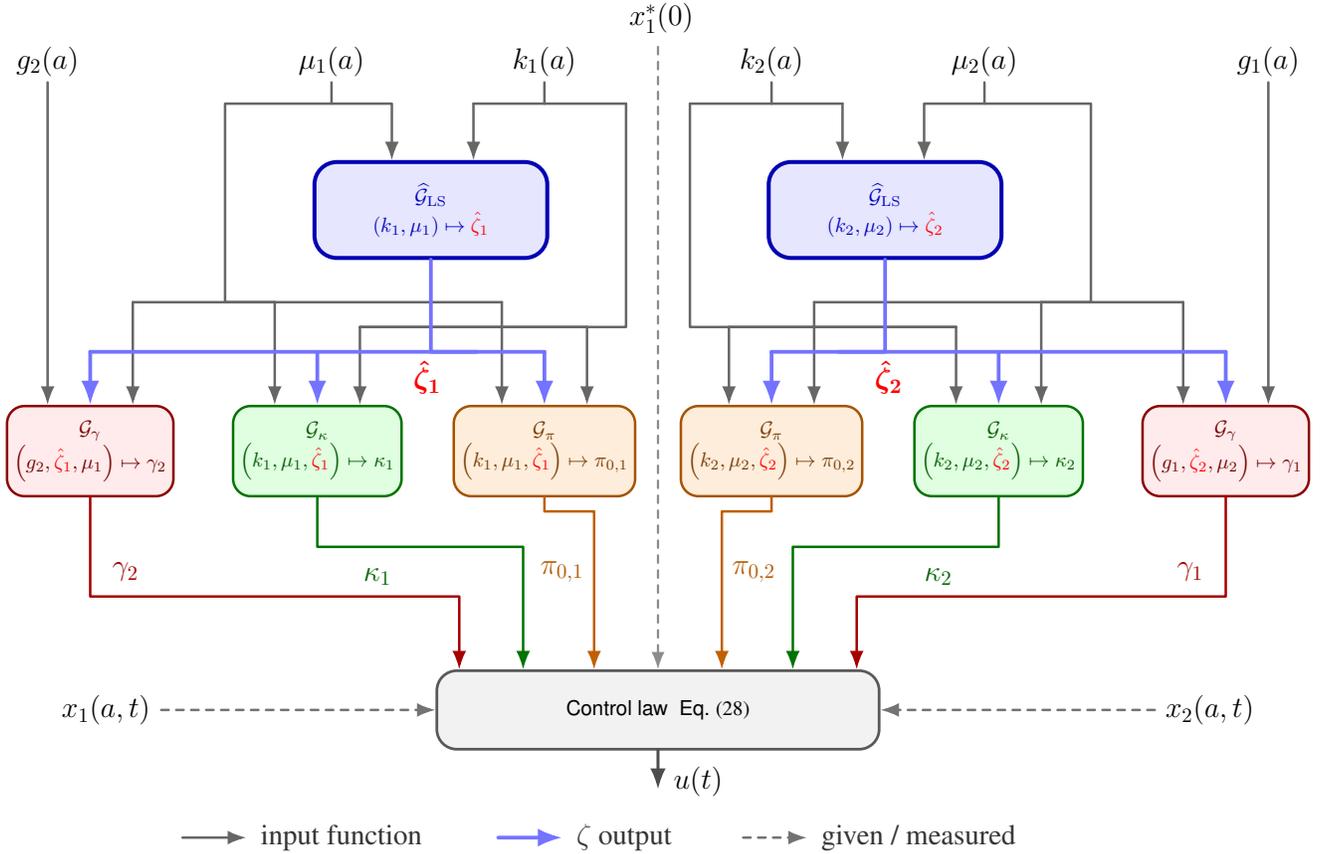

\section{Neural Approximability of Lotka-Sharpe Operator}
\label{sec-approx}

Relative to many previous results on neural operator-based control, what differentiates the result of the present paper is the operator: the Lotka-Sharpe nonlinear mapping. The next theorem is the backbone of the paper --- establishing Lipschitzness of $\mathcal{G}_{\rm LS}$. The result is unconventional: from the domain on which the result holds, whose idiosyncrasy comes from biology-required monotonicity of birth and mortality, to the technique with which the Lipschitz constant is derived. 

\begin{theorem} \emph{(Lipschitz continuity of the Lotka--Sharpe mapping)} 
\label{thm-LS-Lip}
Let $A>0$ and let $k_{\rm min},k_{\rm max},\mu_{\rm min},\mu_{\rm max} \in C^0([0,A];\mathbb{R}_{\ge 0})$ be Lipschitz functions such that
\begin{align}
\label{eq-k1k2}
k_{\rm min}(a)&\le k_{\rm max}(a),  \nonumber \\  \mu_{\rm min}(a)&\le \mu_{\rm max}(a),
\quad \forall a\in[0,A],
\end{align}
and
\begin{equation}
\label{eq-int>1}
\int_0^A k_{\rm min}(a)e^{-\int_0^a \mu_{\rm max}(s)\,ds}\,da > 1.
\end{equation}
Let $\zeta_{\rm min}\leq \zeta_{\rm max}$ be the unique solutions of the following equations for ($k_{\rm min}$,$\mu_{\rm max}$) and ($k_{\rm max}$, $\mu_{\rm min}$) respectively. Then, 
\begin{align}
\label{eq-LS-map-1-2}
\int_0^A k_{\rm min}(a)e^{-\zeta_{\rm min} a-\int_0^a \mu_{\rm max}(s)\,ds}\,da &= 1\,, \\ 
\int_0^A k_{\rm max}(a)e^{-\zeta_{\rm max} a-\int_0^a \mu_{\rm min}(s)\,ds}\,da &= 1.
\end{align}
Let $G>0$ be a sufficiently large constant so that
\begin{equation}
\|f\|_\infty + \sup_{\substack{a,s\in[0,A]\\ a\ne s}} \frac{|f(a)-f(s)|}{|a-s|} \le G
\end{equation}
for $f=k_{\rm min},k_{\rm max},\mu_{\rm min},\mu_{\rm max}$. Define
\begin{equation}
\label{eq:set_HG}
H_G := \left\{
f \in C^0[0,A];\mathbb{R}_{\ge 0}) :
\|f\|_\infty + [f]_{C^{0,1}([0,A])} \le G
\right\},
\end{equation}
where 
\begin{equation}
[f]_{C^{0,1}([0,A])}
:=
\sup_{\substack{a,s\in[0,A]\\ a\ne s}}
\frac{|f(a)-f(s)|}{|a-s|}\,, 
\end{equation}
and
\begin{equation}
\label{eq:set_S}
S := \left\{
\begin{aligned}
(k,\mu)\in H_G^2 \;:\;& k_{\rm min}(a)\le k(a)\le k_{\rm max}(a),\\
& \mu_{\rm min}(a)\le \mu(a)\le \mu_{\rm max}(a),\\
& \text{for all } a\in[0,A]
\end{aligned}
\right\}.
\end{equation}
Furthermore, for each $(k,\mu)\in S$, define $P(k,\mu)=\zeta$, where $\zeta>0$ is the unique solution of
\begin{equation}
\label{eq-LS-map-def}
\int_0^A k(a)e^{-\zeta a-\int_0^a \mu(s)\,ds}\,da = 1.
\end{equation}
Then the mapping $P:S\to[\zeta_{\rm min},\zeta_{\rm max}]$ is Lipschitz continuous with respect to the sup norm, namely,
for all $(k,\mu),(\tilde k,\tilde\mu)\in S$,
\begin{equation}
|P(\tilde k,\tilde\mu)-P(k,\mu)|
\le
L\,\|\tilde k-k\|_\infty
+
L\,\|k_{\rm max}\|_\infty\,A\,\|\tilde\mu-\mu\|_\infty,
\end{equation}
where
\begin{align}
\label{eq-Lip-const}
L =\;&
\frac{
A\,\bigl(2A\|k_{\rm max}\|_\infty\bigr)^{2A\|k_{\rm max}\|_\infty-1}
}{
\left(\int_0^A a\,k_{\rm min}(a)I(a)\,da\right)
\ln\!\left(\int_0^A k_{\rm min}(a)I(a)\,da\right)
}\,, 
\end{align}
with
\begin{align}
    I(a) :=\; e^{-\int_0^a \mu_{\rm max}(s) ds }\,. 
\end{align}
\end{theorem}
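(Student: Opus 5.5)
Define $F(\zeta;k,\mu):=\int_0^A k(a)e^{-\zeta a-\int_0^a\mu(s)ds}\,da$, which is continuous and strictly decreasing in $\zeta$. The plan is the classical mean-value argument for implicitly defined maps: bound how much $F$ moves when $(k,\mu)$ change, and divide by a lower bound on $|\partial_\zeta F|$ over the relevant range of $\zeta$.

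\emph{Step 1: ordering of $P$.} First I show $P$ maps $S$ into $[\zeta_{\rm min},\zeta_{\rm max}]$. For $(k,\mu)\in S$ we have pointwise
\[
k_{\rm min}e^{-\int_0^a\mu_{\rm max}}\le k e^{-\int_0^a\mu}\le k_{\rm max}e^{-\int_0^a\mu_{\rm min}},
\]
so $F(\zeta;k_{\rm min},\mu_{\rm max})\le F(\zeta;k,\mu)\le F(\zeta;k_{\rm max},\mu_{\rm min})$. Evaluating at $\zeta_{\rm min}$ and $\zeta_{\rm max}$ and using monotonicity in $\zeta$ gives $\zeta_{\rm min}\le P(k,\mu)\le\zeta_{\rm max}$. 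Existence and uniqueness follow from \eqref{eq-int>1}, since $F(0)>1$, $F\to0$ as $\zeta\to\infty$, and $F$ is strictly decreasing. Positivity of $\zeta_{\rm min}$ also follows from \eqref{eq-int>1}.

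\emph{Step 2: perturbation of $F$.} Write $\zeta=P(k,\mu)$ and $\tilde\zeta=P(\tilde k,\tilde\mu)$. Then
\[
0=F(\tilde\zeta;\tilde k,\tilde\mu)-F(\zeta;k,\mu)=[F(\tilde\zeta;\tilde k,\tilde\mu)-F(\tilde\zeta;k,\mu)]+[F(\tilde\zeta;k,\mu)-F(\zeta;k,\mu)].
\]
The first bracket splits into two pieces:
\begin{itemize}
\item the $k$-part, bounded by $\|\tilde k-k\|_\infty\int_0^A e^{-\tilde\zeta a}da\le A\|\tilde k-k\|_\infty$, using $\tilde\zeta>0$;
\item the $\mu$-part, bounded via $|e^{-x}-e^{-y}|\le|x-y|$ for $x,y\ge0$ by $\|k_{\rm max}\|_\infty A\cdot A\|\tilde\mu-\mu\|_\infty$.
\end{itemize}
For the second bracket I apply the mean value theorem: it equals $-\,(\tilde\zeta-\zeta)\int_0^A a\,k(a)e^{-\xi a-\int_0^a\mu}\,da$ for some $\xi$ between $\zeta$ and $\tilde\zeta$, so $\xi\le\zeta_{\rm max}$. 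This yields
\[
|\tilde\zeta-\zeta|\le\frac{A\|\tilde k-k\|_\infty+A^2\|k_{\rm max}\|_\infty\|\tilde\mu-\mu\|_\infty}{e^{-\zeta_{\rm max}A}\int_0^A a\,k_{\rm min}(a)I(a)\,da}.
\]
This already has the claimed structure, with the common factor $A$ pulled into $L$.

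\emph{Step 3: explicit constant (main obstacle).} It remains to bound $e^{\zeta_{\rm max}A}$ explicitly, and matching the factor $(2A\|k_{\rm max}\|_\infty)^{2A\|k_{\rm max}\|_\infty-1}/\ln(\int k_{\rm min}I)$ in \eqref{eq-Lip-const} is the delicate part. My plan is to avoid the crude bound $\zeta_{\rm max}\le\|k_{\rm max}\|_\infty$ (from $1=F\le\|k_{\rm max}\|_\infty/\zeta_{\rm max}$), which gives $e^{\|k_{\rm max}\|A}$ rather than the stated power form. Instead, I would use the $\ln$ term in the denominator, which suggests a two-sided argument based on $\zeta_{\rm min}$.

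First, by Jensen applied to the probability measure proportional to $k_{\rm min}I$,
\[
\zeta_{\rm min}\ge\frac{\ln\!\bigl(\int k_{\rm min}I\bigr)}{A}.
\]
Second, in the denominator I would replace the weight $e^{-\xi a}$ by a bound in terms of $\zeta$ itself, using $1=\int k e^{-\zeta a-\int\mu}$. Splitting $[0,A]$ at a threshold $\delta$ gives $e^{-\zeta\delta}\cdot\text{mass}\ge\dots$, and optimizing $\delta$ should yield power-type bounds of the form $(2A\|k_{\rm max}\|_\infty)^{2A\|k_{\rm max}\|_\infty}$; the elementary inequality $x^{-1}e^{x}$-type optimization typically produces such $n^{n}$ factors.

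If this exact constant proves elusive, I would still establish Lipschitz continuity with the constant from Step 2 and then refine it. The qualitative claim does not depend on the exact form of $L$.
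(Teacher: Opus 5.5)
Your Steps 1 and 2 are correct and follow the paper's argument closely. The ordering $\zeta_{\rm min}\le P(k,\mu)\le\zeta_{\rm max}$, the splitting of $0=F(\tilde\zeta;\tilde k,\tilde\mu)-F(\zeta;k,\mu)$ (the paper's ``fundamental identity''), and the mean-value lower bound with weight $e^{-\zeta_{\rm max}a}$ are all there. Together they prove the inequality with $L_s:=Ae^{\zeta_{\rm max}A}/\int_0^A a\,k_{\rm min}(a)I(a)\,da$ in place of $L$.

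The gap is Step 3. The theorem asserts the specific constant $L$, and you never show $L_s\le L$. What you offer is only a plan, and it points the wrong way:
\begin{itemize}
\item You bounded $\int_0^A e^{-\tilde\zeta a}\,da$ by $A$, so $\zeta_{\rm min}$ no longer appears in $L_s$. Your Jensen lower bound on $\zeta_{\rm min}$ therefore has nowhere to enter.
\item In the paper, the logarithm in $L$ comes from keeping $\int_0^A e^{-\tilde\zeta a}\,da\le(1-e^{-\tilde\zeta A})/\zeta_{\rm min}$. This gives $L_1=(e^{\zeta_{\rm max}A}-1)/(\zeta_{\rm min}\int_0^A a\,k_{\rm min}I\,da)$, which is then combined with $\zeta_{\rm min}\ge A^{-1}\ln\int_0^A k_{\rm min}I\,da$ (from $e^{-\zeta a}\ge e^{-\zeta A}$).
\item The threshold splitting you sketch is essentially the paper's Step 1, $\zeta\le 2\|k\|_\infty\ln(2A\|k\|_\infty)$. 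The power it produces is the one you anticipate yourself, $e^{\zeta_{\rm max}A}\le(2A\|k_{\rm max}\|_\infty)^{2A\|k_{\rm max}\|_\infty}$. That is one factor $2A\|k_{\rm max}\|_\infty$ more than the exponent in $L$ allows, so it does not land on $L$.
\end{itemize}
Your closing remark that the qualitative claim does not depend on $L$ does not rescue the argument, because the statement is quantitative.

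The missing step is short, and the bound you discarded is what closes it; it is actually sharper than the splitting bound. Set $y:=A\|k_{\rm max}\|_\infty$. Then $1<\int_0^A k_{\rm min}I\,da\le y$. Also, $1\le\|k_{\rm max}\|_\infty(1-e^{-\zeta_{\rm max}A})/\zeta_{\rm max}$ gives $\zeta_{\rm max}A<y$. Hence $L_s\le L$ follows from $e^{y}\ln y\le(2y)^{2y-1}$ for $y>1$. To prove this, use $\ln\ln y\le\ln(y-1)\le y-2$, which reduces it to $2y-2\le(2y-1)\ln(2y)$. That inequality holds at $y=1$, and for $y\ge 1$ the right side has derivative $2\ln(2y)+2-1/y>2$, so it stays true. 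Along the paper's route, the same bound gives $e^{\zeta_{\rm max}A}-1\le e^{y}-1\le(2y)^{2y-1}$, and hence $L_1\le L$.
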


As depicted in Figure \ref{figure:Lotka-sharpe}, the Lotka–Sharpe operator reduces to a scalar root-finding problem for the Laplace transform of the net maternity function $k(a)\Pi(a)$. The only nonlinear step is the function inversion $F(\zeta)=1$; all other steps are explicit. The analysis below quantifies the {\em sensitivity of this function inversion} to perturbations in the input functions $(k,\mu)$.

The Lipschitz constant \eqref{eq-Lip-const} reflects the sensitivity of the intrinsic growth rate to perturbations in fertility and mortality profiles, increasing when reproduction is either high (large $k_{\rm max}$) or when mortality-discounted fertility is weak (small $k_{\rm min}$, large $\mu_{\rm max}$), i.e., when the population operates near a fragile balance between growth and decline. 

From the constant \eqref{eq-Lip-const} being independent of the Lipschitz constants of $k_{\rm min}, k_{\rm max}, \mu_{\rm min}, \mu_{\rm max}$ the reader should not infer that $S$ can be a set that contains non-Lipschitz continuous functions. In order to obtain the
neural approximability $S$ needs to be a compact set, which is ensured by the definitions of $H_G$ and $S$ and by the Arzela-Ascoli theorem.\footnote{A slight
generalization can be obtained if the set $H_G$ is not a bounded subset of Lipschitz functions but a bounded subset of Holder continuous functions (where equicontinuity holds as well), but we forego that generalization.}

The next result, an immediate corollary of the universal approximation theory for nonlinear operators on compact domains (See \cite{chen1995Universal},\cite[Theorem 1]{Lanthaler2025Nonlocality}), provides the neural approximation mechanism that is used to replace the exact Lotka–Sharpe parameters in the feedback law. 

\begin{corollary} \emph{(Neural-operator approximability of the Lotka--Sharpe mapping)} 
Let $A>0$, and let $k_{\rm min},k_{\rm max},\mu_{\rm min},\mu_{\rm max}\in C^0([0,A];\mathbb R_{\ge 0})$ satisfy \eqref{eq-k1k2} and \eqref{eq-int>1}. 
Let $G>0$, define $H_G$ and $S\subset H_G^2$ as in Theorem \ref{thm-LS-Lip}.
Define the Lotka--Sharpe operator as $P$ in Theorem \ref{thm-LS-Lip}, namely, as
\begin{equation}
\mathcal{G}_{\rm LS}:S\to[\zeta_{\rm min},\zeta_{\rm max}],\qquad \mathcal{G}_{\rm LS}(k,\mu):=\zeta,
\end{equation}
where $\zeta$ is the unique solution of \eqref{eq-LS-map-def} and $\zeta_{\rm \min} \leq \zeta_{\rm max}$ be the constants defined by \eqref{eq-LS-map-1-2}.
Then, for every $\delta>0$, there exists a neural operator $\widehat{\mathcal{G}}_{\rm LS}:S\to\mathbb R$ such that
\begin{equation}
|\mathcal{G}_{\rm LS}(k,\mu)-\widehat{\mathcal{G}}_{\rm LS}(k,\mu)|\le \delta,\qquad \forall\,(k,\mu)\in S.
\end{equation}
\end{corollary}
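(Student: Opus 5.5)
The plan is to derive the corollary directly from Theorem \ref{thm-LS-Lip} together with a universal approximation theorem for continuous operators on compact sets, such as \cite{chen1995Universal} or \cite[Theorem 1]{Lanthaler2025Nonlocality}. Those theorems need three things: a compact input set, a continuous operator on it, and a bounded target. We verify each in turn and then invoke the approximation result with tolerance $\delta$.

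\textbf{Step 1 (compactness of $S$).} The set $H_G$ is uniformly bounded in $C^0([0,A])$ with sup norm by $G$. It is also equicontinuous, since every $f\in H_G$ has Lipschitz constant at most $G$. It is closed: a uniform limit of functions satisfying $\|f\|_\infty+[f]_{C^{0,1}}\le G$ still satisfies that bound, because for fixed $a\neq s$ the difference quotients pass to the limit pointwise. Nonnegativity is likewise preserved. By the Arzel\`a--Ascoli theorem, $H_G$ is compact, and therefore so is $H_G^2$ in the product sup norm. The pointwise order constraints $k_{\rm min}\le k\le k_{\rm max}$ and $\mu_{\rm min}\le\mu\le\mu_{\rm max}$ define a closed subset. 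Hence $S$ is a closed subset of a compact set, and so it is compact.

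\textbf{Step 2 (continuity and range).} Theorem \ref{thm-LS-Lip} states that $\mathcal{G}_{\rm LS}=P$ is well defined on $S$ and takes values in $[\zeta_{\rm min},\zeta_{\rm max}]$. It also gives the Lipschitz bound $|P(\tilde k,\tilde\mu)-P(k,\mu)|\le L\|\tilde k-k\|_\infty+L\|k_{\rm max}\|_\infty A\|\tilde\mu-\mu\|_\infty$. So $\mathcal{G}_{\rm LS}$ is a continuous, indeed uniformly continuous, map from the compact metric space $S\subset C^0([0,A])^2$ into a bounded interval of $\mathbb{R}$.

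\textbf{Step 3 (approximation).} We apply the universal approximation theorem for nonlinear continuous operators on compact subsets of $C^0$ spaces. For DeepONet-type architectures this is \cite{chen1995Universal}; for general neural operators it is \cite[Theorem 1]{Lanthaler2025Nonlocality}. Both take the two functions $(k,\mu)$ stacked as a vector-valued input, or equivalently use the compact set $S\subset C^0([0,A];\mathbb{R}^2)$. For any $\delta>0$ the theorem yields $\widehat{\mathcal{G}}_{\rm LS}$ with $\sup_S|\mathcal{G}_{\rm LS}-\widehat{\mathcal{G}}_{\rm LS}|\le\delta$.

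\textbf{Main obstacle.} The only step needing care is matching hypotheses in Step 3. Chen--Chen is stated for compact $V\subset C(K)$ with scalar-valued input functions, so we must handle the pair $(k,\mu)$. One option is to encode it as a single continuous function on the disjoint union of two copies of $[0,A]$, which is still a compact set $K$. The other is to cite the vector-valued version. Once that encoding is fixed, the corollary follows immediately. Note that the Lipschitz rate is not needed here, only continuity; the Lipschitz rate would matter only for quantitative network-size estimates.
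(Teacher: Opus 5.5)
Your proposal is correct and follows the paper's own route. The paper obtains continuity of $\mathcal{G}_{\rm LS}$ from the Lipschitz bound of Theorem~\ref{thm-LS-Lip}, compactness of $S$ from Arzel\`a--Ascoli, and then applies a universal approximation theorem for continuous operators on compact sets (citing \cite{lu2021Learning}); your closedness argument and your encoding of $(k,\mu)$ as a single input just make explicit what the paper leaves implicit. There is one more hypothesis-matching point you did not flag, and it is harmless: the corollary assumes only $C^0$ bounds $k_{\rm min},\dots,\mu_{\rm max}$ and an arbitrary $G>0$, whereas Theorem~\ref{thm-LS-Lip} is stated for Lipschitz bounds and $G$ large enough to contain them, but its proof never uses either assumption, so continuity of $\mathcal{G}_{\rm LS}$ on $S$ still holds.
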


\begin{proof}
By Theorem \ref{thm-LS-Lip}, $\mathcal{G}_{\rm LS}$ is Lipschitz on $S$, hence continuous on the compact set $S$. The universal approximation property of neural operators \cite{lu2021Learning} implies that $\mathcal{G}_{\rm LS}$ can be uniformly approximated on $S$ by a neural operator $\widehat{\mathcal{G}}_{\rm LS}$ with arbitrary accuracy $\delta>0$.
\end{proof}

\section{Stabilization with Neural Operator}
\label{sec-stabilization}

\subsection{Stability theorem under errors in Lotka-Sharpe parameters $\zeta_1,\zeta_2$}

We first state a generic approximation-robustness result, in Theorem \ref{thm-spas}, which is independent of whether the approximation originates from a neural operator or some other error in computing the Lotka-Sharpe parameters $\zeta_1,\zeta_2$. Then, in Corollary \ref{cor-main}, we give a stabilization result under a neural operator. 

Note that the approximate controller \eqref{eq-eta-cont} depends on the approximation $\hat\zeta_i=\zeta_i - e_i$ not only directly, but also through the derived quantities $\hat\kappa_i, \hat\gamma_i, \hat\pi_{0,i}$ defined in \eqref{eq-kappahat}, \eqref{eq-gamma1hat}, \eqref{eq-gamma2hat}, \eqref{eq-pihat}, which are exact evaluations of $\mathcal{G}_\kappa, \mathcal{G}_\gamma, \mathcal{G}_\pi$ at the approximate Lotka-Sharpe values. Consequently, all approximation errors in the controller reduce to the scalar errors $e_1, e_2$  and Theorem \ref{thm-spas} is stated entirely in terms of these.

Even though the (approximate) feedback law~\eqref{eq-eta-cont} is given in terms of the state of the PDE, see the version \eqref{eq-eta-cont-PDE}, we provide a stability guarantee for the reduced/ODE model~\eqref{eq-eta-sys}. We have two reasons for this. First, the full PDE model is equivalent to the ODE model with exponentially decaying multiplicative perturbations, as shown in \cite[(15)]{veil2025stabilization}, and noting the exponential decay of $\psi_i$ in \cite[(53), (54)]{veil2025stabilization}. We can extend our Theorem \ref{thm-spas} here just as we extended the ODE Theorem 1 to the PDE Theorem 2 in \ref{thm-spas} in \cite{veil2025stabilization}. Second, such an extension would not illuminate --- it would, in fact, detract from the clarity of how the errors of approximating the Lotka-Sharpe and the other three operators are handled in our robustness analysis.

\begin{theorem}
(\emph{Admissibly semi-global practical asymptotic stability under positive control})
\label{thm-spas}
Consider system~\eqref{eq-eta-sys} controlled by the approximate feedback law~\eqref{eq-eta-cont}. Let $\varepsilon>0$,
\begin{equation}
\beta>\frac{\varepsilon}{4(1+\varepsilon)},
\end{equation}
$\gamma_1,\gamma_2,\zeta_1,\zeta_2>0$, $x_1^*(0) > \frac{1}{\zeta_2\gamma_2 }$, $x_2^*(0)>0$, and define
\begin{equation}
a:=\frac{1}{x_1^*(0)\gamma_2},\qquad
b:=\zeta_1-\zeta_2+\frac{1}{x_1^*(0)\gamma_2} = \gamma_1 x_2^*(0) > 0\
\end{equation}
\begin{equation}
\phi_1(\eta_1):=a(1-e^{-\eta_1}),\qquad
\phi_2(\eta_2):=b(e^{\eta_2}-1),
\end{equation}
\begin{equation}
r(\eta):=\sqrt{\phi_1(\eta_1)^2+\phi_2(\eta_2)^2},
\end{equation}
\begin{equation}
\mathcal D_*:=\{\eta\in\mathbb R^2:\ r(\eta)<\min\{a,b\}\},
\end{equation}
and
\begin{equation}
V_1(\eta):=a\bigl(e^{-\eta_1}+\eta_1-1\bigr)
+(1+\varepsilon)b\bigl(e^{\eta_2}-\eta_2-1\bigr).
\end{equation}
Define
\begin{equation}
\Omega_c:=\{\eta\in\mathbb R^2:\ V_1(\eta)\le c\},
\end{equation}
and for $\delta>0$
\begin{align}
\label{eq:cdelta-star}
c_\delta^*
&:= \sup\Bigl\{ c>0 : \Omega_c\subset \mathcal D_*
\ \text{and}
\nonumber\\
&\hphantom{{}:= \sup\Bigl\{ c>0 : {}}
\inf_{\eta\in\Omega_c,\ |e_1|+|e_2|\le\delta}
u(\eta;e_1,e_2)\ge 0 \Bigr\}.
\end{align}
Then, for every $\delta>0$ and every $c\in(0,c_\delta^*)$, there exist functions $\beta_c\in\mathcal{K}\mathcal{L}$ and $\mu_c\in\mathcal K$ such that if
\begin{equation}
|e_1|+|e_2|\le\delta,\qquad \eta(0)\in\Omega_c,
\end{equation}
 the solution satisfies
\begin{equation}
\eta(t)\in\Omega_c\subset\mathcal D_*,\qquad u(t)\ge 0,\qquad \forall t\ge 0,
\end{equation}
and
\begin{equation}
r(\eta(t))\le \beta_c\bigl(r(\eta(0)),t\bigr)+\mu_c(\delta),\qquad \forall t\ge 0.
\end{equation}
\end{theorem}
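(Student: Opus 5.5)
The plan is to write the closed loop as the nominal closed loop driven by the scalar perturbation $\Delta_u$, and to run a Lyapunov/ISS argument with $V_1$ restricted to the compact sublevel set $\Omega_c$.

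\textbf{Step 1: reduce to the nominal loop plus $\Delta_u$.} Substituting $u=\hat u(\eta,e_1,e_2)=u_{\rm nom}(\eta)+\Delta_u(\eta,e_1,e_2)$ into \eqref{eq-eta-sys} gives
\begin{align*}
\dot\eta_1&=-\beta\phi_1-\bigl(1+\beta(1+\varepsilon)\bigr)\phi_2-\Delta_u,\\
\dot\eta_2&=-\beta(1+\varepsilon)\phi_2+(1-\beta)\phi_1-\Delta_u.
\end{align*}
Hence
\begin{equation*}
\dot V_1=-[\phi_1\ \phi_2]\,Q\,[\phi_1\ \phi_2]^{\top}-\Delta_u\bigl(\phi_1+(1+\varepsilon)\phi_2\bigr)\le -q\,r^2+\sqrt{2}(1+\varepsilon)\,r\,|\Delta_u|,
\end{equation*}
where $q=\lambda_{\min}(Q)>0$ because $\beta>\varepsilon/(4(1+\varepsilon))$.

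\textbf{Step 2: show $\Omega_c$ is compact.} $V_1$ is a sum of two coercive convex functions of $\eta_1$ and $\eta_2$, so $\Omega_c$ is compact. Since $\Omega_c\subset\mathcal D_*$, the quantities $e^{-\eta_1}$ and $e^{\eta_2}$ both lie in $(0,2)$ on $\Omega_c$. This is where the non-properness of $\dot V_1$ gets neutralized: we never leave a region where $r$ is equivalent to $|\eta|$. Concretely, $r$ and $V_1$ are comparable there through class-$\mathcal K$ bounds $\underline\alpha(r)\le V_1\le\overline\alpha(r)$, obtained because $\phi_i$ are diffeomorphisms onto their images and $V_1$ is quadratic-like near the origin.

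\textbf{Step 3: bound $\Delta_u$ uniformly on $\Omega_c$ (main technical step).} By \eqref{eq-eta-cont}, $\Delta_u$ is an affine combination of $1$, $e^{-\eta_1}$ and $e^{\eta_2}$. Its coefficients are built from $\hat\zeta_i=\zeta_i-e_i$ and from $\hat\kappa_i$, $\hat\gamma_i$, $\langle\hat\pi_{0,i},n_i\rangle$, which are evaluations of $\mathcal G_\kappa$, $\mathcal G_\gamma$, $\mathcal G_\pi$ at $\zeta_i-e_i$. For $|e_i|\le\delta$ these operators depend smoothly, and in particular Lipschitz-continuously, on the scalar argument; this follows by differentiating under the integral, since $\partial_\zeta e^{-\zeta a}=-a e^{-\zeta a}$ and $a\le A$. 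We also need $\hat\kappa_i$ and $\hat\gamma_i$ to stay bounded away from zero, which I would verify explicitly since they appear in denominators. Using that $e^{-\eta_1},e^{\eta_2}<2$ on $\Omega_c$, this yields a class-$\mathcal K$ function $\rho_c$ with
\begin{equation*}
\sup_{\eta\in\Omega_c}|\Delta_u(\eta,e_1,e_2)|\le\rho_c(|e_1|+|e_2|).
\end{equation*}
This requires propagating $e_i$ through all three operators and the quotients in \eqref{eq-eta-cont}. Positivity of $u$ on $\Omega_c$ is then exactly the second condition in the definition of $c^*_\delta$, so $u(t)\ge0$ follows as soon as invariance is proved.

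\textbf{Step 4: invariance and the estimate.} From Steps 1 and 3,
\begin{equation*}
\dot V_1\le -\tfrac{q}{2}r^2\quad\text{whenever}\quad r\ge \tfrac{2\sqrt2(1+\varepsilon)}{q}\,\rho_c(\delta),
\end{equation*}
which is the standard ISS implication. Forward invariance of $\Omega_c$ follows once $\dot V_1<0$ on $\partial\Omega_c$. On the boundary $r\ge \overline\alpha^{-1}(c)>0$, so it suffices that $\rho_c(\delta)$ stays below a fixed threshold, namely $q\,\overline\alpha^{-1}(c)/(2\sqrt2(1+\varepsilon))$.

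\textbf{Main obstacle.} The delicate point is that the theorem is stated for every $\delta$, with the restriction encoded only through $c<c^*_\delta$. I would first try to show that the conditions defining $c^*_\delta$ (admissibility of $u$ and $\Omega_c\subset\mathcal D_*$) force this threshold. If that fails, the statement needs an implicit smallness condition on $\delta$ relative to $c$, which I would make explicit by shrinking $c^*_\delta$. Note that for large $\delta$ the practical estimate itself is trivial, by taking $\mu_c(\delta)\ge\min\{a,b\}\ge\sup_{\Omega_c}r$. Invariance, however, is not trivial in that regime.

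Given invariance, the standard comparison-lemma argument applies on $\Omega_c$, using $\dot V_1\le-\tfrac q2\,\underline\alpha'(\cdot)$-type bounds outside the ball. This produces $\beta_c\in\mathcal{KL}$ and $\mu_c(\delta)=\underline\alpha^{-1}\circ\overline\alpha\bigl(C\rho_c(\delta)\bigr)$, giving $r(\eta(t))\le\beta_c(r(\eta(0)),t)+\mu_c(\delta)$.
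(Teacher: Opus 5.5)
Your plan follows the paper's proof step for step. Both use the identity $\dot V_1=-[\phi_1\ \phi_2]\,Q\,[\phi_1\ \phi_2]^{\top}-(\phi_1+(1+\varepsilon)\phi_2)\Delta_u$ and use $\Omega_c\subset\mathcal D_*$ to cap $e^{-\eta_1}$ and $e^{\eta_2}$. Both bound $\Delta_u$ through Lipschitz-in-$\zeta$ estimates for $\mathcal G_\gamma,\mathcal G_\kappa,\mathcal G_\pi$ plus positive lower bounds on the denominators; this is Lemma~\ref{lem-Delta_u}, which gives $|\Delta_u|\le C_{R(c)}(|e_1|+|e_2|)$ on $\Omega_c$. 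Add $\langle\hat\pi_{0,1},n_1\rangle$ to your list of denominators. The Young-inequality/comparison step is also the same; the paper uses quadratic sandwich constants $m_c,M_c$ where you use $\underline\alpha,\overline\alpha$.

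The obstacle you flag at invariance is a genuine gap, and it cannot be closed the first way you propose. Both conditions defining $c_\delta^*$ pass to every smaller $c$, since $\Omega_{c'}\subset\Omega_c$ for $c'<c$. So $(0,c_\delta^*)$ contains arbitrarily small $c$, and nothing in that definition bounds $\min_{\partial\Omega_c}r$ from below in terms of $\delta$. Invariance does need such a bound:
\begin{itemize}
\item $\Delta_u$ enters both $\eta$-equations identically, so at $\eta=0$ you have $\dot\eta_1+\dot\eta_2=-2\Delta_u(0,e_1,e_2)$.
\item The perturbed equilibrium is where $\phi_1=-\phi_2=\Delta_u/(1+\beta\varepsilon)$, not the origin.
\item Suppose some admissible error gives $\Delta_u(0,e_1,e_2)\neq0$, which is the generic case. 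Then $|\dot\eta_1+\dot\eta_2|$ stays bounded away from zero near the origin, while $\sup_{\Omega_c}|\eta_1+\eta_2|\to0$ as $c\to0$.
\item Hence the solution from $\eta(0)=0$ leaves $\Omega_c$ for all small $c$. As quantified (fixed $\delta$, every $c<c_\delta^*$), the invariance claim is false in general.
\end{itemize}

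The paper's proof has the same hole. It derives $\dot V_1<0$ for $r>\frac{c_\varepsilon}{\lambda_*}\bar\Delta$ and immediately declares $\Omega_c$ forward invariant. That step holds only if $\min_{\partial\Omega_c}r>\frac{c_\varepsilon}{\lambda_*}C_{R(c)}\delta$, which is exactly your threshold $\rho_c(\delta)<q\,\overline\alpha^{-1}(c)/(2\sqrt2(1+\varepsilon))$.

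So drop the first branch and impose that condition explicitly. Equivalently, fix $c$ first and then take $\delta$ small enough, which is the usual quantifier order for semi-global practical stability. With that hypothesis, your Steps 1--4 give a complete proof, including $u(t)\ge0$ and the $\mathcal{KL}$ estimate.
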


\subsection{Main result---stabilization under Lotka-Sharpe NO}

\begin{corollary}\emph{(Admissibly semi-global practical asymptotic stability under neural approximation)} 
\label{cor-main}
Let the assumptions of Theorem~\ref{thm-spas} hold. Consider system~\eqref{eq-eta-sys} controlled by the approximate feedback law~\eqref{eq-eta-cont}, and assume that the neural operator $\widehat{\mathcal{G}}_{\rm LS}$ satisfies
\begin{eqnarray}
\bigl|\mathcal{G}_{\rm LS}(k_1,\mu_1)-\widehat{\mathcal{G}}_{\rm LS}(k_1,\mu_1)\bigr|&<&\frac{\delta}{2}, \\ 
\bigl|\mathcal{G}_{\rm LS}(k_2,\mu_2)-\widehat{\mathcal{G}}_{\rm LS}(k_2,\mu_2)\bigr|&<&\frac{\delta}{2},
\end{eqnarray}
or, equivalently, 
\begin{equation}
e_1:=\zeta_1-\hat\zeta_1,\qquad e_2:=\zeta_2-\hat\zeta_2,
\end{equation}
satisfy $|e_1|<\delta/2$, $|e_2|<\delta/2$, and, combined,  $|e_1|+|e_2|<\delta$. Then, for every $c\in(0,c_\delta^*)$, every solution of the closed-loop system consisting of~\eqref{eq-eta-sys},~\eqref{eq-eta-cont} with initial condition $\eta(0)\in\Omega_c$ exists for all $t\ge 0$, satisfies
\begin{equation}
\eta(t)\in\Omega_c\subset D^*,\qquad u(t)\ge 0,\qquad \forall t\ge 0,
\end{equation}
and obeys
\begin{equation}
r(\eta(t))\le \beta_c\bigl(r(\eta(0)),t\bigr)+\mu_c(\delta),\qquad \forall t\ge 0.
\end{equation}
In particular, controller~(17) renders system~(26) admissibly semi-globally practically asymptotically stable on every $\Omega_c$ with $0<c<c_\delta^*$.
\end{corollary}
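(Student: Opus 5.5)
This corollary is essentially a specialization of Theorem~\ref{thm-spas}, so the plan is to reduce it to that theorem rather than redo any Lyapunov analysis.

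\textbf{Step 1: the error budget.} Given the neural operator accuracy assumptions, the errors $e_i=\zeta_i-\hat\zeta_i$ satisfy $|e_i|<\delta/2$. Hence $|e_1|+|e_2|<\delta$, which is exactly the error bound under which Theorem~\ref{thm-spas} is stated. Such a $\widehat{\mathcal{G}}_{\rm LS}$ exists whenever $(k_i,\mu_i)\in S$: apply the neural-operator approximability corollary with accuracy $\delta/2$. That corollary in turn rests on Theorem~\ref{thm-LS-Lip} together with compactness of $S$ (Arzel\`a--Ascoli).

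\textbf{Step 2: the errors are constant in time.} The quantities $\hat\kappa_i,\hat\gamma_i,\hat\pi_{0,i}$ are exact evaluations of $\mathcal{G}_\kappa,\mathcal{G}_\gamma,\mathcal{G}_\pi$ at $\hat\zeta_i$. So the closed loop \eqref{eq-eta-sys}, \eqref{eq-eta-cont} is precisely the system treated in Theorem~\ref{thm-spas}, with constant errors $e_1,e_2$. The functions $a$, $b$, $\phi_i$, $V_1$, $\mathcal D_*$, $\Omega_c$ and the threshold $c_\delta^*$ are therefore the same as there.

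\textbf{Step 3: existence for all time.} The right-hand side of the closed loop is smooth (locally Lipschitz) in $\eta$ on $\mathcal D_*$, so local existence and uniqueness hold. Theorem~\ref{thm-spas} gives forward invariance of $\Omega_c\subset\mathcal D_*$. Since $\Omega_c$ is compact (the sublevel set of the proper-on-bounded-sets function $V_1$, which is bounded because $r<\min\{a,b\}$ keeps $\eta_1$ and $\eta_2$ bounded), the solution cannot escape in finite time. Hence it extends to all $t\ge0$.

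\textbf{Step 4: the conclusions.} The remaining properties are inherited directly from Theorem~\ref{thm-spas}: $u(t)\ge0$ follows from the definition of $c_\delta^*$, and the practical $\mathcal{KL}$-plus-$\mathcal K$ estimate on $r(\eta(t))$ holds with the same $\beta_c,\mu_c$.

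\textbf{Main obstacle.} The only delicate point is Step 3, namely confirming that $\Omega_c$ is compact and lies inside the set where the vector field is defined. In $\mathcal D_*$ we have $a|1-e^{-\eta_1}|<a$, so $e^{-\eta_1}<2$ and $\eta_1$ is bounded below. The bound $b|e^{\eta_2}-1|<b$ similarly bounds $\eta_2$ above. The other directions are bounded by $V_1\le c$. Once compactness is checked, the rest is a direct citation.
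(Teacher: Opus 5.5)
Your proposal is correct and follows the paper's proof. The two $\delta/2$ accuracy bounds give $|e_1|+|e_2|<\delta$, and since \eqref{eq-eta-cont} is exactly the approximate controller with constant errors $e_1,e_2$, every conclusion (with the same $\Omega_c$, $c_\delta^*$, $\beta_c$, $\mu_c$) is inherited from Theorem~\ref{thm-spas}. Your extra global-existence step is a harmless elaboration of what that theorem already asserts, with one caveat: $r<\min\{a,b\}$ gives only one-sided bounds on $\eta_1,\eta_2$, but $V_1$ is radially unbounded on all of $\mathbb{R}^2$, so $V_1\le c$ alone makes $\Omega_c$ compact.
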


\begin{proof}
The two neural-operator error bounds imply $|e_1|+|e_2|<\delta$. The claim follows directly from Theorem~\ref{thm-spas}, since \eqref{eq-eta-cont} is the approximate controller corresponding to the errors $e_1,e_2$ applied to the $\eta$-system \eqref{eq-eta-sys}.
\end{proof}

\section{Proofs of the Theorems}
\label{app:LS-Lipschitz}
\label{sec-proofs}

\subsection{Proof of Lipschitzness of Lotka-Sharpe operator}

\begin{proof}[Proof of Theorem \ref{thm-LS-Lip}]
The proof proceeds in six steps.
\paragraph*{Step 1: Well-posedness and bounds}
We consider the set of functions
\begin{align}
B = \left\{ (k,\mu) \in C^0\left([0,A];\mathbb{R}_{\geq 0}^2\right) :
\int_0^A k(a)e^{-\int_0^a \mu(s) ds} da > 1 \right\}. \label{eq:set_B}
\end{align}
For every $(k,\mu) \in B$, there exists a unique $\zeta>0$ such that $\int_0^A k(a)e^{-\zeta a - \int_0^a \mu(s) ds} da = 1$. 
Notice that $k(a)\ge 0$, $\mu(a)\ge 0$ for all $a \in [0,A]$, and
$\int_0^A k(a)e^{-\int_0^a \mu(s) ds} da > 1 $ imply that $A \|k\|_\infty > 1 \ \forall \ (k,\mu) \in B$.

\medskip
Our goal is to find an estimate of $\zeta>0$ for arbitrary $(k,\mu)\in B$.
Since $(k,\mu)\in B$, let $\zeta$ denote the unique solution of the Lotka–Sharpe equation, so that
$\int_0^A k(a)e^{-\zeta a-\int_0^a \mu(s)\,ds}\,da=1$. 
Then we get
\begin{align}
1 &\ge e^{-\zeta A} 
    \int_0^A k(a)e^{-\int_0^a \mu(s) ds} da.
\end{align}
Consequently, we obtain the estimate
\begin{align}
\zeta \ge \frac{1}{A} 
\ln \left(
\int_0^A k(a)e^{-\int_0^a \mu(s) ds} da
\right) > 0. \label{eq:zeta_estimate1}
\end{align}
Now, define $\varepsilon = \frac{1}{2 \|k\|_\infty}$, and notice that, since $A\|k\|_\infty > 1$, we get that $\varepsilon \in \Big(0, \frac{A}{2}\Big)$. 
Since $\int_0^A k(a)e^{-\zeta a - \int_0^a \mu(s) ds} da = 1$, we get
\begin{align}
&\int_0^\varepsilon k(a)e^{-\zeta a - \int_0^a \mu(s) ds} da
+ \int_\varepsilon^A k(a)e^{-\zeta a - \int_0^a \mu(s) ds} da = 1 \notag \\
\Rightarrow\;& 
\int_0^\varepsilon k(a)e^{-\int_0^a \mu(s) ds} da
+ e^{-\zeta\varepsilon} \int_\varepsilon^A k(a)e^{-\int_0^a \mu(s) ds} da \ge 1 \notag \\
\Rightarrow\;&
\int_0^\varepsilon k(a) da 
+ e^{-\zeta \varepsilon} \int_\varepsilon^A k(a)e^{-\int_0^a \mu(s) ds} da \ge 1 \notag \\
\Rightarrow\;&
\varepsilon \|k\|_\infty 
+ e^{-\zeta\varepsilon} \int_\varepsilon^A k(a)e^{-\int_0^a \mu(s) ds} da \ge 1.
\end{align}

Since $\varepsilon = \frac{1}{2 \| k \|_\infty}$, we obtain the following estimate:
\begin{align}
\zeta &\le 2 \| k \|_\infty 
\ln \left( 
2 \int_{1/(2\|k\|_\infty)}^{A} k(a) e^{-\int_0^a \mu(s)   ds}  da \right) \notag  \\
&\le 2 \| k \|_\infty 
\ln \left( 2 \int_{0}^{A} k(a) e^{-\int_0^a \mu(s)   ds}  da \right) \notag \\
&\le 2 \| k \|_\infty \ln \left( 2 A \| k \|_\infty \right).\label{eq:zeta_estimate2}
\end{align}

\paragraph*{Step 2: Monotonicity}
Let $k_{\rm min}, \mu_{\rm max}, k_{\rm max}, \mu_{\rm min}$ be given Lipschitz functions with 
$(k_{\rm min}, \mu_{\rm max}) \in B$, $(k_{\rm max}, \mu_{\rm min}) \in B$ and
\begin{align}
k_{\rm min}(a) &\le k_{\rm max}(a), \quad \text{for all } a \in [0,A] \\
\mu_{\rm min}(a) &\le \mu_{\rm max}(a), \quad \text{for all } a \in [0,A]
\end{align}
Let the unique $\zeta_{\rm min}, \zeta_{\rm max} > 0$ for which 
\begin{align}
    \int_0^A k_{\rm min}(a)e^{-\zeta_{\rm min} a - \int_0^a \mu_{\rm max}(s) ds }da =&\; 1\,, \\
     \int_0^A k_{\rm max}(a)e^{-\zeta_{\rm max} a - \int_0^a \mu_{\rm min}(s) ds }da =&\; 1\,.
\end{align}
We next prove by contradiction that
\begin{align}
\zeta_{\rm min} \le \zeta_{\rm max}.
\end{align}

Indeed, we have
\begin{align}
\nonumber \int_0^A k_{\rm min}(a)&e^{-\zeta_{\rm min} a - \int_0^a \mu_{\rm max}(s) ds} da \\ &=\; \int_0^A k_{\rm max}(a)e^{-\zeta_{\rm max} a - \int_0^a \mu_{\rm min}(s) ds} da 
\end{align}
implying
\begin{align}
&\int_0^A k_{\rm min}(a) e^{-\int_0^a \mu_{\rm max}(s) ds} \left( e^{-\zeta_{\rm min} a} - e^{-\zeta_{\rm max} a} \right)   da  \nonumber 
\\&=\; \int_0^A e^{-\zeta_{\rm max} a} \nonumber \\ & \qquad  \times \left( k_{\rm max}(a) e^{-\int_0^a \mu_{\rm min}(s) ds} - k_{\rm min}(a) e^{-\int_0^a \mu_{\rm max}(s)   ds } \right) da. \label{eq:temp1}
\end{align}
Since $k_{\rm max}(a) e^{-\int_0^a \mu_{\rm min}(s)ds} \ge k_{\rm min}(a) e^{-\int_0^a \mu_{\rm max}(s)ds}$, 
we get from \eqref{eq:temp1} that
\begin{align}
\int_0^A k_{\rm min}(a) e^{-\int_0^a \mu_{\rm max}(s)ds} \left( e^{-\zeta_{\rm min} a} - e^{-\zeta_{\rm max} a} \right) da \ge 0. \label{eq:temp2}
\end{align}
We suppose that $\zeta_{\rm min} > \zeta_{\rm max}$. Thus, we get $e^{-\zeta_{\rm min} a} - e^{-\zeta_{\rm max} a} \le 0 
\ \forall a \in [0,A]$. Since $k_{\rm min}(a) e^{-\int_0^a \mu_{\rm max}(s)ds} \ge 0 \ \forall a \in [0,A]$, we obtain that
$\int_0^A k_{\rm min}(a) e^{-\int_0^a \mu_{\rm max}(s) ds} \left( e^{-\zeta_{\rm min} a} - e^{-\zeta_{\rm max} a}\right)da \le 0$. Consequently, by virtue of \eqref{eq:temp2}, we have
$\int_0^A k_{\rm min}(a) e^{-\int_0^a \mu_{\rm max}(s) ds} \left(e^{-\zeta_{\rm min} a} - e^{-\zeta_{\rm max} a} \right)da = 0$.

The fact that a non-positive, continuous function with zero integral 
is identically equal to zero gives
$k_{\rm min}(a) e^{-\int_0^a \mu_{\rm max}(s)ds } \left( e^{-\zeta_{\rm min} a} - e^{-\zeta_{\rm max} a} \right) \equiv 0$. By continuity and since $e^{-\zeta_{\rm min} a} - e^{-\zeta_{\rm max} a} < 0 \ \forall \ a \in (0,A]$, this implies $k_{\rm min}(a) e^{-\int_0^a \mu_{\rm max}(s) ds } = 0$, which contradicts the fact that $\int_0^A k_{\rm min}(a) e^{-\int_0^a \mu_{\rm max}(s) ds } da > 1$ (recall that $(k_{\rm min},\mu_{\rm max}) \in B$).




\paragraph*{Step 3: Admissibility of $S$}
For every $(k,\mu)\in S$ we get $\int_0^A k(a) e^{-\int_0^a \mu(s)ds} da \ge 
\int_0^A k_{\rm min}(a) e^{-\int_0^a \mu_{\rm max}(s)ds} da$. Since $(k_{\rm min},\mu_{\rm max}) \in B$ we obtain from definition \eqref{eq:set_B} that 

\noindent $\int_0^A k_{\rm min}(a) e^{-\int_0^a \mu_{\rm max}(s) ds} da > 1$. 
Therefore, we get $\int_0^A k(a) e^{-\int_0^a \mu(s)ds} da > 1$ for every $(k,\mu)\in S$. Consequently, definitions \eqref{eq:set_HG}, \eqref{eq:set_S}, and \eqref{eq:set_B} imply that
\begin{align}
S \subseteq B. 
\end{align}
Then working similarly as above, we can conclude that for every $(k,\mu)\in S$ there exists a unique $\zeta\in [\zeta_{\rm min},\zeta_{\rm max}]$ such that 
$\int_0^A k(a) e^{-\zeta a - \int_0^a \mu(s) ds} da = 1$.

\paragraph*{Step 4: Fundamental identity}
We show next that the mapping
\begin{align}
P : S \to [\zeta_{\rm min},\zeta_{\rm max}]\,, 
\end{align}
that assigns for every $(k,\mu)\in S$ the unique $\zeta\in [\zeta_{\rm min},\zeta_{\rm max}]$ for which $\int_0^A k(a) e^{-\zeta a - \int_0^a \mu(s)ds}da = 1$, i.e., $P(k,\mu)=\zeta$,
is a Lipschitz mapping (in the topology of $C^0([0,A];\mathbb{R}_{\geq 0}^2)$).
Let arbitrary $(k,\mu)\in S$, $(\tilde{k},\tilde{\mu})\in S$ be given. 
Then there exist $\zeta,\tilde{\zeta}\in [\zeta_{\rm min},\zeta_{\rm max}]$ such that
\begin{align}
\int_0^A k(a) e^{-\zeta a - \int_0^a \mu(s) ds} da = \int_0^A \tilde{k}(a)e^{-\tilde{\zeta}a - \int_0^a \tilde{\mu}(s) ds} da = 1.
\end{align}
Then,
\begin{align}
&\int_0^A k(a) e^{-\zeta a - \int_0^a \mu(s)ds} da = \int_0^A \tilde{k}(a) e^{-\tilde{\zeta}a - \int_0^a \tilde{\mu}(s)ds }da = 1
\end{align}
implying
\begin{align} \int_0^A &k(a) e^{-\int_0^a \mu(s)ds} \left( e^{-\zeta a}-e^{-\tilde{\zeta}a} \right)da 
\\ &=\;  \int_0^A e^{-\tilde{\zeta}a} \left(\tilde{k}(a) e^{-\int_0^a \tilde{\mu}(s)ds}
- k(a) e^{-\int_0^a \mu(s)ds} \right) da.\label{eq:temp3} 
\end{align}

\paragraph*{Step 5: Core Lipschitz estimate}
Suppose that $\zeta \le \tilde{\zeta}$. Then, $e^{-\zeta a}- e^{-\tilde{\zeta}a} \ge 0$ for all $a\in[0,A]$ and we get that
\begin{align}
\nonumber \bigg|&\int_0^A k(a) e^{-\int_0^a \mu(s)ds} \left(e^{-\zeta a} - e^{-\tilde{\zeta}a}\right)\, da\bigg|
\\ &=\;  \int_0^A k(a)e^{-\int_0^a \mu(s)ds} \left( e^{-\zeta a}- e^{-\tilde{\zeta}a} \right) da.
\end{align}
Therefore, we get from \eqref{eq:temp3} that
\begin{align}
   \nonumber \int_0^A &k(a)e^{-\int_0^a \mu(s)ds} \left( e^{-\zeta a}- e^{-\tilde{\zeta}a} \right) da 
   \\ &\leq \int_0^A  e^{-\tilde{\zeta}a} \left| \tilde{k}(a) e^{-\int_0^a \tilde{\mu}(s) ds} - k(a) e^{-\int_0^a \mu(s) ds} \right| da 
\end{align}

Since $0 < \zeta_{\rm min} \leq \zeta \leq \tilde{\zeta} \leq \zeta_{\rm max}$, we obtain that
$ e^{-\zeta a} - e^{-\tilde{\zeta}a} \geq a(\tilde{\zeta}-\zeta)e^{-\zeta_{\rm max} a} \ \forall \ a \in [0,A]$. Thus, we obtain
\begin{align}
(\tilde{\zeta}&-\zeta) \int_0^A a e^{-\zeta_{\rm max} a} k(a) e^{-\int_0^a {\mu}(s) ds} da \notag \\
\leq& \int_0^A e^{-\tilde{\zeta}a} da \max_{r \in [0,A]}
\left( \left| \tilde{k}(r) e^{-\int_0^r \tilde{\mu}(s) ds } - k(r) e^{-\int_0^r \mu(s) ds} \right| \right) \notag \\
=&\; \frac{1 - e^{-\tilde{\zeta} A}}{\tilde{\zeta}}
\nonumber  \\ & \qquad \times \max_{r \in [0,A]} 
\left( \left| \tilde{k}(r) e^{-\int_0^r \tilde{\mu}(s) ds } - k(r) e^{-\int_0^r \mu(s) ds} \right| \right) \notag \\
\leq&\;  \frac{1 -e^{-\tilde{\zeta} A}}{\zeta_{\rm min}} \notag  \\ & \qquad \times
\max_{r \in [0,A]} 
\left( \left| \tilde{k}(r) e^{-\int_0^r \tilde{\mu}(s) ds } - k(r) e^{-\int_0^r \mu(s) ds} \right| \right)
\end{align}

By definition \eqref{eq:set_S} of the set $S$ and since $(k,\mu) \in S$, we get that
\begin{align}
\int_0^A a &e^{-\zeta_{\rm max} a} k(a) e^{-\int_0^a \mu(s) ds} da \\ \geq&\;   
e^{-\zeta_{\rm max} A} \int_0^A a k_{\rm min}(a) e^{-\int_0^a \mu_{\rm max}(s) ds} da.
\end{align}
Hence, we obtain the following estimate when $\zeta \leq \tilde{\zeta}$:
\begin{align}
\tilde{\zeta} - \zeta \leq L_1 \max_{r \in [0,A]} 
\left( \left| \tilde{k}(r) e^{-\int_0^r \tilde{\mu}(s) ds } - k(r) e^{-\int_0^r \mu(s) ds} \right| \right)
\end{align}
where
\begin{align}
L_1 := \frac{e^{\zeta_{\rm max} A} - 1}{\zeta_{\rm min} \int_0^A a k_{\rm min}(a) e^{-\int_0^a \mu_{\rm max}(s) ds } da } .\label{eq:L1}
\end{align}
Exploiting the fact that $(k_{\rm min},\mu_{\rm max})\in B$, $(k_{\rm max},\mu_{\rm min})\in B$, and estimates \eqref{eq:zeta_estimate1}, \eqref{eq:zeta_estimate2}, we obtain from \eqref{eq:L1} that $L_1 \leq L$ where 
\begin{align}
L :=\;&
\frac{
A\,\bigl(2A\|k_{\rm max}\|_\infty\bigr)^{2A\|k_{\rm max}\|_\infty-1}
}{
\left(\int_0^A a\,k_{\rm min}(a)I(a)\,da\right)
\ln\!\left(\int_0^A k_{\rm min}(a)I(a)\,da\right)
}\,,  
\end{align}
with
\begin{align}
    I(a) :=\;& e^{-\int_0^a \mu_{\rm max}(s) ds }\,.
\end{align}

Exchanging the roles of $\zeta,\tilde{\zeta} \in [\zeta_{\rm min},\zeta_{\rm max}]$, we obtain the following estimate without any assumption about $\zeta,\tilde{\zeta} \in [\zeta_{\rm min},\zeta_{\rm max}]$:
\begin{align}
\big|\tilde{\zeta}-\zeta\big|\leq L_1 \max_{r\in[0,A]} \left( \left|\tilde{k}(r)e^{-\int_0^r \tilde{\mu}(s)ds} - k(r)e^{-\int_0^r \mu(s) ds} \right| \right).
\end{align}

\paragraph*{Step 6: Reduction to sup norms}
Exploiting definition \eqref{eq:set_S} of the set $S$, and the fact that $(k,\mu)\in S$, $(\tilde{k},\tilde{\mu})\in S$, we also get
\begin{align}
\left|\tilde{\zeta}-\zeta\right| 
\leq&\; L_1 \max_{r\in[0,A]} \left( e^{-\int_0^r \tilde{\mu}(s)ds} \left|\tilde{k}(r)-k(r)\right| \right) 
\nonumber \\ &+ L_1 \max_{r\in[0,A]} \left(k(r) \left| e^{-\int_0^r \tilde{\mu}(s)ds} - e^{-\int_0^r \mu(s) ds} \right| \right) \notag \\
\leq&\; L_1 \|\tilde{k}-k\|_\infty \nonumber \\ &+ L_1 \|k_{\rm max}\|_\infty \max_{r\in[0,A]}
\left (\left| e^{-\int_0^r \tilde{\mu}(s)ds} - e^{-\int_0^r \mu(s)ds}\right| \right)
\end{align}

Since $\left|e^{-\int_0^r \tilde{\mu}(s)ds} - e^{-\int_0^r \mu(s) ds}\right| \leq \int_0^r |\mu(s)-\tilde{\mu}(s)| ds$, we get
\begin{align}
\left|\tilde{\zeta}-\zeta\right|
\leq&\; L_1 \|\tilde{k}-k\|_\infty\nonumber \\ & + L_1 \|k_{\rm max}\|_\infty \max_{r\in[0,A]} \left (\int_0^r |\mu(s)-\tilde{\mu}(s)| ds \right) \notag \\
=&\; L_1 \|\tilde{k}-k\|_\infty + L_1 \|k_{\rm max}\|_\infty \int_0^A |\mu(s)-\tilde{\mu}(s)|\, ds \notag \\
\leq&\; L_1 \|\tilde{k}-k\|_\infty + L_1 \|k_{\rm max}\|_\infty\, A\, \|\tilde{\mu}-\mu\|_\infty
\end{align}
Thus, we get for all $(k,\mu)\in S$, $(\tilde{k},\tilde{\mu})\in S$
\begin{align}
\left|\tilde{\zeta}-\zeta \right|
\leq L_1 \|\tilde{k}-k\|_\infty + L_1 \|k_{\rm max}\|_\infty A \|\tilde{\mu}-\mu\|_\infty,
\end{align}
where $L_1$ is given by \eqref{eq:L1}.
\end{proof}

\subsection{Proof of stability under Lotka-Sharpe parameter errors}

\begin{proof}[Proof of Theorem \ref{thm-spas}]
The perturbed closed system is
\begin{eqnarray}
\dot\eta_1 &=& \dot\eta_1^{\mathrm{nom}}(\eta)-\Delta_u(\eta,e_1,e_2), \\
\dot\eta_2 &=& \dot\eta_2^{\mathrm{nom}}(\eta)-\Delta_u(\eta,e_1,e_2),
\end{eqnarray}
where
\begin{eqnarray}
\dot\eta_1^{\mathrm{nom}}(\eta)
&=&
-\frac{\beta}{x_1^*(0)\gamma_2}(1-e^{-\eta_1})
-
\bigl(1+\beta(1+\varepsilon)\bigr) \nonumber \\ &&\times 
\left(\zeta_1-\zeta_2+\frac{1}{x_1^*(0)\gamma_2}\right)(e^{\eta_2}-1), \\
\dot\eta_2^{\mathrm{nom}}(\eta)
&=&
\frac{1-\beta}{x_1^*(0)\gamma_2}(1-e^{-\eta_1})
-
\beta(1+\varepsilon) \nonumber \\ && \times 
\left(\zeta_1-\zeta_2+\frac{1}{x_1^*(0)\gamma_2}\right)(e^{\eta_2}-1).
\end{eqnarray}
and
\begin{equation}
\Delta_u(\eta,e_1,e_2)=-\,e_2-(\hat a-a)+\beta\,\Delta_{\mathrm{gain}}(\eta,e_1,e_2).
\end{equation}
\begin{eqnarray}
\Delta_{\mathrm{gain}}(\eta,e_1,e_2)&:=& \hat S(\eta)-S(\eta)
\,,
\end{eqnarray}
where functions $S(\cdot), \hat S(\cdot)$ and constant $\hat a$ are defined in Appendix \ref{app-functions+constants}. 

Next, we have that
\begin{equation}
\dot V_1
=
-\begin{bmatrix}\phi_1 & \phi_2\end{bmatrix}
Q
\begin{bmatrix}\phi_1\\ \phi_2\end{bmatrix}
-
d(\eta)
\Delta_u(\eta,e_1,e_2),
\end{equation}
where
\begin{equation}
d(\eta):=\phi_1(\eta_1)+(1+\varepsilon)\phi_2(\eta_2),
\end{equation}
We first verify that $c^*_\delta > 0$. At $\eta = 0$ and $e_1 = e_2 = 0$, one has $u(0;0,0) = u^* > 0$ by (10), and $V_1(0) = 0$. By continuity of $u$ in $(\eta,e_1,e_2)$ and of $V_1$ in $\eta$, both conditions defining $c^*_\delta$ in (51) hold for all sufficiently small $c$ and $\delta$, so the supremum in (51) is taken over a non-empty set and $c^*_\delta > 0$. Since $c<c_\delta^*$, the set $\Omega_c$ is compact and contained in $\mathcal D_*$. Since $r(\eta)$ is continuous, it attains its maximum on $\Omega_c$. Define
\begin{equation}
R(c):=\max_{\eta\in\Omega_c} r(\eta).
\end{equation}
Then $R(c)\in(0,\min\{a,b\})$ and
\begin{equation}
\Omega_c\subset \mathcal T_{R(c)}:=\{\eta:\ r(\eta)\le R(c)\}.
\end{equation}
Hence, by Lemma \ref{lem-Delta_u}, there exists $C_{R(c)}>0$ such that
\begin{align}
|\Delta_u(\eta,e_1,e_2)|\le& C_{R(c)}(|e_1|+|e_2|),\nonumber \\ &\qquad  \forall\,\eta\in\Omega_c,\quad |e_1|+|e_2|\le\delta.
\end{align}
Additionally,
\begin{equation}
|d(\eta)|
\le \sqrt{1+(1+\varepsilon)^2}\,r
=:c_\varepsilon r,
\qquad
c_\varepsilon:=\sqrt{1+(1+\varepsilon)^2}, 
\end{equation}
and
\begin{equation}
\begin{bmatrix}\phi_1 & \phi_2\end{bmatrix}
Q
\begin{bmatrix}\phi_1\\ \phi_2\end{bmatrix}
\ge \lambda_*(\phi_1^2+\phi_2^2)=\lambda_* r^2,
\end{equation}
where
\begin{equation}
\lambda_*:=\lambda_{\min}(Q)>0.
\end{equation}
Hence
\begin{equation}
\dot V_1\le -\lambda_* r^2+c_\varepsilon |\Delta_u|\,r.
\end{equation}
By Young’s inequality,
\begin{equation}
c_\varepsilon |\Delta_u|\,r
\le \frac{\lambda_*}{2}r^2+\frac{c_\varepsilon^2}{2\lambda_*}\Delta_u^2,
\end{equation}
so
\begin{equation}
\dot V_1
\le
-\frac{\lambda_*}{2}(\phi_1^2+\phi_2^2)
+\frac{c_\varepsilon^2}{2\lambda_*}\bar\Delta^2,
\end{equation}
where
\begin{equation}
\bar\Delta := \sup_{\eta\in\mathcal T_{R(c)}}|\Delta_u(\eta,e_1,e_2)|\,,
\end{equation}
and therefore
\begin{equation}
\dot V_1<0
\qquad\text{whenever}\qquad
\phi_1^2+\phi_2^2>\frac{c_\varepsilon^2}{\lambda_*^2}\bar\Delta^2,
\end{equation}
so $\Omega_c$ is forward-invariant. 
The explicit expressions for $\beta_c, \mu_c$ are obtained as
\begin{eqnarray}
\beta_c(s,t)&:=&\sqrt{\frac{M_c}{m_c}}\,e^{-\frac{\lambda_*}{4M_c}t}\,s
\\
\mu_c(\delta)&:=&\frac{c_\varepsilon}{\lambda_*}\sqrt{\frac{M_c}{m_c}}\,C_{R(c)}\,\delta
\end{eqnarray}
where 
\begin{eqnarray}
m_c &:=&\frac{1}{2}\min\!\left\{\frac{e^{-3B_1(c)}}{a},\frac{(1+\varepsilon)e^{-3B_2(c)}}{b}\right\}
\\
M_c&:=&\frac{1}{2}\max\!\left\{\frac{e^{3B_1(c)}}{a},\frac{(1+\varepsilon)e^{3B_2(c)}}{b}\right\}
\\
B_1(c)&:=&1+\frac{c}{a}
\\
B_2(c)&:=&1+\frac{c}{(1+\varepsilon)b}
\end{eqnarray}
Furthermore, one can simplify, by majorization, 
\begin{equation}
\frac{M_c}{m_c}
\le
e^{3(B_1(c)+B_2(c))}
\left(
q+\frac{1}{q}
\right)
\end{equation}
where
\begin{eqnarray}
q &:=& 
(1+\varepsilon) q_0e^{-3c\,x_1^*(0)\gamma_2\left(1-\frac{q_0}{(1+\varepsilon)}\right)}
\\
q_0&:=&
\frac{1}{1+(\zeta_1-\zeta_2)x_1^*(0)\gamma_2}
\end{eqnarray}
\end{proof}

\section{Numerical Results}
\label{sec-sims}


\begin{figure*}[t]
    \centering
    \includegraphics[width=\linewidth]{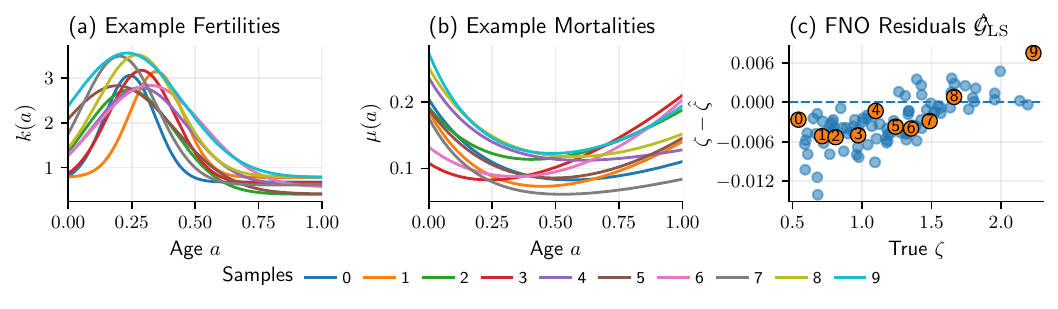}
    \caption{Example of various $(k, \mu)$ used in training and performance of learned operator $\widehat{\mathcal{G}}_{\rm LS}$. (c) highlights the residuals of all $100$ test examples during training in blue and the samples corresponding to (a) and (b) in orange.}
    \label{fig:fno}
\end{figure*}

\begin{figure*}[t]
    \centering
    \includegraphics[width=\linewidth]{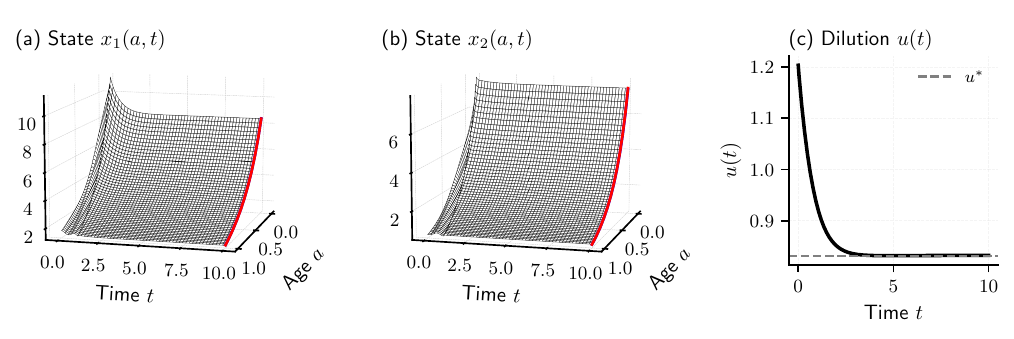}
    \caption{Simulation profiles of the (a) population $x_1$, (b) population $x_2(t)$, and (c) dilution control using $(\hat{\zeta}_1, \hat{\zeta_2}) = \widehat{\mathcal{G}}_{\rm{LS}} (k, \mu) $ in the control law \eqref{eq-eta-cont}. $k_1, k_2$ and $\mu_1, \mu_2$ correspond with samples eight and nine in Figure \ref{fig:fno} We consider the initial condition case where there are a large amount of species $x_1$ compared to species $x_2$ in the beginning, but become similarly concentrated due dilution choice $u^\ast = 0.83$ ($x_1^\ast(0) = 8.45, x_2^\ast(0) = 7.42$) is to obtain a larger species $x_1$. } 
    \label{fig:simple-sim}
\end{figure*}

All code, datasets and models are publicly available in 
\cite{krstic2026neuraloperatorsagestructured}. 
\subsection{Learning $\mathcal{G}_{\mathrm{LS}}$ operator}
To learn $\mathcal{G}_{\mathrm{LS}}$, we first need to construct a dataset of biologically relevant $k, \mu, g$ profiles. Consider the family of functions:
\begin{align*}
k(a)=&\;k_{\mathrm{base}}+k_{\mathrm{amp}}\exp\!\left(-\frac{(a-k_{\mathrm{center}})^2}{2k_{\sigma}^2}\right),\\
\mu(a)=&\;\mu_{\rm base}+\mu_{\mathrm{juv, amp}}e^{-\mu_{\mathrm{juv}}a}+\mu_{\mathrm{sen, amp}}a^{\mu_{\mathrm{sen}}},\\
g(a)=&g_{\mathrm{base}}+g_{\mathrm{amp}}\exp\!\left(-\frac{(a-g_{\mathrm{center}})^2}{2g_{\sigma}^2}\right).
\end{align*}
We choose a Gaussian fertility profile $k$, centered at early adult ages, to reflect the biological fact that fertility typically becomes viable only after maturation and is concentrated within a finite reproductive window \cite{Delbaere2020-gf}. Likewise, the mortality profile 
$\mu$ is designed to reproduce the broadly observed bathtub-shaped age pattern, with elevated mortality at the beginning and end of life and lower mortality during prime ages \cite{Chu2008-ix}. Finally, the interaction profile $g$ is taken to peak in mature individuals, reflecting the idea that ecologically important interactions such as hunting or complex foraging are often strongest when individuals have reached adult body condition and accumulated sufficient experience and skill.

To generate variability across families, we sample the parameters independently from uniform distributions over prescribed ranges. Specifically,
\begin{align*}
k_{\mathrm{base}}   &\sim \mathrm{Unif}(0.40,\,0.80) &\qquad
k_{\mathrm{amp}}    &\sim \mathrm{Unif}(2.00,\,3.00) \\
k_{\mathrm{center}} &\sim \mathrm{Unif}(0.11,\,0.35) &\qquad
k_{\sigma}          &\sim \mathrm{Unif}(0.05,\,0.23) \\
\mu_{\mathrm{base}}          &\sim \mathrm{Unif}(0.03,\,0.10) &\qquad
\mu_{\mathrm{juv, amp}}  &\sim \mathrm{Unif}(0.05,\,0.19) \\
\mu_{\mathrm{juv}}&\sim \mathrm{Unif}(3.5,\,5.5)   &\qquad
\mu_{\mathrm{sen, amp}}  &\sim \mathrm{Unif}(0.03,\,0.17) \\
\mu_{\mathrm{sen}}&\sim \mathrm{Unif}(1.7,\,2.9)   &\qquad
g_{\mathrm{base}} &\sim \mathrm{Unif}(0.05,\,0.13) \\
g_{\mathrm{amp}}    &\sim \mathrm{Unif}(0.20,\,0.50) &\qquad
g_{\mathrm{center}} &\sim \mathrm{Unif}(0.37,\,0.63) \\
g_{\sigma}          &\sim \mathrm{Unif}(0.05,\,0.31) &
\end{align*}
For each sample \((K,\mu)\), we compute the net reproduction number $R_0=\int_0^1 k(a)\,\Pi(a)\,da$, 
and retain only those samples satisfying $R_0 > 1.2$ such that we ensure $\zeta_1, \zeta_2 > 0$ enabling positive dilution \eqref{eq:u_star_constraint}. 

To learn the mapping $\mathcal{G}_{\rm LS}$, we sample $1000$ different pairs of $((k, \mu), \zeta)$ where $\zeta$ is identifies via a numerical precision root-finding algorithm. We then train a Fourier Neural Operator (FNO) \cite{li2021Fourier} consisting of $4$ layers with $16$ Fourier modes and a hidden size of $64$ neurons per layer. We use a learning rate of $4 \times 10^{-3}$ with the AdamW optimizer \cite{loshchilov2018decoupled} achieving a training mean squared error of $3.4 \times 10^{-5}$ after $100$ epochs. We present an example of the training functions $(k, \mu)$ as well as the performance of the FNO qualitatively in \ref{fig:fno}. Notice that, across magnitudes of $\zeta$, the error of the FNO residuals remains concentrated between $\pm 0.001$ indicating very strong performance in learning $\widehat{\mathcal{G}}_{\rm LS}$. 

In Figure \ref{fig:simple-sim}, we validate Corollary \ref{cor-main} by simulating the closed-loop system with $(\hat{\zeta}_1,\hat{\zeta}_2)$ obtained from the learned operator $\widehat{\mathcal{G}}_{\rm LS}$. We choose an initial condition in which the prey population dominates and regulate the system toward a target equilibrium where both species have similar concentrations. The trajectories converge to the prescribed equilibrium and the control remains positive for all simulated times, in agreement with \eqref{eq:u_star_constraint}. This numerical example illustrates that the neural approximation of $\mathcal{G}_{\rm LS}$ preserves the stabilizing behavior predicted by the theory and supports the practical use of operator learning in the feedback design.

\section{Adaptive example when mortality and fertility are unknown} \label{sec-adaptive}
In Section \ref{sec-sims}, the numerical implementation relied on a one-time learning of $\zeta_1$ and $\zeta_2$ under the assumption that the kernels $(k,\mu)$ were known. In practice, however, exact knowledge of the fertility and mortality profiles may be unavailable, making a purely offline deployment of $\widehat{\mathcal{G}}_{\rm LS}$ intractable. As a final remark, to showcase the value in learning $\widehat{\mathcal{G}}_{\rm LS}$ with a neural operator, we present a simple adaptive design together with an illustrative simulation, for the case in which the learned operator must be re-evaluated online.

In this section we do not pursue a theoretical study of stability under adaptive update laws. This would be possible but would require a large page budget, which bringing little additional insight relative to \cite{lamarque2025Adaptive}. 

\subsection{Adaptive design} \label{subsec-adaptive-design}
Among the uncertain quantities, the boundary kernel $k$ is the simplest to adapt. Indeed, the boundary condition for both state components is of the form
\begin{align}
x_i(t,0)= \int_0^a k_i(\alpha, t) x_i(\alpha, t)d \alpha\,, \qquad i \in \{1, 2\}\,,
\end{align}
which is linear in $k$ and involves no time or age derivatives of the parameter. This suggests the direct gradient-type update law
\begin{align} 
\frac{\partial \hat{k}_i(t,a)}{\partial t}
=\;& \nonumber 
\Gamma_{k, i}\,  \frac{x_i(t,a)}{1 + \int_0^A x_i(\alpha, t)^2d\alpha}\Bigg(x_i(t,0) \\ &-  \int_0^A \hat{k}_i(\alpha, t) x_i(\alpha, t) d\alpha\Big), \label{eq:khat-update}
\end{align}
where $\Gamma_{k, i} > 0$ is a gain parameter.
A similar modular construction can be used to estimate the age-dependent mortality profile $\mu_i$. For each $i\in\{1,2\}$, rewrite the population dynamics as
\begin{align}
\label{eq-mu-parametrization}
\partial_t x_i(a,t)=r_i(a,t)-\mu_i(a)x_i(a,t),
\end{align}
where $r_i(a,t)$ collects the known transport and interaction terms. In particular,
\begin{align}
r_1(a,t)=\;&\;-\partial_a x_1(a,t)-u(t)\nonumber \\ &-\int_0^A g_1(\alpha)x_2(\alpha,t)\,d\alpha x_1(a,t),\\
r_2(a,t)=\;&-\partial_a x_2(a,t)-u(t) \nonumber \\ &-\frac{1}{\int_0^A g_2(\alpha)x_1(\alpha,t)\,d\alpha}.
\end{align}
For the clarity of presentation, we treat the functions $g_1(a), g_2(a)$ as known. The estimation of $g_1$ is easy, whereas the estimation of $g_2$ requires overparametrization. Since our goal here is pedagogical --- to elucidate the applicability of the neural opeartor $\mathcal{G}{\rm LS}$ in adaptive control, we proceed with only $k$ and $\mu $ treated as unknown.

For each fixed age $a$, \eqref{eq-mu-parametrization} is a scalar linear equation in time with unknown parameter $\mu_i(a)$. Introducing the filters
\begin{align}
\partial_t \sigma_i(a,t)&=-\alpha_i \sigma_i(a,t)+x_i(a,t),\\
\partial_t \rho_i(a,t)&=-\alpha_i \rho_i(a,t)+r_i(a,t),
\end{align}
with $\alpha_i>0$, one obtains the pointwise regression
\begin{align}
Y_i(a,t)=\mu_i(a)\sigma_i(a,t),
\end{align}
where
\begin{align}
Y_i(a,t)=\rho_i(a,t)-x_i(a,t)+\alpha_i\sigma_i(a,t)+e^{-\alpha_i t}x_{i,0}(a).
\end{align}
Neglecting the exponentially decaying transient, this suggests the gradient update
\begin{align}
\partial_t \hat{\mu}_i(a,t)
=
\Gamma_{\mu,i}\,
\frac{\sigma_i(a,t)}{1+\sigma_i(a,t)^2}
\Big(
Y_i(a,t)-\hat{\mu}_i(a,t)\sigma_i(a,t)
\Big), \label{eq:uhat-update}
\end{align}
with $\Gamma_{\mu,i}>0$ is the gain parameter.

The main advantage of this design is that this preserves the full age dependence of $\mu_i$; the trade-off is that it requires the spatial derivative $\partial_a x_i$, which will require finite difference estimation in practice as it is typically not measurable. 

\begin{figure*}[t]
    \includegraphics[width=\textwidth]{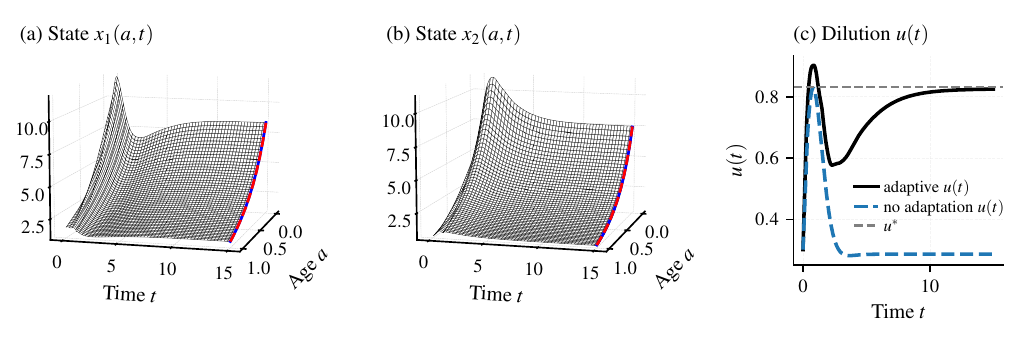}
    \includegraphics[width=\textwidth]{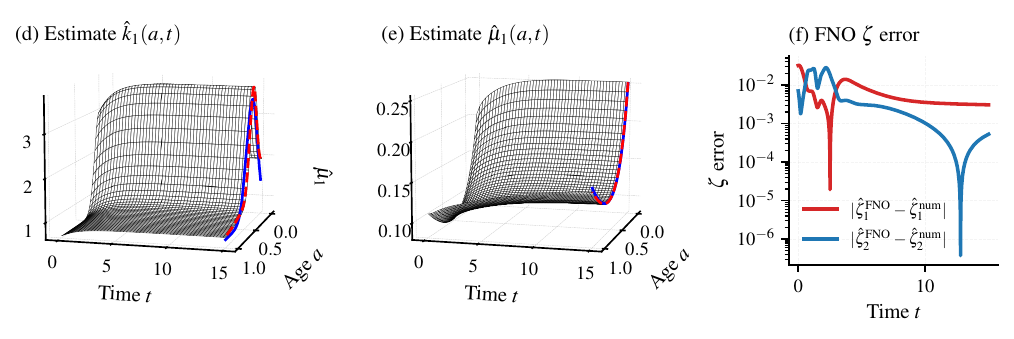}
    \caption{Adaptive control simulation where $k, \mu, g$ are   samples eight and nine from Figure \ref{fig:fno} and $\hat{k}_i, \hat{\mu}_i$ are initialized with samples zero and one from Figure \ref{fig:fno}. The blue lines in (a), (b), (d), (e) are the final states achieved and the red dashed lines are the target equilibrium and true $k, \mu$ functions for (a), (b) and (d), (e) respectively. All the initial conditions and the dilution point are the same as in Figure \ref{fig:simple-sim} and the controller uses the adaptive update laws in Section \ref{subsec-adaptive-design} along with $\widehat{\mathcal{G}}_{\rm LS}(\hat{k}_i(\cdot, t), \hat{\mu}_i(\cdot, t)) = \hat{\zeta}_i(t)$ to compute the control law $u(t)$.}
    \label{fig:adaptive}
\end{figure*}

\subsection{Illustrative simulation}
To construct a numerical dataset for $\widehat{\mathcal{G}}_{\rm LS}$, it is not sufficient to use the $(k,\mu)$ samples from Section~\ref{sec-sims}, since the adaptive controller may produce estimates $\hat{k}$ and $\hat{\mu}$ with substantially different structure. We therefore generate a new dataset by simulating $100$ sampled triples $(k,\mu,g)$ as in Section~\ref{sec-sims}  where $g$ is known and $k, \mu$ are unknown and hence updated with adaptive estimators. Further, at each time step, we use a machine-precision root-finding solver to compute $\hat{\zeta}$ from the adaptive estimates $(\hat{k}, \hat{\mu})$ to obtain $u(t)$. From each trajectory, we sample $200$ random time points, yielding a diverse training set of $20{,}000$ pairs $(\hat{k},\hat{\mu})$. We train for $100$ epochs using the same FNO architecture and training settings as in Section~\ref{sec-sims}, achieving a test error of $2\times 10^{-4}$.

Figure~\ref{fig:adaptive} shows a representative adaptive closed-loop simulation under the same biological functions $(k_i, \mu_i, g_i)$ as in Figure~\ref{fig:simple-sim}, but with mismatched initial estimates $\hat{k}$ and $\hat{\mu}$. At each time $t$, the control input is computed from $\hat{\zeta}_i=\widehat{\mathcal{G}}_{\rm LS}(\hat{k}_i,\hat{\mu}_i)$, where $\hat{k}_i$ and $\hat{\mu}_i$ are the current adaptive estimates generated from \eqref{eq:khat-update} and \eqref{eq:uhat-update} respectively. Despite the incorrect initialization, the learned operator combined with adaptation drives the system to the target dilution and the desired predator--prey equilibrium. This illustrates that the learned FNO remains effective when driven by online parameter estimates rather than the true profiles, which is the practically relevant setting since mortality and fertility rates are typically not known exactly.

\section{Conclusions}
In this work, we gave the first stabilizing feedback design for an age-structured predator–prey system with an approximated Lotka–Sharpe parameter 
$\zeta$. Specifically, we established Lipschitz continuity of the implicitly defined Lotka–Sharpe operator on a biologically admissible domain, which in turn guarantees the existence of uniformly accurate neural-operator approximations. We then quantified how approximation errors in $\zeta$ propagate through the three operators defining the feedback law and used these bounds to derive a robust stability result guaranteeing semi-global practical asymptotic stability under a positivity constraint on the control input. Overall, these results provide the first mathematically rigorous foundation for stabilizing biologically relevant age-structured predator–prey systems by learning-based feedback in which the feedback law necessarily depends on an approximate Lotka–Sharpe operator.

\appendix

\section{Appendices}

\subsection{Functions and constants used in Theorem \ref{thm-spas}}
\label{app-functions+constants}

\begin{equation}
S(\eta)=(1+\varepsilon)(\zeta_2-\zeta_1)-\varepsilon a-m_1e^{-\eta_1}+(1+\varepsilon)m_2e^{\eta_2}
\end{equation}
\begin{align}
\hat S(\eta)-S(\eta)
=&\;
(1+\varepsilon)(e_1-e_2)-\varepsilon(\hat a-a)\nonumber \\ & -(\hat m_1-m_1)e^{-\eta_1}+(1+\varepsilon)(\hat m_2-m_2)e^{\eta_2}
\end{align}
\begin{equation}
\hat a:=\frac{1}{x_1^*(0)\bigl(\gamma_2+\Gamma_2(e_1)\bigr)}
\end{equation}
\begin{align}
m_1:=&\;\frac{\kappa_1}{x_1^*(0)\gamma_2\langle \pi_{0,1},n_1\rangle},
\\ 
\hat m_1:=&\;\frac{\kappa_1+K_1(e_1)}
{x_1^*(0)\bigl(\gamma_2+\Gamma_2(e_1)\bigr)\bigl(\langle \pi_{0,1},n_1\rangle+P_1(e_1)\bigr)} \\ 
m_2:=&\;\frac{\gamma_1x_2^*(0)\langle \pi_{0,2},n_2\rangle}{\kappa_2},\\
\hat m_2:=&\;\frac{\bigl(\gamma_1+\Gamma_1(e_2)\bigr)x_2^*(0)\bigl(\langle \pi_{0,2},n_2\rangle+P_2(e_2)\bigr)}
{\kappa_2+K_2(e_2)} \\ 
\Gamma_1(e_2):=&\;\mathcal{G}_\gamma(g_1,\zeta_2-e_2,\mu_2)-\gamma_1\\
\Gamma_2(e_1):=&\;\mathcal{G}_\gamma(g_2,\zeta_1-e_1,\mu_1)-\gamma_2 \\ 
K_1(e_1):=&\;\mathcal{G}_\kappa(k_1,\mu_1,\zeta_1-e_1)-\kappa_1 \\ 
K_2(e_2):=&\;\mathcal{G}_\kappa(k_2,\mu_2,\zeta_2-e_2)-\kappa_2 \\ 
P_1(e_1):=&\; \left\langle \mathcal{G}_\pi(k_1,\mu_1,\zeta_1-e_1)-\pi_{0,1},\,n_1\right\rangle, \\ 
P_2(e_2):=&\; \left\langle \mathcal{G}_\pi(k_2,\mu_2,\zeta_2-e_2)-\pi_{0,2},\,n_2\right\rangle
\end{align}


\subsection{Lipschitzness of operators other than the Lotka-Sharpe}
\label{app-Lip-other}

\begin{lemma}[Lipschitz continuity of $\mathcal{G}_\gamma$, $\mathcal{G}_\kappa$, $\mathcal{G}_\pi$ in $\zeta$]
\label{lem-Lip-other}
Let $g, k \in C^0([0,A];\mathbb{R}_{\geq 0})$, $\mu \in C^0([0,A];\mathbb{R}_{\geq 0})$, and let $[\zeta_{\min},\zeta_{\max}] \subset \mathbb{R}_{\geq 0}$ be a compact interval. Then:

\smallskip
\noindent\textit{(i)} The map $\zeta \mapsto \mathcal{G}_\gamma(g,\zeta,\mu)$ is Lipschitz on $[\zeta_{\min},\zeta_{\max}]$ with constant
\begin{equation}
L_\gamma := A\|g\|_\infty e^{\zeta_{\max} A}.
\end{equation}

\noindent\textit{(ii)} The map $\zeta \mapsto \mathcal{G}_\kappa(k,\mu,\zeta)$ is Lipschitz on $[\zeta_{\min},\zeta_{\max}]$ with constant
\begin{equation}
L_\kappa := A^2\|k\|_\infty e^{\zeta_{\max} A}.
\end{equation}

\noindent\textit{(iii)} The map $\zeta \mapsto \mathcal{G}_\pi(k,\mu,\zeta)$ is Lipschitz on $[\zeta_{\min},\zeta_{\max}]$, in the sup norm, with constant
\begin{equation}
L_\pi := \frac{A^2}{2}\|k\|_\infty e^{\zeta_{\max} A}.
\end{equation}
\end{lemma}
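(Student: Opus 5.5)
The plan is to use the mean value theorem in $\zeta$, applied pointwise under the integral sign, for each of the three operators. Each one has the form $\int w(a,s)\,e^{-\zeta\,\ell(a,s)}\,(\cdot)$ with a nonnegative weight $w$, a length $\ell\in[0,A]$, and a fixed nonnegative survival factor $e^{-\int\mu}\le 1$ (because $\mu\ge 0$). So for $\zeta,\tilde\zeta\in[\zeta_{\min},\zeta_{\max}]$ I would use
\[
|e^{-\zeta \ell}-e^{-\tilde\zeta \ell}|\le \ell\,|\zeta-\tilde\zeta|\,\sup_{\xi\in[\zeta_{\min},\zeta_{\max}]}e^{-\xi\ell}\le \ell\,|\zeta-\tilde\zeta| .
\]
For sign-indefinite exponents, as in $\mathcal{G}_\pi$, I would use the cruder bound $\ell\,e^{\zeta_{\max}A}|\zeta-\tilde\zeta|$ instead. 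Because $\zeta_{\min}\ge 0$, the factor $e^{\zeta_{\max}A}\ge 1$ is harmless, and it absorbs every exponential factor uniformly.

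\textit{(i)} We have $\mathcal{G}_\gamma(g,\zeta,\mu)-\mathcal{G}_\gamma(g,\tilde\zeta,\mu)=\int_0^A g(a)e^{-\int_0^a\mu}\bigl(e^{-\zeta a}-e^{-\tilde\zeta a}\bigr)da$. Bounding $g$ by $\|g\|_\infty$, the survival factor by $1$, and the difference by $a|\zeta-\tilde\zeta|\le A|\zeta-\tilde\zeta|$ gives, after integrating over $[0,A]$, the constant $A^2\|g\|_\infty$. To reach the stated $A\|g\|_\infty e^{\zeta_{\max}A}$, I would instead bound $\int_0^A a\,da=A^2/2$ and compare the result with the claimed constant. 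If $A>1$ does not suffice directly, I would either keep the sharper bound $\int_0^A a\,e^{-\zeta_{\min}a}da\le 1/\zeta_{\min}^2$ or simply report the constant $A^2\|g\|_\infty e^{\zeta_{\max}A}$. This bookkeeping of constants is where I expect the only friction, since the stated $L_\gamma$ has one fewer factor of $A$ than the naive estimate. The qualitative Lipschitz property, which is what Lemma~\ref{lem-Delta_u} needs, is unaffected either way.

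\textit{(ii)} The same computation with the extra weight $a\le A$ gives $\int_0^A a\cdot a\,\|k\|_\infty|\zeta-\tilde\zeta|\,da\le A^3\|k\|_\infty|\zeta-\tilde\zeta|/3$. This is absorbed into the stated constant $A^2\|k\|_\infty e^{\zeta_{\max}A}$ whenever $A/3\le e^{\zeta_{\max}A}$, which always holds because $e^{x}\ge x$. Applying the same observation in (i), i.e. $A^2/2 \le A e^{\zeta_{\max}A}$, resolves the constant issue noted there.

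\textit{(iii)} For $\mathcal{G}_\pi$, the integrand is $k(s)e^{\int_s^a(\zeta+\mu)}$ with $s\ge a$, so the exponent is $-(s-a)\zeta-\int_a^s\mu\le 0$ and the relevant length is $\ell=s-a$. Pointwise in $a$,
\[
|\pi_0(a;\zeta)-\pi_0(a;\tilde\zeta)|\le \|k\|_\infty|\zeta-\tilde\zeta|\int_a^A (s-a)\,ds=\|k\|_\infty\frac{(A-a)^2}{2}|\zeta-\tilde\zeta| .
\]
Taking the supremum over $a\in[0,A]$ yields $\tfrac{A^2}{2}\|k\|_\infty|\zeta-\tilde\zeta|\le L_\pi|\zeta-\tilde\zeta|$, which completes the proof.
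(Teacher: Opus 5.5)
The gap is in how you force your (i) and (ii) estimates into the stated constants; your overall route is essentially the paper's (differentiate under the integral, bound $|\partial_\zeta\cdot|$ uniformly, apply the mean value theorem), and your part (iii) is correct. You claim that $A/3\le e^{\zeta_{\max}A}$ ``always holds because $e^x\ge x$,'' and this is false. With $x=\zeta_{\max}A$, the inequality $e^x\ge x$ only gives $e^{\zeta_{\max}A}\ge \zeta_{\max}A$, which dominates $A/3$ only if $\zeta_{\max}\ge 1/3$. In general $A/3\le e^{\zeta_{\max}A}$ fails as soon as $\zeta_{\max}A<\ln(A/3)$, and the same objection applies to $A^2/2\le A e^{\zeta_{\max}A}$ in (i). So your argument proves Lipschitz continuity in (i) and (ii), but not with $L_\gamma$ and $L_\kappa$.

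No argument can close this, because those two constants are not valid in general. Take $A=10$, $g=k\equiv 1$, $\mu\equiv 0$ and $[\zeta_{\min},\zeta_{\max}]=[0,0.01]$. Then
$\bigl(\mathcal{G}_\gamma(g,0,\mu)-\mathcal{G}_\gamma(g,0.01,\mu)\bigr)/0.01=100\bigl(10-100(1-e^{-0.1})\bigr)\approx 48.4$, whereas $L_\gamma=10e^{0.1}\approx 11.1$. Similarly, the difference quotient of $\mathcal{G}_\kappa$ over the same pair is about $321$, whereas $L_\kappa=100e^{0.1}\approx 110.5$.

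The paper's proof reaches $L_\gamma$ and $L_\kappa$ by bounding $\int_0^A a\,g(a)e^{-\int_0^a(\zeta+\mu(s))ds}da$ by $A\|g\|_\infty e^{\zeta_{\max}A}$, and $\int_0^A a^2k(a)e^{-\int_0^a(\zeta+\mu(s))ds}da$ by $A^2\|k\|_\infty e^{\zeta_{\max}A}$. It thereby drops the length $A$ of the integration interval, which is exactly the missing factor you noticed.

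Your own computation gives the correct constants $\tfrac{A^2}{2}\|g\|_\infty$ and $\tfrac{A^3}{3}\|k\|_\infty$. They are attained when $\zeta_{\min}=0$, $\mu\equiv 0$ and $g$, $k$ are constant, and no exponential factor is needed since $\zeta,\mu\ge 0$. They are bounded by the stated ones precisely when $A\le 2e^{\zeta_{\max}A}$, resp.\ $A\le 3e^{\zeta_{\max}A}$ (e.g.\ $A=1$, as in Section~\ref{sec-sims}). Lemma~\ref{lem-Delta_u} only needs some Lipschitz constant, so you should state these corrected constants rather than claim the stated ones.

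In (iii), you use the definition from Section~\ref{sec-oper}, with nonpositive exponent $\int_s^a(\zeta+\mu(l))dl$, which is the right one. The paper's appendix writes the exponent as $\int_a^s(\zeta+\mu(l))dl$ and bounds only its $\zeta$-part, so your version of (iii) is the more careful of the two.
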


\begin{proof}
\textbf{(i)} By definition,
\begin{equation}
\mathcal{G}_\gamma(g,\zeta,\mu) = \int_0^A g(a)\,e^{-\int_0^a(\zeta+\mu(s))ds}\,da.
\end{equation}
Differentiating under the integral with respect to $\zeta$,
\begin{equation}
\frac{\partial}{\partial\zeta}\mathcal{G}_\gamma(g,\zeta,\mu) = -\int_0^A a\,g(a)\,e^{-\int_0^a(\zeta+\mu(s))ds}\,da.
\end{equation}
Taking absolute values and bounding $e^{-\zeta a} \leq e^{\zeta_{\max}A}$ and $a \leq A$,
\begin{equation}
\left|\frac{\partial}{\partial\zeta}\mathcal{G}_\gamma\right| \leq A\|g\|_\infty e^{\zeta_{\max}A} = L_\gamma.
\end{equation}
By the mean value theorem, for any $\zeta,\tilde\zeta \in [\zeta_{\min},\zeta_{\max}]$,
\begin{equation}
|\mathcal{G}_\gamma(g,\zeta,\mu) - \mathcal{G}_\gamma(g,\tilde\zeta,\mu)| \leq L_\gamma|\zeta-\tilde\zeta|.
\end{equation}

\textbf{(ii)} By definition,
\begin{equation}
\mathcal{G}_\kappa(k,\mu,\zeta) = \int_0^A a\,k(a)\,e^{-\int_0^a(\zeta+\mu(s))ds}\,da.
\end{equation}
Differentiating under the integral,
\begin{equation}
\frac{\partial}{\partial\zeta}\mathcal{G}_\kappa(k,\mu,\zeta) = -\int_0^A a^2\,k(a)\,e^{-\int_0^a(\zeta+\mu(s))ds}\,da.
\end{equation}
Bounding $a^2 \leq A^2$ and $e^{-\zeta a} \leq e^{\zeta_{\max}A}$,
\begin{equation}
\left|\frac{\partial}{\partial\zeta}\mathcal{G}_\kappa\right| \leq A^2\|k\|_\infty e^{\zeta_{\max}A} = L_\kappa.
\end{equation}
The mean value theorem gives the result.

\textbf{(iii)} By definition,
\begin{equation}
\mathcal{G}_\pi(k,\mu,\zeta)(a) = \int_a^A k(s)\,e^{\int_a^s(\zeta+\mu(l))dl}\,ds.
\end{equation}
Differentiating under the integral with respect to $\zeta$,
\begin{equation}
\frac{\partial}{\partial\zeta}\mathcal{G}_\pi(k,\mu,\zeta)(a) = \int_a^A (s-a)\,k(s)\,e^{\int_a^s(\zeta+\mu(l))dl}\,ds.
\end{equation}
Since $\int_a^A(s-a)\,ds = \frac{(A-a)^2}{2} \leq \frac{A^2}{2}$, bounding $k(s) \leq \|k\|_\infty$ and $e^{\zeta(s-a)} \leq e^{\zeta_{\max}A}$,
\begin{equation}
\left|\frac{\partial}{\partial\zeta}\mathcal{G}_\pi(k,\mu,\zeta)(a)\right| \leq \frac{A^2}{2}\|k\|_\infty e^{\zeta_{\max}A} = L_\pi.
\end{equation}
Since this bound is uniform in $a \in [0,A]$, the mean value theorem gives
\begin{equation}
\|\mathcal{G}_\pi(k,\mu,\zeta) - \mathcal{G}_\pi(k,\mu,\tilde\zeta)\|_\infty \leq L_\pi|\zeta-\tilde\zeta|. \qquad \QED
\end{equation}
\end{proof}

\subsection{Bound on control perturbation in terms of Lotka-Sharpe error}
\label{app-bound}

\begin{lemma}[$\Delta_u$ bounded in terms of $|e_1|+|e_2|$]
\label{lem-Delta_u}
Fix $\delta>0$. For every $R\in(0,\min\{a,b\})$ there exists $C_{R}>0$ such that
\begin{equation}
|\Delta_u(\eta,e_1,e_2)|\le C_{R}(|e_1|+|e_2|)
\end{equation}
for all $\eta\in\mathcal T_{R}:=\{\eta:\ r(\eta)\le R\}$ and $ |e_1|+|e_2|\le\delta$.
\end{lemma}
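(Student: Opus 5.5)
The plan is to write $\Delta_u$ explicitly as a finite sum of terms, each a difference of smooth functions of the scalar perturbations $\Gamma_i,K_i,P_i$ and $e_i$. Using the Appendix definitions,
\begin{align}
\Delta_u
=&\;-e_2-(1+\beta\varepsilon)(\hat a-a)+\beta(1+\varepsilon)(e_1-e_2)\nonumber\\
&-\beta(\hat m_1-m_1)e^{-\eta_1}+\beta(1+\varepsilon)(\hat m_2-m_2)e^{\eta_2}.\nonumber
\end{align}
The $e_i$ terms are already linear. So it remains to bound $|\hat a-a|$, $|\hat m_1-m_1|$ and $|\hat m_2-m_2|$ by constant multiples of $|e_1|+|e_2|$, and to bound $e^{-\eta_1}$ and $e^{\eta_2}$ uniformly on $\mathcal T_R$.

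\textbf{Step 1: the exponentials.} On $\mathcal T_R$ we have $|\phi_1|\le R<a$, hence $e^{-\eta_1}=1-\phi_1/a\le 1+R/a<2$. Likewise $|\phi_2|\le R<b$, hence $e^{\eta_2}=1+\phi_2/b\le 1+R/b<2$. So both factors are bounded by $2$, independently of $\eta\in\mathcal T_R$. This is precisely why the restriction $R<\min\{a,b\}$ is needed: without properness of $V_1$, this sublevel set is what keeps the exponentials bounded.

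\textbf{Step 2: Lipschitz bounds on the perturbed gains.} Since $|e_i|\le\delta$, the perturbed values $\zeta_i-e_i$ lie in a compact interval $[\zeta_i-\delta,\zeta_i+\delta]$. The proof of Lemma~\ref{lem-Lip-other} only uses a uniform upper bound on $e^{\zeta A}$ over the $\zeta$-range, so it applies on this interval, with $\zeta_{\max}$ replaced by $\max_i\zeta_i+\delta$ (and, if $\zeta_i-\delta<0$, with the bound on $e^{-\zeta a}$ adjusted accordingly). This gives
\begin{align}
|\Gamma_i|\le L_\gamma|e_j|,\qquad |K_i|\le L_\kappa|e_i|,\qquad |P_i|\le L_\pi\|n_i\|_{L^1}|e_i|,\nonumber
\end{align}
where $j\neq i$ is the cross index ($\Gamma_1$ depends on $e_2$ and $\Gamma_2$ on $e_1$), and the bound on $P_i$ uses $|\langle f,n_i\rangle|\le\|f\|_\infty\|n_i\|_{L^1}$.

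\textbf{Step 3: the ratios.} Each of $\hat a$, $\hat m_1$, $\hat m_2$ has the form $F(z)$, where $z=(\Gamma_\cdot,K_\cdot,P_\cdot)$ and $F$ is a rational function. Its unperturbed value $F(0)$ equals $a$, $m_1$ or $m_2$, and its denominators are $\gamma_2+\Gamma_2$, $\langle\pi_{0,1},n_1\rangle+P_1$ and $\kappa_2+K_2$.

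\textbf{The main obstacle} is keeping these denominators bounded away from zero uniformly over $|e_1|+|e_2|\le\delta$. The bounds $|\Gamma|,|K|,|P|\le L\delta$ from Step 2 only guarantee this when $\delta$ is small. Instead, I would use positivity directly. Since $g_i,k_i\ge0$ are continuous with positive integrals and the exponential weights are strictly positive, each perturbed quantity is strictly positive:
\begin{align}
\mathcal G_\gamma(g_2,\zeta_1-e_1,\mu_1)>0,\quad \mathcal G_\kappa(k_2,\mu_2,\zeta_2-e_2)>0,\quad \langle\mathcal G_\pi(k_1,\mu_1,\zeta_1-e_1),n_1\rangle>0.\nonumber
\end{align}
Each is also continuous in $e$ (by Lemma~\ref{lem-Lip-other}), so on the compact set $|e_1|+|e_2|\le\delta$ it attains a positive minimum $d_\delta>0$.

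Then $F$ is $C^1$, hence Lipschitz, on the compact box consisting of the image of $\{|e_1|+|e_2|\le\delta\}$, with the denominators bounded below by $d_\delta$. The mean value theorem on a convex box containing that image, with denominators $\ge d_\delta/2$ obtained by shrinking the box slightly if needed, gives $|F(z)-F(0)|\le L_F|z|\le C(|e_1|+|e_2|)$.

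Collecting Steps 1–3 with the factor $2$ from Step 1 yields a constant $C_R$, depending on $R$ and $\delta$ but not on $\eta$ or $e$, as claimed.
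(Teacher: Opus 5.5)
Your proposal is correct and follows essentially the same route as the paper. It bounds the exponentials on $\mathcal T_R$, applies the $\zeta$-Lipschitz bounds of Lemma~\ref{lem-Lip-other} to $\Gamma_i,K_i,P_i$, obtains positive lower bounds on the denominators from strict positivity and continuity over the compact $e$-range (the paper's $\underline{\gamma}_2,\underline{\kappa}_2,\underline{\pi}_1$), and then bounds $\hat a-a$, $\hat m_1-m_1$, $\hat m_2-m_2$ term by term.

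The only loose point is the ``shrinking the box'' remark. The natural convex set is the product of the coordinate ranges, and there each denominator factor, being affine in a single coordinate, keeps the lower bound $d_\delta$ with no shrinking needed. Alternatively, as the paper does, you can avoid the box altogether by composing the scalar Lipschitz maps $e_i\mapsto\mathcal G_\gamma,\mathcal G_\kappa,\mathcal G_\pi$ (evaluated at $\zeta_i-e_i$) with $x\mapsto 1/x$ on $[d_\delta,\infty)$.
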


\begin{proof}
Fix $\delta>0$ and $R\in(0,\min\{a,b\})$. Let $\eta\in\mathcal T_{R}=\{\eta:\,r(\eta)\le R\}$. Since $r(\eta)=\sqrt{\phi_1(\eta_1)^2+\phi_2(\eta_2)^2}$, with $\phi_1(\eta_1)=a(1-e^{-\eta_1})$ and $\phi_2(\eta_2)=b(e^{\eta_2}-1)$, and since $R<\min\{a,b\}$, it follows that
\begin{equation}
1-\frac{R}{a}\le e^{-\eta_1}\le 1+\frac{R}{a},\qquad
1-\frac{R}{b}\le e^{\eta_2}\le 1+\frac{R}{b}.
\end{equation}
Hence there exist constants $E_{1,R},E_{2,R}>0$ such that
\begin{equation}
e^{-\eta_1}\le E_{1,R},\qquad e^{\eta_2}\le E_{2,R},\qquad \forall\,\eta\in\mathcal T_{R}.
\end{equation}
Let $|e_1|+|e_2|\le\delta$. Since $\hat\zeta_i=\zeta_i-e_i$, the arguments $\zeta_1-e_1$ and $\zeta_2-e_2$ lie in compact intervals, and the mappings $\mathcal{G}_\gamma$, $\mathcal{G}_\kappa$, and $\mathcal{G}_\pi$ are Lipschitz in $\zeta$ on these intervals. Hence there exist constants $L_{\Gamma_i,\delta},L_{K_i,\delta},L_{P_i,\delta}>0$ such that
\begin{align}
|\Gamma_1(e_2)|\le&\; L_{\Gamma_1,\delta}|e_2|,\\
|\Gamma_2(e_1)|\le&\; L_{\Gamma_2,\delta}|e_1|,\\
|K_1(e_1)|\le&\; L_{K_1,\delta}|e_1|, \\ 
|K_2(e_2)|\le&\; L_{K_2,\delta}|e_2|, \\ 
|P_1(e_1)|\le&\; L_{P_1,\delta}|e_1|, \\ 
|P_2(e_2)|\le&\; L_{P_2,\delta}|e_2|.
\end{align}
Since the interval $[\zeta_1-\delta,\zeta_1+\delta]$ is compact and the map 
$\zeta \mapsto \mathcal{G}_\gamma(g_2,\zeta,\mu_1)$ is continuous and strictly positive, it attains a positive minimum on this interval. 
Hence there exists $\underline{\gamma}_2>0$ such that
\begin{align}
\label{eq-lower-gamma}
\gamma_2 + \Gamma_2(e_1) =\;& \nonumber  \mathcal{G}_\gamma(g_2,\zeta_1 - e_1,\mu_1) \\\ge\;&  \min_{\zeta \in [\zeta_1-\delta,\zeta_1+\delta]} \mathcal{G}_\gamma(g_2,\zeta,\mu_1) =: \underline{\gamma}_2 > 0.
\end{align}
Similarly, there exist $\underline{\kappa}_2>0$ and $\underline{\pi}_1>0$ such that
\begin{equation}
\label{eq-lower-bnd-kappa-pi}
\kappa_2+K_2(e_2)\ge \underline{\kappa}_2>0,\qquad
\langle \pi_{0,1},n_1\rangle+P_1(e_1)\ge \underline{\pi}_1>0.
\end{equation}
By Lemma \ref{lem-Lip-other} and the lower bound in \eqref{eq-lower-gamma}, the map $\zeta \mapsto \frac{1}{x_1^*(0)\mathcal{G}_\gamma(g_2,\zeta,\mu_1)}$ is Lipschitz on $[\zeta_1-\delta,\zeta_1+\delta]$, and hence there exists $L_{a,\delta}>0$ such that
\begin{equation}
|\hat a-a|\le L_{a,\delta}|e_1|.
\end{equation}
Using the lower bounds in \eqref{eq-lower-gamma}, \eqref{eq-lower-bnd-kappa-pi}and the Lipschitz continuity of $\mathcal{G}_\gamma$, $\mathcal{G}_\kappa$, and $\mathcal{G}_\pi$, it follows that there exist constants $L_{m_1,\delta},L_{m_2,\delta}>0$ such that
\begin{equation}
|\hat m_1-m_1|\le L_{m_1,\delta}|e_1|,\qquad
|\hat m_2-m_2|\le L_{m_2,\delta}|e_2|.
\end{equation}
By definition,
\begin{equation}
\Delta_{\mathrm{gain}}(\eta,e_1,e_2):= \hat S(\eta)-S(\eta),
\end{equation}
where
\begin{align}
\hat S(\eta)-S(\eta)=&\;(1+\varepsilon)(e_1-e_2)-\varepsilon(\hat a-a)\nonumber \\ &-(\hat m_1-m_1)e^{-\eta_1} +(1+\varepsilon)(\hat m_2-m_2)e^{\eta_2}.
\end{align}
Therefore, for all $\eta\in\mathcal T_{R}$,
\begin{align}
|\Delta_{\mathrm{gain}}(\eta,e_1,e_2)|
\le&\;
(1+\varepsilon)(|e_1|+|e_2|)+\varepsilon|\hat a-a|\nonumber \\ &+|\hat m_1-m_1|e^{-\eta_1}+(1+\varepsilon)|\hat m_2-m_2|e^{\eta_2} \nonumber\\
\le&
(1+\varepsilon)(|e_1|+|e_2|)+\varepsilon L_{a,\delta}|e_1|\nonumber \\ &+L_{m_1,\delta}E_{1,R}|e_1|+(1+\varepsilon)L_{m_2,\delta}E_{2,R}|e_2| \nonumber\\
\le&
L_{\mathrm{gain},R,\delta}(|e_1|+|e_2|),
\end{align}
for some constant $L_{\mathrm{gain},R,\delta}>0$. Finally, using
\begin{equation}
\Delta_u(\eta,e_1,e_2)=-e_2-(\hat a-a)+\beta\,\Delta_{\mathrm{gain}}(\eta,e_1,e_2),
\end{equation}
we obtain
\begin{eqnarray}
|\Delta_u(\eta,e_1,e_2)|
&\le&
|e_2|+|\hat a-a|+\beta|\Delta_{\mathrm{gain}}(\eta,e_1,e_2)| \nonumber\\
&\le&
|e_2|+L_{a,\delta}|e_1|+\beta L_{\mathrm{gain},R,\delta}(|e_1|+|e_2|) \nonumber\\
&\le&
C_{R,\delta}(|e_1|+|e_2|),
\end{eqnarray}
for all $\eta\in\mathcal T_R$, where $C_{R,\delta}:=\max\{1,L_{a,\delta}\}+\beta L_{\mathrm{gain},R,\delta}$. Since $\delta$ is fixed throughout the lemma, we write $C_R := C_{R,\delta}$, noting that $C_R$ depends on both $R$ and $\delta$.
\end{proof}

\section*{References}

\vspace*{-1.5\baselineskip}


\bibliography{sample}

\begin{thebibliography}{10}
\providecommand{\url}[1]{#1}
\csname url@samestyle\endcsname
\providecommand{\newblock}{\relax}
\providecommand{\bibinfo}[2]{#2}
\providecommand{\BIBentrySTDinterwordspacing}{\spaceskip=0pt\relax}
\providecommand{\BIBentryALTinterwordstretchfactor}{4}
\providecommand{\BIBentryALTinterwordspacing}{\spaceskip=\fontdimen2\font plus
\BIBentryALTinterwordstretchfactor\fontdimen3\font minus \fontdimen4\font\relax}
\providecommand{\BIBforeignlanguage}[2]{{%
\expandafter\ifx\csname l@#1\endcsname\relax
\typeout{** WARNING: IEEEtranS.bst: No hyphenation pattern has been}%
\typeout{** loaded for the language `#1'. Using the pattern for}%
\typeout{** the default language instead.}%
\else
\language=\csname l@#1\endcsname
\fi
#2}}
\providecommand{\BIBdecl}{\relax}
\BIBdecl

\bibitem{BARGO2026104629}
\BIBentryALTinterwordspacing
M.~Bargo and Y.~Simporé, ``Global stabilization and emergence tracking via aquatic control in an age-structured mosquito model,'' \emph{Nonlinear Analysis: Real World Applications}, vol.~92, p. 104629, 2026. [Online]. Available: \url{https://www.sciencedirect.com/science/article/pii/S1468121826000295}
\BIBentrySTDinterwordspacing

\bibitem{bhan2025stabilizationnonlinearsystemsunknown}
\BIBentryALTinterwordspacing
L.~Bhan, M.~Krstić, and Y.~Shi, ``Stabilization of nonlinear systems with unknown delays via delay-adaptive neural operator approximate predictors,'' 2025. [Online]. Available: \url{https://arxiv.org/abs/2509.26443}
\BIBentrySTDinterwordspacing

\bibitem{10374221}
L.~Bhan, Y.~Shi, and M.~Krstić, ``Neural operators for bypassing gain and control computations in {PDE} backstepping,'' \emph{IEEE Transactions on Automatic Control}, vol.~69, no.~8, pp. 5310--5325, 2024.

\bibitem{BHAN2025105968}
\BIBentryALTinterwordspacing
------, ``Adaptive control of reaction–diffusion {PDEs} via neural operator-approximated gain kernels,'' \emph{Systems \& Control Letters}, vol. 195, p. 105968, 2025. [Online]. Available: \url{https://www.sciencedirect.com/science/article/pii/S0167691124002561}
\BIBentrySTDinterwordspacing

\bibitem{chen1995Universal}
T.~Chen and H.~Chen, ``Universal approximation to nonlinear operators by neural networks with arbitrary activation functions and its application to dynamical systems,'' \emph{IEEE transactions on neural networks}, vol.~6, no.~4, pp. 911--917, 1995.

\bibitem{Chu2008-ix}
C.~Y.~C. Chu, H.-K. Chien, and R.~D. Lee, ``\BIBforeignlanguage{en}{Explaining the optimality of u-shaped age-specific mortality},'' \emph{\BIBforeignlanguage{en}{Theor. Popul. Biol.}}, vol.~73, no.~2, pp. 171--180, Mar. 2008.

\bibitem{Delbaere2020-gf}
I.~Delbaere, S.~Verbiest, and T.~Tyd{\'e}n, ``\BIBforeignlanguage{en}{Knowledge about the impact of age on fertility: a brief review},'' \emph{\BIBforeignlanguage{en}{Ups. J. Med. Sci.}}, vol. 125, no.~2, pp. 167--174, May 2020.

\bibitem{dochain2013automatic}
D.~Dochain, \emph{Automatic Control of Bioprocesses}.\hskip 1em plus 0.5em minus 0.4em\relax John Wiley \& Sons, 2013.

\bibitem{haacker2024stabilization}
P.-E. Haacker, I.~Karafyllis, M.~Krsti{\'c}, and M.~Diagne, ``Stabilization of age-structured chemostat hyperbolic {{PDE}} with actuator dynamics,'' \emph{International Journal of Robust and Nonlinear Control}, 2024.

\bibitem{iannelli2017basic}
M.~Iannelli and F.~Milner, ``The basic approach to age-structured population dynamics,'' \emph{Lecture Notes on Mathematical Modelling in the Life Sciences. Springer, Dordrecht}, 2017.

\bibitem{inaba2017age}
H.~Inaba, \emph{Age-Structured Population Dynamics in Demography and Epidemiology}.\hskip 1em plus 0.5em minus 0.4em\relax Springer, 2017, vol.~1.

\bibitem{karafyllis2017stability}
I.~Karafyllis and M.~Krstić, ``Stability of integral delay equations and stabilization of age-structured models,'' \emph{ESAIM: Control, Optimisation and Calculus of Variations}, vol.~23, no.~4, pp. 1667--1714, 2017.

\bibitem{karafyllis2025agestructuredchemostatsubstratedynamics}
\BIBentryALTinterwordspacing
I.~Karafyllis, D.~Theodosis, and M.~Krstić, ``The age-structured chemostat with substrate dynamics as a control system,'' 2025. [Online]. Available: \url{https://arxiv.org/abs/2511.09963}
\BIBentrySTDinterwordspacing

\bibitem{cdc2026version}
M.~Krsti{\'c}, I.~Karafyllis, L.~Bhan, and C.~Veil, ``Neural operators for control of age-structured population {PDEs},'' Honolulu, Hawaii, USA, Dec. 2026, submitted to 2026 IEEE 65th Conference on Decision and Control (CDC).

\bibitem{krstic2026neuraloperatorsagestructured}
------, ``Neural operators for control of age-structured population {PDEs},'' \url{https://github.com/lukebhan/NeuralOperatorsLoktaSharpePredatoryPrey}, 2026, github repository.

\bibitem{krstic2024Neural}
M.~Krstić, L.~Bhan, and Y.~Shi, ``Neural operators of backstepping controller and observer gain functions for reaction--diffusion {{PDEs}},'' \emph{Automatica}, vol. 164, p. 111649, Jun. 2024.

\bibitem{kurth2023control}
A.-C. Kurth and O.~Sawodny, ``Control of age-structured population dynamics with intraspecific competition in context of bioreactors,'' \emph{Automatica}, vol. 152, p. 110944, 2023.

\bibitem{kurth2021tracking}
A.-C. Kurth, K.~Schmidt, and O.~Sawodny, ``Tracking-control for age-structured population dynamics with self-competition governed by integro-{{PDEs}},'' \emph{Automatica}, vol. 133, p. 109850, 2021.

\bibitem{lamarque2025Adaptive}
M.~Lamarque, L.~Bhan, Y.~Shi, and M.~Krstić, ``Adaptive neural-operator backstepping control of a benchmark hyperbolic {{PDE}},'' \emph{Automatica}, vol. 177, p. 112329, 2025.

\bibitem{lamarque2025Gain}
M.~Lamarque, L.~Bhan, R.~Vazquez, and M.~Krstić, ``Gain scheduling with a neural operator for a transport {{PDE}} with nonlinear recirculation,'' \emph{IEEE Transactions on Automatic Control}, 2025.

\bibitem{Lanthaler2025Nonlocality}
\BIBentryALTinterwordspacing
S.~Lanthaler, Z.~Li, and A.~M. Stuart, ``Nonlocality and nonlinearity implies universality in operator learning,'' \emph{Constructive Approximation}, vol.~62, pp. 261--303, 2025. [Online]. Available: \url{https://doi.org/10.1007/s00365-025-09718-3}
\BIBentrySTDinterwordspacing

\bibitem{li2021Fourier}
Z.~Li, N.~Kovachki, K.~Azizzadenesheli, B.~Liu, K.~Bhattacharya, A.~Stuart, and A.~Anandkumar, ``Fourier {{Neural Operator}} for {{Parametric Partial Differential Equations}},'' May 2021.

\bibitem{loshchilov2018decoupled}
\BIBentryALTinterwordspacing
I.~Loshchilov and F.~Hutter, ``Decoupled weight decay regularization,'' in \emph{International Conference on Learning Representations}, 2019. [Online]. Available: \url{https://openreview.net/forum?id=Bkg6RiCqY7}
\BIBentrySTDinterwordspacing

\bibitem{lu2021Learning}
L.~Lu, P.~Jin, G.~Pang, Z.~Zhang, and G.~E. Karniadakis, ``Learning nonlinear operators via {{DeepONet}} based on the universal approximation theorem of operators,'' \emph{Nature machine intelligence}, vol.~3, no.~3, pp. 218--229, 2021.

\bibitem{LYU202513}
\BIBentryALTinterwordspacing
K.~Lyu, J.~Wang, Y.~Zhang, and H.~Yu, ``Neural operators for adaptive control of traffic flow models,'' \emph{IFAC-PapersOnLine}, vol.~59, no.~8, pp. 13--18, 2025, 5th IFAC Workshop on Control of Systems Governed by Partial Differential Equations - CPDE 2025. [Online]. Available: \url{https://www.sciencedirect.com/science/article/pii/S2405896325006433}
\BIBentrySTDinterwordspacing

\bibitem{mckendrick1925Applications}
A.~G. McKendrick, ``Applications of mathematics to medical problems,'' \emph{Proceedings of the Edinburgh Mathematical Society}, vol.~44, pp. 98--130, 1925.

\bibitem{schmidt2018yield}
K.~Schmidt, I.~Karafyllis, and M.~Krstić, ``Yield trajectory tracking for hyperbolic age-structured population systems,'' \emph{Automatica}, vol.~90, pp. 138--146, 2018.

\bibitem{sharpe1911problem}
F.~Sharpe and A.~Lotka, ``A problem in age-distribution,'' \emph{The London, Edinburgh, and Dublin Philosophical Magazine and Journal of Science}, vol.~21, no. 124, pp. 435--438, Apr. 1911.

\bibitem{veil2025stabilizationExtendedPreprint}
C.~Veil, M.~Krsti{\'c}, P.~McNamee, and O.~Sawodny, ``Stabilization of age-structured competing populations,'' \emph{arXiv preprint arXiv:2507.23013}, 2025.

\bibitem{veil2025stabilization}
C.~Veil, M.~Krstić, I.~Karafyllis, M.~Diagne, and O.~Sawodny, ``Stabilization of predator–prey age-structured hyperbolic {PDE} when harvesting both species is inevitable,'' \emph{IEEE Transactions on Automatic Control}, vol.~71, no.~1, pp. 123--138, 2026.

\bibitem{veil2025stabilizationagestructuredcompetingpopulations}
C.~Veil, P.~McNamee, M.~Krstić, and O.~Sawodny, ``Stabilization of age-structured competition (predator-predator) population dynamics,'' in \emph{2025 IEEE 64th Conference on Decision and Control (CDC)}, 2025, pp. 2058--2063.

\bibitem{wang2024Backstepping}
S.~Wang, M.~Diagne, and M.~Krsti{\'c}, ``Backstepping {{Neural Operators}} for 2x2 {{Hyperbolic PDEs}},'' Jul. 2024.

\bibitem{pmlr-v242-zhang24c}
\BIBentryALTinterwordspacing
Y.~Zhang, R.~Zhong, and H.~Yu, ``Neural operators for boundary stabilization of stop-and-go traffic,'' in \emph{Proceedings of the 6th Annual Learning for Dynamics \& Control Conference}, ser. Proceedings of Machine Learning Research, A.~Abate, M.~Cannon, K.~Margellos, and A.~Papachristodoulou, Eds., vol. 242.\hskip 1em plus 0.5em minus 0.4em\relax PMLR, 15--17 Jul 2024, pp. 554--565. [Online]. Available: \url{https://proceedings.mlr.press/v242/zhang24c.html}
\BIBentrySTDinterwordspacing

\end{thebibliography}
\bibliographystyle{IEEEtranS}

\balance
\vspace*{-1.5\baselineskip}

\begin{IEEEbiography}[{\includegraphics[width=1in,height=1.25in,clip,keepaspectratio]{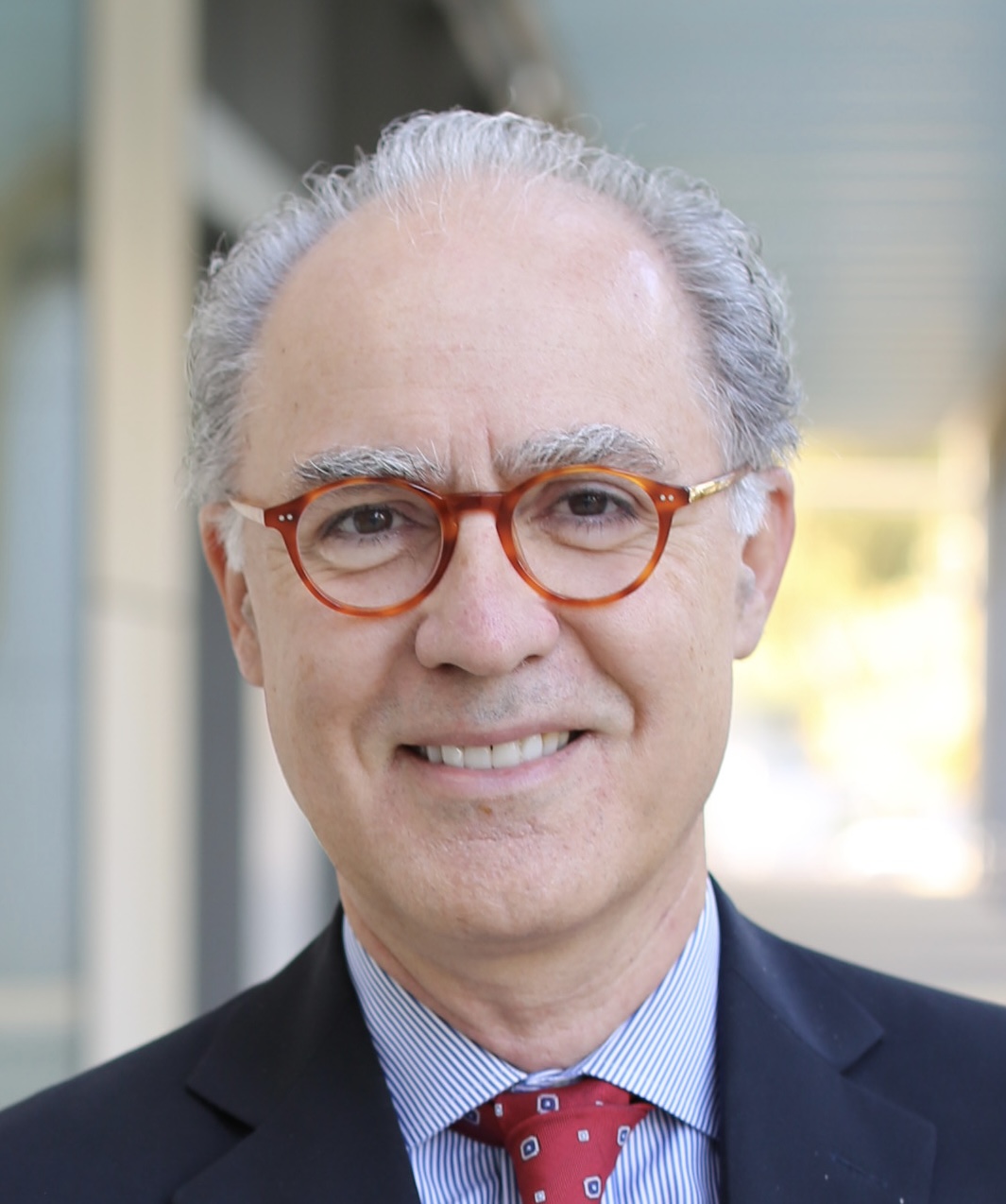}}]{Miroslav Krsti\'c} is a Distinguished Professor of Mechanical and Aerospace Engineering, holds the Alspach endowed chair, and is the founding director of the Center for Control Systems and Dynamics at UC San Diego. He also serves as Senior Associate Vice Chancellor for Research at UCSD. As a graduate student, Krstic won the UC Santa Barbara best dissertation award and student best paper awards at CDC and ACC. Krstic has been elected Fellow of IEEE, IFAC, ASME, SIAM, AAAS, IET (UK), and AIAA (Assoc. Fellow) - and as a foreign member of the Serbian Academy of Sciences and Arts and of the Academy of Engineering of Serbia. He has received the IEEE Roger W. Brockett Control Systems Award, Richard E. Bellman Control Heritage Award, Bode Lecture Prize, SIAM Reid Prize, ASME Oldenburger Medal, Nyquist Lecture Prize, Paynter Outstanding Investigator Award, Ragazzini Education Award, IFAC Nonlinear Control Systems Award, IFAC Ruth Curtain Distributed Parameter Systems Award, IFAC Adaptive and Learning Systems Award, IFAC Time-Delay Systems Lifetime Achievement Award, Chestnut textbook prize, AV Balakrishnan Award for the Mathematics of Systems, Control Systems Society Distinguished Member Award, the PECASE, NSF Career, and ONR Young Investigator awards, the Schuck (’96 and ’19) and Axelby paper prizes, and the first UCSD Research Award given to an engineer. Krstic is a Fellow-Ambassador of the French CNRS and has also been awarded the Miller Distinguished Visiting Professorship and Springer Visiting Professorship at UC Berkeley, the Distinguished Visiting Fellowship of the Royal Academy of Engineering, the Invitation Fellowship of the Japan Society for the Promotion of Science, and four honorary professorships outside of the United States. He serves as Editor-in-Chief of Systems \& Control Letters and has been serving as Senior Editor in Automatica and IEEE Transactions on Automatic Control, as editor of two Springer book series, and has served as Vice President for Technical Activities of the IEEE Control Systems Society and as chair of the IEEE CSS Fellow Committee. Krstic has coauthored nineteen books on adaptive, nonlinear, and stochastic control, extremum seeking, control of PDE systems including turbulent flows, and control of delay systems.
\end{IEEEbiography}

\begin{IEEEbiography}[{\includegraphics[width=1in,height=1.25in,clip,keepaspectratio]{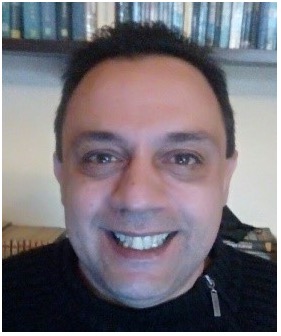}}]{Iasson Karafyllis} is a Professor in the Department of Mathematics, NTUA, Greece. He is a coauthor (with Z.-P. Jiang) of the book Stability and Stabilization of Nonlinear Systems, Springer-Verlag London, 2011 and a coauthor (with M. Krstic) of the books Predictor Feedback for Delay Systems: Implementations and Approximations, Birkhäuser, Boston 2017 and Input-to-State Stability for PDEs, Springer-Verlag London, 2019. Since 2013 he is an Associate Edi-tor for the International Journal of Control and for the IMA Journal of Mathematical Control and Information. Since 2019 he is an Associate Editor for Systems and Control Letters and Mathematics of Control, Signals and Systems. His research interests include mathematical control theory and nonlinear systems theory.
\end{IEEEbiography}

\begin{IEEEbiography}[{\includegraphics[width=1in]{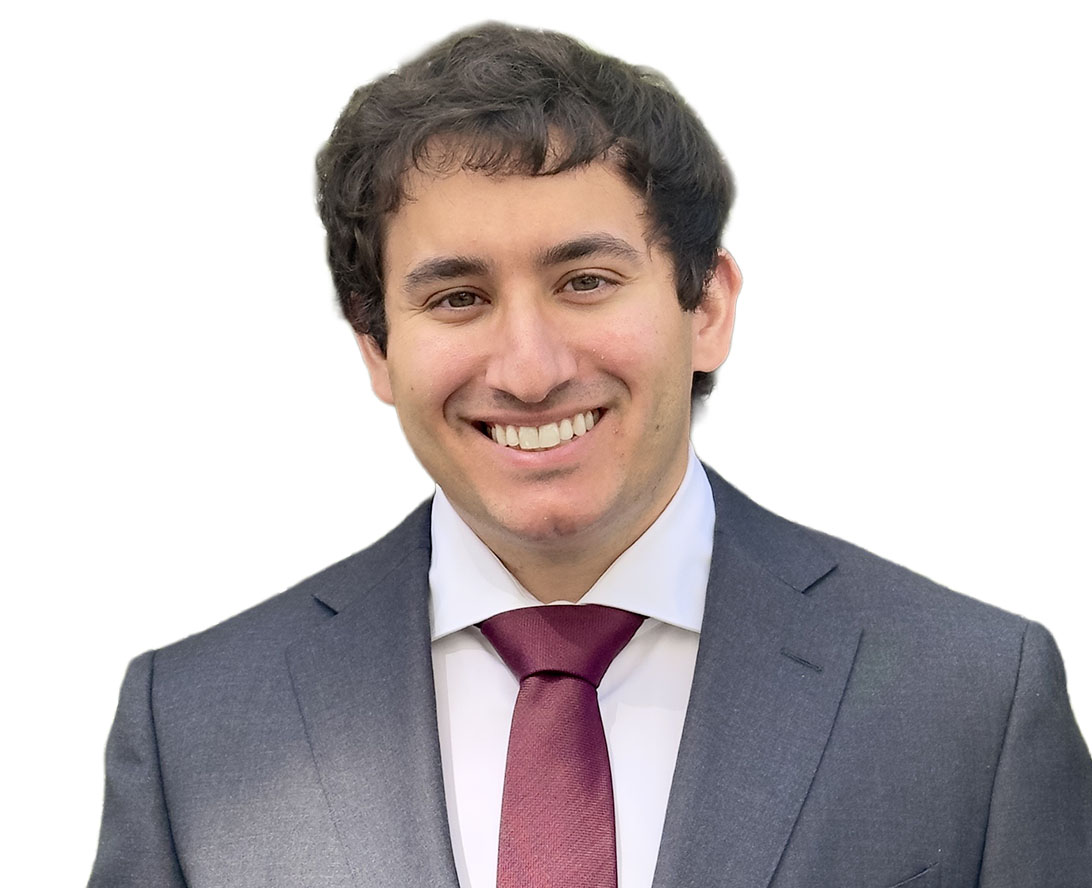}}]{Luke Bhan} received his B.S. and M.S. degrees in Computer Science, Math, and Physics from Vanderbilt University in 2022. He is currently pursuing his Ph.D. degree in Electrical and Computer Engineering
at the University of California, San Diego. His
research interests include neural operators, learning-based control, nonlinear delay systems, and partial differential equations. 
 \end{IEEEbiography}

\begin{IEEEbiography}[{\includegraphics[width=1in,height=1.25in,clip,keepaspectratio]{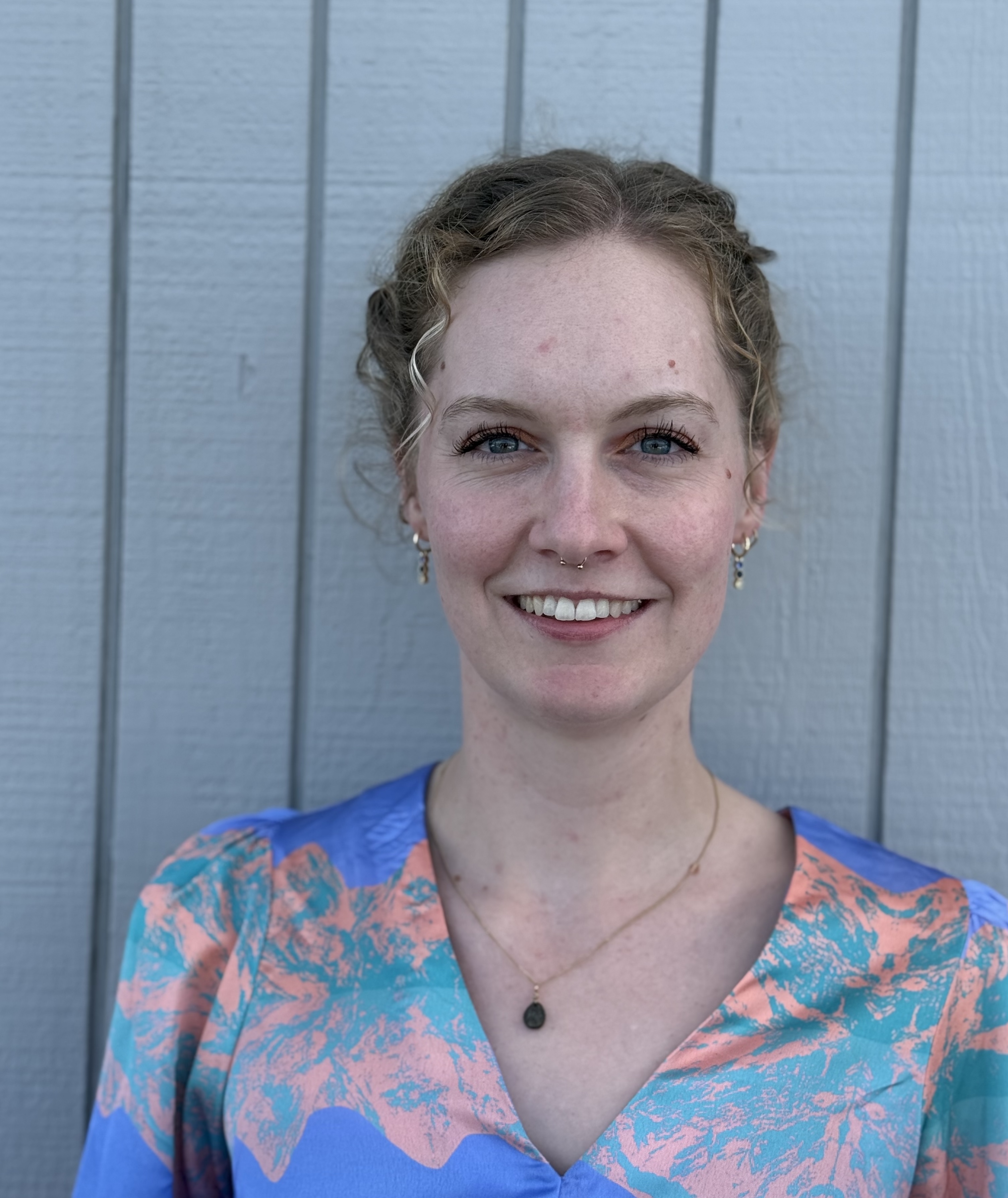}}]{Carina Veil} (Member, IEEE) is a postdoctoral researcher at KTH Royal Institute of Technology, Stockholm. Sweden. She received a B.Sc in biomedical engineering, M.Sc. in engineering cybernetics, and Ph.D in mechanical engineering from the University of Stuttgart, Germany, in 2017, 2020, and 2023, respectively. Prior to joining KTH, she has been a postdoctoral researchers at University of Stuttgart (2023-2025), Stanford University (2025-2026), as well as a visiting researcher at University of California San Diego in 2024.
Her research interests include controlling complex systems, with a special interest for health and sustainability applications, soft robots, and partial differential equations.

\end{IEEEbiography}

\end{document}